\renewcommand\footnotetextcopyrightpermission[1]{}
\newcommand{\llbar}{\{\kern-0.5ex|}
\newcommand{\rrbar}{|\kern-0.5ex\}}
\gdef\|{\mid}
\renewcommand{\>}{\rangle}
\renewcommand{\<}{\langle}
\newcommand{\trans}[1][]{\xrightarrow{#1\phantom{}}}
\newcommand{\nottrans}[1][]{\centernot{\trans[#1]}}
\newcommand{\rulestable}[5][]{
    \bgroup
    \newcommand{\tableCols}{ll}
    \ifthenelse{\isempty{#1}}{
        \renewcommand{\tableCols}{ll}
    }{
        \renewcommand{\tableCols}{#1}
    }
    \renewcommand{\arraystretch}{2.5}
    \ifthenelse{\isempty{#4}}{
    }{
        \renewcommand\arraystretch{#4}
    }
    \begin{table}[htpb]
        \centering
        \begin{tabular}{\tableCols}
            #5\vspace{3mm}
            \\ 
        \end{tabular}
        \vspace{.1cm}
        \caption{#2}
        \label{#3}
    \end{table}
    \egroup
}
\renewcommand{\sf}[1]{\textsf{{#1}}}
\newcommand{\rt}[2][]{%
\ifthenelse{\isempty{#1}}{\textsc{(#2)}}{($\textsc{#2}_\textsc{#1}$)}
}
\newcommand{\eqdef}{\overset{\underset{\mathrm{def}}{}}{=}}
\newcommand{\lin}{\textsf{lin}}
\newcommand{\dom}[1]{\text{dom}(#1)}
\renewcommand{\U}{\mathcal{U}}
\newcommand{\W}{\mathcal{W}}
\newcommand{\usage}{\mathcal W}
\newcommand{\envf}{\ensuremath{env_F}\xspace}
\newcommand\usageterm{\textsf{usage}}
\newcommand*{\der}[2]{{#1}.{#2}}
\newcommand{\vddash}{\vdash_{\vv{D}}}
\newcommand{\vdhash}{\vdash^{h}}
\newcommand{\vhdash}{\vdash^{h}}
\newcommand{\vdhpash}{\vdash^{h'}}
\newcommand{\vhpdash}{\vdash^{h'}}
\newcommand{\topclass}{\top_{C}}
\newcommand{\topusage}{\top_{\U}}
\newcommand{\toptype}{\topclass\<\bot\>[\topusage]}
\newcommand{\err}{\longrightarrow_{\text{err}}}
\newcommand{\noterr}{{\centernot\longrightarrow}_{\text{err}}}
\newcommand{\substitute}[2]{\ensuremath{\{#1/#2\}}}
\newcommand*\changeval[3]{{#1}\{#3 / #2 \}}
\newcommand{\sif}[3]{\sf{if}\ (#1)\ \{#2\}\ \sf{else}\ \{#3\}}
\newcommand{\switchp}[1]{\ensuremath{\sf{switch}_{x.m}\ (#1)\ \{l_i : e_i\}_{l_i \in L}}}
\newcommand{\switchf}[1]{\ensuremath{\sf{switch}_{f.m}\ (#1)\ \{l_i : e_i\}_{l_i \in L}}}
\newcommand{\objects}[1]{\sf{objects}(#1)}
\newcommand{\return}[1]{\sf{return}\{#1\}}
\newenvironment{subproof}
  {%
   \setitemize{leftmargin=2.5em}
   \proof}
  {\endproof}
\newcommand{\tagthm}[1]{
\setcounter{theorem}{#1 - 1}
}
\newcommand{\lambdaeq}[1]{\ensuremath{=^{#1}}}
\newcommand{\mungo}{\textsf{Mungo}\xspace}
\newcommand{\stmungo}{\textsf{StMungo}\xspace}
\newcommand{\norm}[1]{\lvert #1 \rvert}
\lstdefinelanguage{mungo}{
  	morecomment=[l][\itshape]{//},
    keywords = {enum, class, void, bool, switch, if, end, new, true, false, continue, loop, unit, return, infer}
}
\title{Behavioural Types for Memory and Method Safety in a Core Object-Oriented Language}
\author{Adrian Francalanza}
\affiliation{
  \institution{University of Malta}
}
\email{adrian.francalanza@um.edu.mt}
\author{António Ravara}
\affiliation{
  \institution{Universidade Nova de Lisboa}
}
\email{aravara@fct.unl.pt}
\author{Hans H\"{u}ttel}
\affiliation{
  \institution{Aalborg University}
}
\email{hans@cs.aau.dk}
\author{Iaroslav Golovanov}
\affiliation{
  \institution{Aalborg University}
}
\email{igolov14@student.aau.dk}
\author{Mario Bravetti}
\affiliation{
  \institution{Universitá di Bologna}
}
\email{bravetti@cs.unibo.it}
\author{Mathias Steen Jakobsen}
\affiliation{
  \institution{Aalborg University}
}
\email{msja15@student.aau.dk}
\author{Mikkel Klinke Kettunen}
\affiliation{
  \institution{Aalborg University}
}
\email{mkettu16@student.aau.dk}
\begin{abstract}
  We present a type-based analysis ensuring memory safety and object
  protocol completion in the Java-like language \mungo. Objects are
  annotated with usages, typestates-like specifications of the
  admissible sequences of method calls.  The analysis entwines
  usage checking, controlling the order in which methods are called,
  with a static check determining whether references may contain null
  values. The analysis prevents null pointer dereferencing
  and memory leaks and ensures that the intended usage protocol of
  every object is respected and completed. The type system
  has been implemented in the form of a type checker.
\end{abstract}
\begin{document}
\maketitle

\thispagestyle{empty}

\section{Introduction}
\label{sec:introduction}

The notion of references is central to object-oriented programming.
This also means that object-oriented programs are particularly prone to the
problem of \emph{null-dereferencing}~\cite{Hoare}: 
a recent survey 
~\cite[Table 1.1]{SunshineAldrich} analysing questions
posted to StackOverflow referring to 
\textsf{java.lang}
exception types  notes that, as of 1 November 2013, the most common
exception was precisely null-dereferencing.

Existing approaches for preventing null-dereferencing require
annotations, \emph{e.g.,} in the form of pre-conditions or type qualifiers, together with auxiliary reasoning methods.
\emph{E.g.,} F\"{a}hndrich and Leino \cite{Fahndrich2003} use type qualifiers
and a data flow analysis to determine if fields are used 
safely, while Hubert \emph{et al.}
rely on a constraint-based flow analysis~\cite{hubert2008}. 
These approaches all rely on programmer intervention which can be viewed as
a limitation, as the absence of null-dereferencing does not come
``for free'' in well-typed programs.
Static analysis tools like the
Checker framework~\cite{DietlDEMS11} can be used to
check the code once it is in a stable state.
However, both type qualifiers and static analyses suffer from a common
problem: 
to be compositional they are either rather restrictive or
demands a lot of explicit checks in the code, resulting in
a ``defensive programming'' style.

By including the analysis as part of the type system, one obviates
the need for additional annotations or auxiliary reasoning mechanisms. 
The Eiffel type system~\cite{DBLP:journals/cacm/Meyer17} now distinguishes
between attached and detachable types: variables of an attached type
can never be assigned to a void value, which is only allowed for
variables of detachable type.
However, it is the execution of a method body that typically
changes the program state, causing object fields to become
nullified. 
The interplay between null-dereferencing and the
\emph{order} in which methods of an object are invoked is therefore
important. 
A recent manifestation of this is the bug found found in Jedis~\cite{jedis1}, 
a Redis~\cite{redis1} Java client, 
where a \textsf{close} method could be called even after a
socket had timed out \cite[Issue 1747]{jedis1}.

One can therefore see 
an object as following a \emph{protocol} describing the
\emph{admissible sequences of method invocations}. 
The intended protocol can be expressed as a \emph{behavioural type}, which allows for a rich
type-based approach that can be used to ensure memory safety via static type
checking~\cite{HuttelEtal,AnconaEtal}. 
%
%
In this paper we present (to our knowledge) the first behavioural
type system for a realistic object-oriented language that rules out
null-dereferencing and memory leaks as a by-product of a
safety property, \emph{i.e.,} protocol fidelity, and of a (weak) liveness
property, \emph{i.e.,} object protocol completion for terminated
programs. 
Protocol fidelity is an expected property in behavioural type systems,  
but it does not usually guarantee \emph{protocol completion} for mainstream-like languages, 
nor does it usually guarantee memory safety, \emph{i.e.,} no \emph{null-dereferencing} or \emph{memory leaks}.


There are two main approaches to behavioural type systems. 
The notion of \emph{typestates} has its origin in the work by Strom and Yemini~\cite{StromYemini}; 
the idea is to annotate the type of an object with information pertaining to its current state. 
Later work includes that of Vault~\cite{DBLP:conf/pldi/FahndrichD02}, 
Fugue~\cite{DBLP:conf/ecoop/DeLineF04} and 
Plaid~\cite{plaid,DBLP:conf/oopsla/SunshineSNA11}. 
Garcia \emph{et al.}~\cite{GarciaEtal} describe a gradual typestate system that
integrates access permissions with gradual types~\cite{DBLP:conf/ecoop/SiekT07} to control aliasing in a more robust
way.
%

There appears to have been little work on memory safety; the closest is work by
Suenaga \emph{et al.}~\cite{Suenaga:2012:TSR:2398857.2384618,Suenaga:2016}
studies behavioural type systems for memory deallocation in an imperative language without objects.

The second approach taken is that of \emph{session types}~\cite{Honda}. 
This 
originated in a $\pi$-calculus setting where the session type of a channel is a protocol that
describes the sequence of communications that the channel must follow. 
Channels are \emph{linear}: they are used exactly once
with the given protocol, and evolves as a result of each communication that it is involved in.

The work by Gay \emph{et al.} \cite{GayEtal} is a meeting of the two
approaches; the type of a class $C$ is annotated with a \emph{usage}
denoting a session that any instance of $C$ must follow. 
The type system ensures a \emph{fidelity} property, where the actual sequences of
method calls to any object will follow those specified by its usage.
The \mungo/\stmungo tool suite~\cite{KouzapasEtal} is based
on this work for a subset of Java. 
Objects are used linearly: an object reference can be written to
and subsequently read at most once. 
This controls aliasing and permits compositional reasoning via the type system,
 making it possible to statically verify properties such 
object usage fidelity.   
However, the \mungo type system does \emph{not} provide
guarantees ruling out null-dereferencing and memory leaks.
In particular, three unpleasant behaviours are still allowed: $(i)$ null-assignment to a
field/parameter containing an object 
with an incomplete typestate;
$(ii)$ null-dereferencing (even \textsf{null.m()}); $(iii)$ not returning
 an object 
with incomplete protocol from a method call.

In this work we present a new behavioural type system for \mungo that
handles these outstanding issues. 
Following preliminary work by Campos, Ravara and
Vasconcelos~\cite{DBLP:journals/corr/abs-1110-4157,DBLP:conf/sac/VasconcelosR17}, our type system guarantees
that well-typed programs never give rise to null-dereferencing,
memory leaks, or incorrect method calls.
%
The system has been implemented in the form of a type checker.
\footnote{\texttt{\url{https://github.com/MungoTypesystem/Mungo-Typechecker}}}
This is the first “pure” type-checking system for \mungo. 
Previous systems did not 
support typestates for method parameters and return types. 
Moreover, the safety guarantees of our system ensure both memory safety (a novelty) and protocol fidelity, while enforcing a weak liveness property for methods (another novelty), \emph{i.e.,} protocol completion for terminated programs. 
The approach and techniques developed for this work may easily be adaptable to other languages and systems.

The rest of our paper is structured as follows. Section
\ref{sec:related} discusses related work, while Section \ref{sec:obs}
describes the underlying ideas of our type system. In Section
\ref{sec:syntax} we present the syntax of \mungo, while Section
\ref{sec:semantics} presents its operational semantics. Section
\ref{sec:typesystem} describes the central features of our type
system, while Section \ref{sec:results} outlines the major soundness
results of the system. Section \ref{sec:implementation} briefly
presents our implementation, and Section \ref{sec:discussion} contains
the conclusion of our paper.


\section{Related Work}
\label{sec:related}

In our work we consider $\sf{null}$ as a value of a specific $\bot$ type (and only of that type). This idea is related to that of Vekris et al.~\cite{Vekris:2016:RTT:2980983.2908110} who modify the type system of Typescript, which uses undefined and null types to represent the eponymous values, so that such types are not related (by subtyping) to any other type: in this way misuse of such a values can be effectively detected by the type system. Our approach differs in that our type system is behavioural; fields follow a protocol of linearity, where the field type $\bot$ type indicates the introduction of a $\sf{null}$ value at the level of the semantics.

A notion similar to our notion of usages can be found in Lerner et al.~ who use a similar mechanism for controlling access to potentially uninitialised references and to reason about null references~\cite{Lerner:2013:TRT:2578856.2508170}. They use the idea in a setting that attempts to retrofit (different notions of) static typing analyses for Javascript. More concretely, they use regular expression types to describe objects that evolve over the course of their lifetime, and employ modalities to describe the presence of fields in an object such as "possibly present", "definitely present either on this object or via the prototype chain" and "cannot be accessed”. 

The idea of considering that objects have usage protocols is more closely related to how typestates are used in the languages and Plural and Plaid~\cite{GarciaEtal,plaid}. There, the approach is to define the abstract states an object can be in, and associate with methods pre- and post-conditions that describe in which states a method can be called and the state that will be the result of the execution of a method. Statically, a type system checks the correctness of the code with respect to the assertions and that client code follows the protocols of the objects it uses.

However, this approach requires a demanding logical language~\cite{DBLP:conf/oopsla/SunshineSNA11} for pre- and postconditions. It is not always not obvious or even clear which post-conditions imply which pre-conditions, so that the protocol (as, for instance, an automaton) is not easily visualisable from the annotated code.

In contrast, our approach uses simply usage annotations that are checked directly in the type rules. As method bodies are type-checked in the rules for class declarations by following class usage annotations, this prevents the problem that follows from the possibility that some pre-conditions might not even be implied by any post-conditions, namely the presence of unreachable code~\cite{DBLP:journals/eceasst/SiminiceanuAC12}.


\section{The importance of session completion} \label{sec:obs}

A crucial observation for our work is that protocol fidelity and null-dereferencing are
interdependent. 
Particularly, we require that a well-typed program never leaves the
protocol incomplete by losing a reference to an object at some point
during its execution.
To see the importance of this, consider the class 
\textsf{File} in Listing \ref{listing:file1} with the usage defined on lines 3-9.
Here, the type~$C[\U]$ of a class~$C$ contains a \emph{usage}~$\U$ that specifies the admissible sequences of method calls for any object instance of~$C$.

\lstset{escapeinside={(*@}{@*)},mathescape=true, tabsize=4,morekeywords={enum, end}}
\begin{lstlisting}[label={listing:file1},caption={An example class describing files},firstnumber=1]  
enum FileStatus { EOF NOTEOF }

class File {
  {open; X} [ (*@\label{lst:code:4}@*)
    X = {
      isEOF; <EOF: {close; end} 
              NOTEOF: {read; X}>
    }
  ]

  void open(void x) { ... }
  FileStatus isEOF(void x) { ... }
  Char read(void x) { ... }
  void close(void x) { ... }
}
\end{lstlisting}

The usage tells us that the object must first be
\textsf{open}ed, going then into state \textsf{X} (a usage variable) where only the method \textsf{isEOF} can be called. 
When
\textsf{isEOF} is called the continuation of the protocol depends on
the method result: 
if it returns \textsf{EOF} one can only 
\textsf{close} the file (line 6); 
if it returns \textsf{NOTEOF} one can
\textsf{read} and check again (line~7). 
This ensures that all methods
of a \textsf{File} object are called according to the safe order
declared. 
%
In particular, a null-dereferencing caused by calling
\textsf{read} before \textsf{open} is prohibited, and no memory leaks
can be caused by dropping an object or leaving it open before the
protocol is completed.

To achieve this, \emph{both} methods and \emph{fields} should obey their
associated protocols. 
Within any instance of a class, its field objects must respect their usage protocols and executed to completion, whereas methods must be invoked in the expected usage order until completion. 
Importantly, avoiding field null-dereferencing is crucial to attain this.
In Listing \ref{listing:file2} a class \textsf{FileReader} is
introduced. 
The usage information on line \ref{lst:code:0} requires that the \textsf{init()} method
is called first, followed by method \textsf{readFile()}.
 \lstset{escapeinside={(*@}{@*)},mathescape=true}
 \begin{lstlisting}[label={listing:file2},caption={An example class
   intended for reading files},firstnumber=17]
class FileReader {
  {init ; {readFile; end}}[] (*@\label{lst:code:0}@*)
  
  File file
  
  void init() {
    file = new File (*@\label{lst:code:1}@*)
  } (*@\label{lst:code:3}@*)
  
  void readFile() {
    file.open(unit);
    loop:
      switch(file.isEOF()) {
        EOF: file.close() (*@\label{lst:code:2}@*)
        NOTEOF: file.read();
            continue loop
      }
  }
}
\end{lstlisting}
One must also use objects of type \textsf{File} with usage information on line \ref{lst:code:4}.
%
%
Method \textsf{readFile} follows the protocol by first
\textsf{open}ing the \textsf{file}, and then entering a loop (lines 28
to 33) to read until its end. Then, it \textsf{close}s it.

Combining the checks on following the protocols of the
\textsf{FileReader} and the \textsf{File} is not trivial. 
Crucially, simply requiring protocol fidelity does not suffice to avoid certain problems: 
%
\begin{itemize}
\item Protocol fidelity allows the omission of line \ref{lst:code:1} that initialises field \textsf{file} with a
  new \textsf{File} object.  However, calling \textsf{read()} on field
  \textsf{file}, even if we follow the prescribed protocol of
  \textsf{FileReader} and invoke \textsf{init()} first, would result
  in a null-dereferencing. 
\item Similarly, protocol fidelity also permits the addition of the command
  \lstinline{file=null} at line \ref{lst:code:3}. 
  This would result
  in not only losing the reference to
  the object, but also lead to null-dereferencing again.
\item Protocol fidelity permits the addition of
  \lstinline{file=new File()} at line \ref{lst:code:2} before
  calling \lstinline{close()}.  Once more, this would result in
  losing the reference to the previous object stored in
  \textsf{file} that has not yet completed its protocol.
\end{itemize}

Usages are vital for permitting a \emph{finer} analysis that
guarantees the absence of null dereferencing that is also
\emph{compositional}. To that end, our proposed analysis provides such
guarantees for programs following the intended usage
protocols. Moreover, we are able to do so by analysing the source code
of each individual class \emph{in isolation}.
\section{The \mungo language: Syntax}
\label{sec:syntax}

\mungo is a Java-like programming language and programming tool developed at Glasgow University by Kouzapas et al. \cite{KouzapasEtal}. In this section we present the syntax of the \mungo language, including the notion of usages that will be central to our type system. The formation rules are given in Figure \ref{fig:syntax}.

A \mungo \emph{program} or declaration $\overrightarrow{D}$ is a
sequence of \emph{enumeration declarations} followed by a sequence of class declarations. An important distinction from the version of the \mungo language of \cite{KouzapasEtal} is that we now allow generic classes.

A \mungo \emph{class declaration} $\sf{class} \<\alpha[\beta]\>\ C\{\U, \vv{F}, \vv{M}\}$, where $C\in \sf{CNames}$, consists of a class name $C$ and defines a set of methods $\vv{M}$, a set of fields $\vv{F}$ and an usage $\U$; this we will describe later. The variables $\alpha$ and $\beta$ denote class and usage variables respectively and are used for our account of generic classes.

A method named $m$ is declared as $t_2 \ m(t_1 \ x)\{e\}$, where $t_1$ denotes the argument type of $m$, while the return type is $t_2$. The body of $m$ is an expression $e$. In our presentation we do not distinguish between statements and expressions; we allow values $v$ as expressions but we also include statement-like expressions such as conditionals $\sf{if}\ (e) \ \{e\} \ \sf{else} \ \{e\}$, selection statements $\sf{switch}_{r.m} \ (r.m(e)) \ \{l_i : e_i\}_{i_i \in L}$, sequential compositions $e;e$, method calls $r.m(e)$ and assignments $f \ = \ e$. Note that assignment is always made to a field, meaning that assignment can only happen to a field inside the object on which a method is evaluated. Objects therefore cannot access the fields of another object directly. 

Following \cite{KouzapasEtal}, iteration in \mungo is possible by means of jumps. Expressions can be labelled as $k : e$, where $k$ must be a unique label; the execution can then jump back to the expression labelled $k$ by evaluating the expression $\sf{continue} \ k$. 

Classes in \mungo are instantiated with either $\sf{new}\ C$ for simple classes or $\sf{new}\ C\<g\>$ for generic classes. 

The set \textbf{Values} is ranged over by $v$ and denotes the set of all values and contains boolean values, \sf{unit}, labels and \sf{null}. References ranged over by $r$ describe how objects are accessed, either as method parameters ranged over by $x \in \textbf{PNames}$ or field names ranged over by $f \in \textbf{FNames}$. The enumeration types are ranged over by $L\in \textbf{ENames}$; each enum contains a set of labels $\vv{l}$. 

Fields, classes and methods are annotated with \emph{types}. The set of base types \textbf{BTypes} consists of the type for booleans \textsf{bool}, the \textsf{void} type and enumerations $L$. A central component of the behavioural type system presented later is that of typestates $C\<t\>[\U] \in \textbf{Typestates}$. Typestates appear as part of class declarations and can involve generic classes; we use the typestate notion to capture the current value of class and usage parameters $\alpha[\beta]$ by using a type parameter in the typestate $C\<t\>[\U]$, where $t$ is the actual parameter of the class and $\alpha$ and $\beta$ are type and usage variables, respectively.

\begin{figure}
\begin{multicols}{2}
\noindent
\begin{abstractsyntax*}
{D} ::=\ &\sf{enum}\ L\ \{\tilde{l}\}  \mid  \sf{class}\< \alpha[\beta] \>\ C\ \{\mathcal{U}, \vv{M}, \vv{F}\} \\ 
   \mid \ &\sf{class}\ C\ \{\mathcal{U}, \vv{M}, \vv{F}\}  \\ 
F ::=\ &z \ f \\
M ::=\ &t_2 \ m(t_1 \ x)\{e\}\\
v ::=\ &\sf{unit}  \mid  \sf{true}  \mid  \sf{false}  \mid  l  \mid  \sf{null} \\
r ::= \ &x  \mid  f \\
{e} ::= \ &v  \mid  r  \mid  \sf{new} \ C  \mid  \sf{new} \ C\<g\>  \mid  f \ = \ e  \mid  r.m(e)  \\
& \mid  e;e  \mid  \sf{if}\ (e) \ \{e\} \ \sf{else} \ \{e\}  \\
& \mid  \sf{switch}_{r.m} \ (r.m(e)) \ \{l_i : e_i\}_{i_i \in L}  \\
       \mid  \ &k : e  \mid  \sf{continue} \ k 
\end{abstractsyntax*}
\columnbreak%
\begin{abstractsyntax*}
b ::=\ &\sf{void}  \mid  \sf{bool}  \mid  L \\
{g} ::=\ & \eta[\W]  \mid  \alpha [\beta] \\
{\eta} ::=\ &C  \mid  C\<\eta[\W]\>  \mid  C\<\alpha[\beta]\>\\
{z} ::=\ &b  \mid  \eta  \mid  \alpha\\
{t} ::=\ &b  \mid  \eta[\U]  \mid  \alpha[\beta] \\
u ::=\ &\{m_i;w_i\}_{i \in I}  \mid  X \\
w ::=\ &u  \mid  \langle l_i : u_i\rangle_{l_i \in L} \\
E ::=\ &X=u \\
\mathcal{U} ::=\ &u^{\vv{E}}
\end{abstractsyntax*}
\end{multicols}
\caption{Syntax of Mungo}
\label{fig:syntax}
\end{figure}

A central notion from Kouzapas et al. is that of \textit{usages}; in the syntax usages are ranged over by $\U$ and $\W$. A usage specifies the admissible sequences of method calls allowed for an object. A usage $\{m_i; w_i\}_{i \in I}$ describes that any one of the methods $m_i$ can be called, following which the usage is then $w_i$. A usage $\<l_i : u_i\>$ specifies that the usage continues as $u_i$ depending on a label $l_i$. This is useful when the flow of execution depends on the return value of a function. To allow for iterative behaviour, we introduce recursive usages. A defining equation $X = u$ specifies the behaviour of the usage given by the usage variable $X$; $X$ can occur in $u$. 

Usages have a labelled transition semantics that appears in the semantics in Section \ref{sec:semantics} as well as in the type system in Section \ref{sec:typesystem}. Transitions are of the form $\U \trans[m] \U'$ and $\U \trans[l] \U'$. The available transitions for usages are shown in Table \ref{tab:classtransitions}.

\rulestable[llll]{Transitions for usages}{tab:classtransitions}{1}{
    \rt{Branch} & \inference{j \in I}{\{m_i : w_i\}_{i\in I}^{\vv{E}} \trans[m_j] w_j^{\vv{E}}} &
    \rt{Unfold} & \inference{u^{\vv{E}\cup \{X = u\}} \trans[m] \W}{X^{\vv{E}\cup \{X = u\}} \trans[m] \W} \\
    \multicolumn{4}{c}{\rt{Sel} \quad \inference{l_j \in L}{\<l_i : u_i\>_{l_i \in L}^{\vv{E}} \trans[l_j] u_j^{\vv{E}}}}
}

We write $\W \rightarrow^{*} \W'$ if usage $\W$ evolves to $\W'$ using zero or more usage transitions.

\begin{example}\label{ex:fileusage}
  Consider the \textsf{File} class from Listing \ref{listing:file1}. Its usage $\U$ is
  %
  \[ \U = \{\textsf{open}; \textsf{X}\}^{\vv{E}}\]
  \[ \vv{E} = \{\textsf{X} = \{\textsf{isEOF}; \left\langle\begin{array}{l} \textsf{EOF} : \{\textsf{close};\textbf{end}\}  \\\textsf{NOTEOF} : \{\textsf{read}; \textsf{X}\}\end{array} \right\rangle \}\}\]

  The only transition here is $\U \trans[\mathsf{open}] \U'$ where $\U' = \textsf{X}^{\vv{E}}$. Moreover, by applying the \rt{Unfold} rule we have that $\U' \trans[\mathsf{isEOF}] \U''$ where
  $ \U'' = \langle \textsf{EOF} : \{\textsf{close}; \textbf{end}\} \ \textsf{NOTEOF} : \{\textsf{read}; \textsf{X}\} \rangle^{\vv{E}}$. 
  
\end{example}




\section{The \mungo language: Semantics}
\label{sec:semantics}

In this section we present the operational semantics of \mungo. The semantics is a small-step semantics that uses a stack-based state model.

\subsection{Run-time syntax}  

We extend the syntax to include \textit{run-time expressions} and values in order to describe partially evaluated expressions and thereby the evaluation order of expressions. The syntactic categories of values and expressions are extended as shown in Figure \ref{fig:runtime-syntax}.

\begin{figure}
\begin{abstractsyntax*}
    v ::=\ &\dots | o \\
    e ::=\ &\dots | \sf{return}\{e\} | \sf{switch}_{r.m}\ (e) \{l_i : e_i\}_{l_i \in L}
\end{abstractsyntax*}
\caption{Run-time syntax of Mungo}
\label{fig:runtime-syntax}
\end{figure}

The $\sf{return}\{e\}$ construct allows us to encapsulate the ongoing evaluation of a method body $e$. We also introduce a general \sf{switch} construct that allows arbitrary expressions in \sf{switch} statements. This will allow us to express the partial evaluation of \sf{switch} constructs. Finally, we allow for objects $o$ to appear as values in expressions. This will allow us to express that the evaluation of an expression can return an object that can then be used.

\subsection{The binding model of the semantics}

Transitions in our semantics are relative to a program definition $\vv{D}$ and are of the form $$\vddash \<h, env_S, e\> \rightarrow \<h',env_S', e'\>$$
where $h$ is a heap, $env_S$ is a parameter stack and $e$ is the run-time expression that is being evaluated.

A \emph{heap} records the bindings of object references. In the heap, every object reference $o$ is bound to some pair ($(C\<t\>[\W], \envf)$ where $C\<t\>[\W]$ is a typestate and $\envf\in\textbf{Env}_{\textbf{F}}$ is a field environment. A typestate is the current state of an object with respect to its class definition and current usage. A \emph{field environment} is a partial function that maps field names to the values stored in the fields.


We introduce notation for selectively updating components in the image of a heap. Given a heap $h = h' \uplus \{ o \mapsto (C\<t\>[\W], \envf) \}$, we use the notation $\changeval{h}{\der{h(o)}{\usageterm}}{\usage'}$ to stand for the heap $h' \uplus \{ o \mapsto (C\<t\>[\W'], \envf) \}$.\footnote{We use
  $\uplus$ to denote disjoint union.} We write $\changeval{h}{o.f}{v}$ for $h' \uplus \{ o \mapsto (C\<t\>[\W],\envf\{f \mapsto v\})\}$. And we use the notation $h\{o.f \mapsto C\<t\>[\U]\}$ to denote the heap $h' \uplus \{ o \mapsto  (C\<t\>[\W'], \envf') \}$ where $\envf' = \envf \{ f \mapsto C\<t\>[\U] \}$.

The \emph{parameter stack} records to the bindings of formal parameters. It is a sequence of bindings where each element $(o,s)$ contains an \emph{active object} $o$ and a parameter instantiation $s = [x \mapsto v]$. We define $\varepsilon$ to be the empty stack. We let $env_S\cdot env_S'$ denote the concatenation of the stacks $env_S'$ and $env_S$. 

    
%

In a parameter stack $env_S \cdot (o, s)$ we call the bottom element $o$ the \emph{active element}. Often, we shall think of the parameter stack as defining a function. The domain $\text{dom}(env_S)$ of the parameter stack $env_S$ is the multiset of all object names on the stack. The range of the parameter stack $\text{ran}(env_S)$ is the multiset of all parameter instantiations on the stack.




\begin{definition}[Class information]Let $C$ be a class.
If $\sf{class}\< \alpha[\beta] \>\ C\ \{\mathcal{U}, \vv{F}, \vv{M}\}$, then $C$ is a generic class, and we define the following functions. If $\sf{class}\  C\ \{\mathcal{U}, \vv{F}, \vv{M}\}$ then only the functions with $C\<\bot\>$ are defined.

\begin{multicols}{2}
\noindent
\begin{align*}
    C\<\eta[\U']\>.\sf{methods}_{\vv{D}} &\eqdef \vv{M}\{\alpha[\beta] / \eta[\U']\} \\
    C\<\eta[\U']\>.\sf{fields}_{\vv{D}} &\eqdef \vv{F}\{\alpha / \eta\} \\
    C\<\eta[\U']\>.\sf{usage}_{\vv{D}} &\eqdef \U
\end{align*}%
\columnbreak%
\begin{align*}
    C\<\bot\>.\sf{methods}_{\vv{D}} &\eqdef \vv{M} \\
    C\<\bot\>.\sf{fields}_{\vv{D}} &\eqdef \vv{F}\\
    C\<\bot\>.\sf{usage}_{\vv{D}} &\eqdef \U
\end{align*}
\end{multicols}


\end{definition}

Likewise, we introduce the following notation that lets us refer to the attributes of an object $o$ bound in a heap $h$. 

\begin{definition}[Heap information]{Let $h(o)=\<C\<t\>[\W], \envf\>$, then:}

\begin{center}
\def\arraystretch{1.5}
\begin{tabular}{ll}
    $h(o).\sf{class} \eqdef C\<t\>$ & $h(o).\sf{env}_{\sf{F}} \eqdef \envf$  \\
    $h(o).\sf{usage} \eqdef \W$ & $h(o).f \eqdef \envf(f)$ \\
    \multicolumn{2}{c}{$h(o).\sf{fields} \eqdef \dom{\envf}$}
\end{tabular}
\end{center}
\end{definition}

We define $\vv{F}.\sf{initvals}_{\vv{D}}$ to be a field environment where fields are set to an initial value, which is null for class types, \sf{false} for booleans and \sf{unit} for void types. $\vv{F}.\sf{inittypes}_{\vv{D}}$ is a field type environment, where fields have an initial type -- $\bot$ for classes, \sf{Bool} for booleans and \sf{void} for unit. 

\subsection{The transition rules}

We now define the set of valid transitions. The transition rules are split into four parts. First, we present the transition rules describing field access. Here the concept of linearity is also described in detail as it is especially important for assignment and dereferencing. Then, we describe the transition rules for method calls and control structures. Finally, the transition rules for composite expressions are presented.

\paragraph{Field access}

An important aspect of our semantics is that field access must follow a protocol of \emph{linearity}, if the field value is that of an available object. If the field denotes a terminated object or a ground value, field access is \emph{unrestricted}.

\begin{definition}[Linearity]
A value $v$ is said to be linear w.r.t. a heap $h$ written $\lin(v, h)$ iff $v$ has type $C\<t\>[\U]$ in $h$ and $\U \neq \sf{end}$.
\end{definition}

In Table \ref{tab:GroundReductionField} we define the ground reduction rules for parameters, fields and objects. In \rt{New} we use the type $\bot$ in order to type the class as non-generic. The rules for reading linear parameters \rt{lParam} and reading linear fields \rt{lDeref} illustrate the linearity concept we enforce, where we update a linear field or parameter to \sf{null} after we have read it. Additionally, in the rule for assigning to fields \rt{Upd} we check that the value we overwrite is not linear.

\rulestable[ll]{Ground reduction rules for fields and objects}{tab:GroundReductionField}{2.5}{
\rt{uParam} & \inference{\neg \lin(v,h)}{ \vdash_{\vv{D}} \<h,(o,[x \mapsto v]) \cdot env_S, x \> \trans \<h, (o, [x \mapsto v])\cdot env_S, v\>} \\
    \rt{lParam} & \inference{\lin(v,h)}{ \vdash_{\vv{D}} \<h,(o,[x \mapsto v]) \cdot env_S, x \> \trans \<h, (o, [x \mapsto \sf{null}])\cdot env_S, v\>} \\
    \rt{uDeref} & \inference{h(o).f = v & \neg \lin(v,h)}{\vdash_{\vv{D}} \<h, (o,s)\cdot env_S, f\> \trans \<h, (o, s)\cdot env_S, v\>} \\
    \rt{lDeref} & \inference{h(o).f=v & \lin(v,h)}{\vdash_{\vv{D}} \< h, (o, s) \cdot env_S, f\> \trans \< h\{\sf{null} / o.f \}, (o,s)\cdot env_S, v \>} \\
    \rt{Upd} & \inference{h(o).f = v' & \neg \lin(v', h)}{\vddash \<h, (o,s) \cdot env_S, f=v \> \trans\<h\{v/o.f\}, (o,s)\cdot env_S, \sf{unit}\>} \\
    \rt{New} & \inference{o\text{ fresh} & C\<\bot\>.\sf{fields}_{\vv{D}} = \vv{F} & C\<\bot\>.\sf{usage}_{\vv{D}} = \W}{\vddash \<h, env_S, \sf{new}\ C \> \trans \<h \cup \{o \mapsto \<C\<\bot\>[\W], \vv{F}.\sf{initvals}_{\vv{D}}\>\}, env_S, o\>} \\
    \rt{NewGen} & \inference{o \ \text{fresh} & C\<g\>.\sf{fields}_{\vv{D}} = \vv{F} & C\<g\>.\sf{usage}_{\vv{D}} = \W }{\vddash \<h, env_S, \sf{new}\ C\<g\> \> \trans \<h \cup \{o \mapsto \<C\<g\>[\W], \vv{F}.\sf{initvals}_{\vv{D}}\>\}, env_S, o\>}
}

\paragraph{Method calls}

The ground reduction rules for method calls are found in Table \ref{tab:GroundReductionMethod}. In the rules for calling methods on parameters \rt{CallP} and fields \rt{CallF} we see the second property we enforce, namely, that objects must follow the usage described by its type. In the premise of both rules we have that a method $m$ can only be called if the usage of the object allows an $m$ transition and the result of this evaluation is that the usage of the object is updated and the next evaluation step is set to the method body $e$ by wrapping the expression in a $\sf{return}\{e\}$ statement. The \rt{Ret} rule describes how a value is returned from a method call, by unpacking the \sf{return} statement into the value, while popping the call stack.

\rulestable[ll]{Ground reduction rules for method calls}{tab:GroundReductionMethod}{4}{
    \rt{CallP} & \inference{env_S = (o, [x' \mapsto o'])\cdot env_S'  & \_\ m(\_\ x)\{e\}\in h(o').\sf{class}.\sf{methods}_{\vv{D}} \\ h(o').\sf{usage} \trans[m] \W}{\vddash \<h, env_S, x'.m(v)\> \trans \<h\{\W/h(o').\sf{usage}\}, (o', [x \mapsto v])\cdot env_S,\sf{return}\{e\}\>} \\
    \rt{CallF} & \inference{env_S=(o,s)\cdot env_S' & o' = h(o).f \\ \_\ m(\_\ x)\{e\} \in h(o').\sf{class}.\sf{methods}_{\vv{D}} & h(o').\sf{usage} \trans[m] \W}{\vddash \<h, env_S, f.m(v)\> \trans \<h\{\W/h(o').\sf{usage}\}, (o', [x \mapsto v])\cdot env_S, \sf{return}\{e\}\>} \\
    \rt{Ret} & \inference{v \neq v' \Rightarrow \neg \lin(v', h)}{\vddash \<h, (o, [x \mapsto v'])\cdot env_S,\sf{return}\{v\}\> \trans \<h, env_S, v\>}
}

\paragraph{Control Structures}

The ground reduction rules for control structures are found in Table \ref{tab:GroundReductionControl}. The rules \rt{SwF} and \rt{SwP} specify how a switch involving a field or parameter is performed. The expression $\textsf{switch}_{f.m}(l_i)$ tell us that we switch on label $l_i$ which is returned from a method call $f.m$. In both \rt{SwF} and \rt{SwP} a usage $\mathcal{U}$ is chosen based on the label $l_i$ and the updated usage is reflected in the heap by updating either field $f$ or parameter $x$. The rule \rt{Lbl} shows how a loop iteration is performed by substituting instances of $\sf{continue \ k}$ with the expression defined for the associated label. Finally, in rule \rt{Seq} we see that linearity is also relevant for sequential expressions since a value on the left-hand side of a sequential expression should be non-linear. Otherwise, we risk loosing a reference to a linear object and thereby breaking protocol completion.

\rulestable[ll]{Ground reduction rules for control structures}{tab:GroundReductionControl}{2.5}{
    \rt{IfTrue} & ${\vddash \<h, env_S, \sf{if }(\sf{true}) \{e'\}\sf{ else } \{e''\}\> \trans \<h, env_S, e'\>}$\\
    \rt{IfFls} & ${\vddash \<h, env_S, \text{if }(\sf{false}) \{e'\}\sf{ else } \{e''\}\> \trans \<h, env_S, e''\>}$\\
    \rt{SwF} & \inference{h(o).f = o' & h(o').\sf{usage} \trans[l_i] \U & l_i \in L}{\begin{gathered}\vddash \<h, (o,s) \cdot env_S, \sf{switch}_{f.m}(l_i)\{l_j : e_j\}_{l_j \in L}\> \trans \\ \hspace{3.5cm}\<h\{o.f \mapsto C\<t'\>[\U]\}, (o,s) \cdot env_S, e_i\>\end{gathered}} \\
    \rt{SwP} & \inference{h(o').\sf{usage} \trans[l_i] \U & l_i \in L}{ \begin{gathered} \vddash \<h, (o, [x \mapsto o']) \cdot env_S, \sf{switch}_{x.m}(l_i)\{l_j : e_j\}_{l_j \in L}\> \trans \\ \hspace{3.5cm} \<h\{o' \mapsto C\<t'\>[\U]\}, (o,[x \mapsto o']) \cdot env_S,e_i\>\end{gathered}}
    \\
    \rt{Lbl} & ${\vddash \<h, env_S, k:e\> \trans \<h, env_S, e\{k:e/\sf{continue}\ k\}\>}$ \\
    \rt{Seq} & \inference{\neg \lin(v,h)}{\vddash \<h, env_S, v;e\> \trans \< h, env_S, e\>}
}

\paragraph{Composite expressions}

The transition rules for composite expressions are found in Table \ref{tab:compositereductions}. The rules define a call-by-value evaluation strategy for expressions, that is, expressions are evaluated when they are first encountered: parameters are evaluated before the method body, the left-hand side of sequential composition is evaluated before the right-hand side, etc.

\rulestable[ll]{Reduction rules for composite expressions}{tab:compositereductions}{2.5}{
    \rt{FldC} & \inference{\vddash\<h, env_S, e\>\trans \<h', env_S', e'\>}{\vddash\<h, env_S, f = e\> \trans \<h', env_S', f = e'\>} \\
    \rt{MthdC} & \inference{\vddash\<h, env_S, e\>\trans \<h', env_S', e'\>}{\vddash\<h, env_S, r.m(e)\> \trans \<h', env_S', r.m(e')\>} \\
    \rt{RetC} & \inference{\vddash\<h, env_S, e\>\trans \<h', env_S', e'\>}{\vddash\<h, env_S\cdot (o, s), \return{e}\> \trans \<h', env_S'\cdot(o, s), \return{e'}\>} \\
    \rt{SeqC} & \inference{\vddash\<h, env_S, e\>\trans \<h', env_S', e'\>}{\vddash\<h, env_S, e;e''\> \trans \<h', env_S', e';e''\>} \\
    \rt{IfC} & \inference{\vddash\<h, env_S, e\>\trans \<h', env_S', e'\>}{\vddash\<h, env_S, \sif{e}{e_1}{e_2}\> \trans \<h', env_S', \sif{e'}{e_1}{e_2}\>} \\
    \rt{SwC} & \inference{\vddash\<h, env_S, e\>\trans \<h', env_S', e'\>}{\vddash\<h, env_S, \sf{switch}\ (e) \{l_i : e_i\}_{l_i \in L}\> \trans \<h', env_S', \sf{switch}\ (e') \{l_i : e_i\}_{l_i \in L}\>}
}

\subsection{Well-formed and reachable configurations}

We only consider configurations that are reachable from our initial configuration. A configuration is initial if it is the configuration right after $\sf{main}()$ was invoked in an object of class $\sf{Main}$.

\FloatBarrier

\begin{definition}[Initial configuration] An \emph{initial} configuration $ic$ is of the form 
\label{def:initial_configuration}
$$ ic = \<\{o_{\sf{main}} \mapsto \<\sf{Main}\<\bot\>[\sf{end}^{\emptyset}], \sf{Main}\<\bot\>.\sf{fields}_{\vv{D}}.\sf{initvals}_{\vv{D}}\>\}, (o_{\sf{main}}, s_{\sf{main}}), e\> $$

where:
\begin{itemize}
    \item $o_{\sf{main}}$ is the main object mapped to a typestate with the usage $[\sf{end}^{\emptyset}]$, since at this point we have just called main, and a field environment where all fields values are initial values. The initial values of fields is defined in the following way: $\sf{initval}(\sf{bool}) \eqdef \sf{false}$, $\sf{initval}(\sf{unit}) \eqdef \sf{void}$ and $\sf{initval}(C\<t\>) \eqdef \sf{null}$.
    \item $s_{\sf{main}}$ is the initial parameter binding $[x \mapsto \sf{unit}]$
    \item $\sf{void} \ \sf{main}(\sf{void} \ x)\{e\} \in \sf{Main}\<\bot\>.\sf{methods}_{\vv{D}}$.
\end{itemize}
\end{definition}
\FloatBarrier
We define reachability for both configurations and usages. A configuration $c$ is reachable if $ic \rightarrow^\ast c$, while a usage $\W'$ is reachable from $\W$ if $\W \rightarrow^{\ast} \W'$.




One of the approximations made in our analysis is that we only consider expressions where the number of $\sf{return}$ statements in $e$ does not depend on conditional execution and where objects are only mentioned once.

In order to define this, we need to define the sets of $\sf{return}$ statements and objects mentioned in an expression as functions. 

\begin{definition}[Occurrences of returns and object references] Let $e$ be an expression, then we define the following functions on $e$.  \begin{enumerate}
    \item $\sf{returns}(e)$ is the multiset of all subexpressions of the form $\textsf{return}\{e'\}$ (where $e'$ is some expression) that occur in $e$
    \item $\sf{objects}(e)$ is the multiset of the objects occurring in $e$
\end{enumerate}
\end{definition}

We call expressions that satisfy the above conditions \emph{well-formed}.

\begin{definition}[Well-formed expressions] An expression $e$ is well-formed if it satisfies the following conditions:

\begin{enumerate}
    \item For each subexpression $e'$ in $e$ we have that:
        \begin{itemize}
            \item if $e'=e'';e'''$ then $\sf{returns}(e''') = \emptyset$
            \item if $e'=\sf{if}(e'') \{e'''\} \{e''''\}$ then $\sf{returns}(e''') = \emptyset$ and $\sf{returns}(e'''')= \emptyset$
            \item if $e'=\sf{switch}_{r.m}(r.m(e''))\{l_i : e_i\}_{l_i \in L}$ then $\forall l_i \in L \ . \ \sf{returns}(e_i)=\emptyset$
        \end{itemize}
    \item $\sf{objects}(e)$ is a set
    \item If $\sf{returns}(e) \neq \emptyset$ and $\return{e'}$ is the innermost return, then we have that $\sf{objects}(e)=\sf{objects}(e')$ 
\end{enumerate}
\end{definition}

A configuration $\<h, env_S, e\>$ is well-formed if its information is consistent. The expression $e$ must be well-formed and all objects mentioned must be bound in the heap and the stack and contain the fields required by the declaration of $C$. The objects mentioned in the stack must be ones that are also mentioned in the heap. And finally, the objects must satisfy a property of no aliasing, namely that they are mentioned exactly once.

To capture the property of no aliasing precisely, we define the multiset of objects occurring in fields and the multiset of objects that occur in the parameter stack. If both these multisets are sets, we know that there can be no aliasing involved.

\begin{definition}[Object occurrences] \hfill
\begin{itemize}
    \item $\sf{objargs}(env_S)$ is the multiset of objects occurring in the parameter stack: $$\sf{objargs}(env_S) = \{o \mid \exists s \in \text{ran}(env_S) \ . \ x : o = s(x) \}$$
    \item $\sf{objfields}(h)$ is the multiset of objects occurring in fields: $$\sf{objfields}(h)=\{ o \mid \exists o' \in \text{dom}(h) \ . \ f : o = h(o').f \}$$
\end{itemize}
\end{definition}

\begin{definition}[Well-formed configurations] A configuration $\<h, env_S, e\>$ is well-formed if it satisfies the following conditions:

\begin{enumerate}
    \item $env_S = env_S' \cdot (o_{\sf{main}}, s_{\sf{main}})$ and $\norm{env_S'} = \sf{returns}(e)$
    \item $env_S$ is a set $\text{dom}(env_S) \subseteq \text{dom}(h)$
    \item $e$ is a well-formed expression
    \item $\sf{objects}(e) \Cup \sf{objfields}(h) \Cup \sf{objargs}(env_S)$ is a set and it is a subset of $\text{dom}(h)$. 
    \item For every $o \in \text{dom}(h)$ we have that if $h(o) = (C\<t\>[\W], \envf)$ then $C\<t\>$ is declared in $\vv{D}$, $\W$ is a reachable usage, and $h(o).\sf{fields} = C\<t\>.\sf{fields}$
\end{enumerate}
\end{definition}

\begin{proposition} If $c$ is a reachable configuration, then $c$ is well-formed.
\label{prop:reachable_well_formed}
\end{proposition}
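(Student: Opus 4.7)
The plan is a straightforward induction on the length $n$ of the reduction sequence $ic \rightarrow^\ast c$, discharging each of the five clauses of well-formedness separately.

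For the base case $n = 0$, I would verify that $ic$ (from Definition~\ref{def:initial_configuration}) satisfies all five clauses directly. Clause~1 holds because $env_S = (o_{\sf{main}}, s_{\sf{main}})$ so $env_S' = \varepsilon$ and the main method body $e$ contains no $\sf{return}$ statements at the top level. Clause~2 is immediate since $\sf{Main}$ is the only object. Clause~3 requires $e$ to be well-formed, which I would take as a standing assumption on source programs (or verify syntactically). Clauses~4 and~5 hold because $h$ contains only $o_{\sf{main}}$ whose fields start at $\sf{initvals}$ (so no object references in fields yet), and $\sf{end}^\emptyset$ is trivially reachable from itself.

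For the inductive step, assume $c = \langle h, env_S, e \rangle$ is well-formed and $c \trans c'$; I would proceed by case analysis on the transition rule. The routine rules are the composite ones in Table~\ref{tab:compositereductions} and the control rules \rt{IfTrue}, \rt{IfFls}, \rt{Lbl}, \rt{Seq}, for which only clause~3 (well-formed expression) and clause~4 (no aliasing) need real checking, and these follow because the substitutions either leave the set of objects and returns unchanged or strictly reduce them. For \rt{New}/\rt{NewGen} the freshness of $o$ together with the fact that newly initialised fields contain no object references ensures clauses~4 and~5 are preserved. For \rt{uParam}, \rt{uDeref}, \rt{Upd} the non-linearity side condition guarantees that no linear object is dropped, so the multiset union in clause~4 is unchanged. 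For \rt{lParam}/\rt{lDeref} the linear object moves from the parameter stack/heap into the expression, so it vanishes from $\sf{objargs}(env_S)$ or $\sf{objfields}(h)$ and appears in $\sf{objects}(e)$ exactly once, preserving that the union is a set. For switch rules \rt{SwF}/\rt{SwP}, the usage update $h(o').\sf{usage} \trans[l_i] \U$ keeps $\U$ reachable, so clause~5 is maintained.

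The most delicate cases are \rt{CallP}, \rt{CallF} and \rt{Ret}, which touch the stack-vs.-returns invariant (clause~1). For \rt{CallP}/\rt{CallF}, the stack grows by one frame $(o', [x \mapsto v])$ and the expression is wrapped in exactly one new $\sf{return}\{e\}$; using the well-formedness restriction that the method body's returns don't occur under conditionals/sequences/switches in the calling context, this preserves $\norm{env_S'} = \sf{returns}(e)$. For \rt{Ret}, both a frame is popped and one innermost $\sf{return}$ is unwrapped, again preserving the equality; the extra side condition $v \neq v' \Rightarrow \neg\lin(v',h)$ ensures no linear parameter is silently discarded, which is needed for clause~4. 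For these same rules I would also use the third clause of well-formed expressions, $\sf{objects}(e) = \sf{objects}(e')$ for the innermost return, to track how objects migrate between $e$ and the stack without duplication.

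The main obstacle I anticipate is clause~4 in the linear cases combined with the method-call rules: one must argue carefully that when a linear object is read from a field or parameter and injected into the expression (e.g., as an argument to a pending method call), the no-aliasing invariant is preserved because it simultaneously disappears from the heap/stack side. A secondary subtlety is clause~3 under \rt{Lbl}, where $e$ is substituted into itself via $\{k : e / \sf{continue}\ k\}$; one must verify that the well-formedness constraints on $\sf{returns}$ and $\sf{objects}$ survive this substitution, which they do because $\sf{continue}\ k$ contains no returns and no objects, so the substitution can only preserve or reduce the relevant multisets.
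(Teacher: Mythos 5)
The paper states this proposition without giving a proof, so there is nothing official to compare against; your induction on the length of $ic \rightarrow^{\ast} c$, with an inner case analysis on the rule deriving each step, is certainly the intended shape of the argument, and your treatment of the base case, of \rt{lParam}/\rt{lDeref}, of \rt{CallF}/\rt{CallP}/\rt{Ret} versus clause~1, and of \rt{Lbl} is sound (modulo the small slip that for \rt{Ret} the side condition $v \neq v' \Rightarrow \neg\lin(v',h)$ is about protocol completion, not clause~4: dropping an object from $\sf{objargs}(env_S)$ can never destroy set-ness, so no side condition is needed there).

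There is, however, a genuine gap in your handling of the unrestricted read rules. You claim that for \rt{uParam} and \rt{uDeref} ``the multiset union in clause~4 is unchanged'', but that is false precisely in the case these rules are designed for beyond base values: when the value read is an object whose usage is $\sf{end}$. In \rt{uDeref} the heap is \emph{not} updated, so after the step the object $v$ occurs both in $\sf{objfields}(h)$ and in $\sf{objects}(e')$; similarly in \rt{uParam} it occurs both in $\sf{objargs}(env_S)$ and in $\sf{objects}(e')$. The multiset $\sf{objects}(e) \Cup \sf{objfields}(h) \Cup \sf{objargs}(env_S)$ then contains $v$ with multiplicity two and is no longer a set, so clause~4 as written is not preserved by your argument. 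Reading a field that holds a terminated object is operationally reachable (e.g.\ a field whose object has completed its protocol and is then dereferenced), and reachability here carries no typing hypothesis you could appeal to, so you cannot simply exclude the situation. To close the gap you must either argue separately that such a configuration cannot occur along reductions from $ic$ (which needs more than the non-linearity side condition), or observe that clause~4 has to be read, or reformulated, as a no-aliasing condition on \emph{linear} object occurrences only, and then redo the \rt{uParam}/\rt{uDeref} cases for that weaker invariant. As it stands, this step of your proof does not go through.
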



\section{Type System}
\label{sec:typesystem}
In this section we describe the type system for \mungo. The type system described check that the order of method calls imposed by a class usage is correct. This is achieved by statically following usage transitions and ensuring the method body associated with the transitions do not result in protocol errors. We proceed by defining types, linearity in the typesystem, and type environments before explaining the type rules in detail. 

\subsection{Types in the type system}
In the type system values have either a base type $b$, a typestate $C\<t\>[\U]$ or the type of \sf{null} objects, $\bot$. 

The notion of linearity reappears in the type system; a type is linear if it is a typestate whose usage is not terminated.

\begin{definition}[Linear type] Let $t$ be a type. We write $\sf{lin}(t)$ if $t = C\<t'\>[\mathcal{W}]$ for some usage $\mathcal{W}$ where $\mathcal{W} \neq \sf{end}$\label{def:lin}.
\end{definition}

In our treatment of generics we need a special type in order to describe instantiations.  We define a special usage $\top_{\U}$ that is different from $\sf{end}$, but has no transitions. We define a special class $\sf{class}\ \topclass\  \{\topusage; \emptyset;\emptyset\}$ We see directly from the definition that $\toptype$ is linear.

\subsection{The binding model of the type system}

The intention is that the type system will give us a sound overapproximation of the operational semantics. The bindings in the type system should therefore be counterparts of the associated notions in the semantics.

A \textit{field type environment} $envT_F$ is the approximation of field environments from the operational semantics. A field type environment maps field name to types, that is, typestates, base types or the bottom type $\bot$.

\textit{Object field environments}, denoted by $\Lambda$, is the approximation of the heap, and maps objects to their typestate along with a field typing environment, describing the fields of the object. 

We add an additional environment to capture the typestate of the objects mentioned in the expression that is checked (free object names). We call this environment an \textit{object type environment} $envT_O$. It simply maps objects to typestates.

To model the parameter stack of a configuration we introduce a \textit{parameter type stack environment} $envT_S$ and define it as a sequence of pairs $(o, S)$ where $o$ is an objects and $S$ is a parameter binding $[x \mapsto t]$.

For readability in the typing rules, we collapse $envT_O$ and $envT_S$ in a combined environment $\Delta$. $\Delta$ is a sequence of object type and parameter type environments defined as $\Delta = envT_O \cdot envT_S$.

Finally, since usages can be recursive, we need to keep track of usages associated with recursion variables. We do this by introducing a \textit{usage recursion environment} $\Theta$ which is a partial function from usage variables to field typing environments. They are used to ensure consistency between the field of an object before and after a labelled expression.

\subsection{The type rules}

We now present the central type rules. The description of the type rules are divided into three main parts: expressions, declarations, and usages.

\subsubsection{Expressions}

The main intention behind our type system is to provide a sound
overapproximation of terminating transition sequences in the semantics
of expressions. Typing judgements for expressions are therefore of the
form
\[ \Lambda;\Delta \vddash^\Omega e : t \triangleright \Lambda';\Delta' \]
These are to be read as follows: Evaluating $e$ in the initial environments $\Lambda$, $\Delta$ and $\Omega$ will result in final typing environments $\Lambda'$ and $\Delta'$. Here $\Omega$ is a label environment that is used to map a label $k$ to a pair of environments ($\Lambda,\Delta$), $\Omega$ is only used in continue expressions and is therefore omitted in most rules.

\paragraph{Type rules for fields and methods} Table
\ref{tab:typingexpression} contains the rules of the
type system that describe the protocols followed by fields and
objects. The rule \rt{TFld} is concerned with field assignment, it tells us that a field assignment is well-typed if the field is unrestricted and the type of expression $e$ and field $f$ agree. Here we must to ensure that the type of the right-hand side is compatible with the field. In order to capture this we define the \sf{agree} predicate. It denotes that types agree with a declaration of that specific type. For objects, the usage is ignored, such that a field declared for type $C\<t\>$ can contain objects of type $C\<t\>[\U]$ for any usage $\U$.

\begin{definition}[Agree predicate]
\[
\sf{agree}(b,b)\qquad\sf{agree}(C\<t\>, C\<t\>[\mathcal{W}]) \qquad \sf{agree}(C, C\<\bot\>[\W]) \qquad \sf{agree}(C\<t\>, \bot)
\]
\end{definition}

The type rule \rt{TNew} shows how objects without a generic class
parameter are to be typed; the class parameter is instantiated with
$\bot$ and \rt{TNewGen} tell us how objects with a generic class parameter are typed.

For method calls we must check if the usage of the object that owns
the method will permit the call. The rule \rt{TCallF} handles method
calls on a field; for a method call $o.f$ to be well-typed, the usage
$\mathcal{U}$ of the field $f$ in $o$ must allow an $m$ transition and
the final type environment contains an updated typestate for $f$. The rule \rt{TCallP} handles method calls on a parameter in a similar way to \rt{TCallF}, but the updated typestate associated with parameter $x$ is contained in the parameter stack type environment $envT_S$ instead of the object field environment $\Lambda$. Note, that the body of the method is not type checked in this rule; this will be handled at the class declarations and is explained later.

\begin{example}
  We again return to the \textsf{File} class from Listing
  \ref{listing:file1}. In the rule \rt{TCallF} we see that in order
  for a call of the \textsf{close} method to be well-typed, there must
  be a transition labelled with \textsf{close}. Therefore, this method
  call would fail to be well-typed under the initial usage $\U$ from
  Example \ref{ex:fileusage}. \end{example}

The rule \rt{TRet} tell us that a \textsf{return}$\{ e \}$ expression
is well-typed if the expression body $e$ is well-typed and the method
parameter used in the expression body is terminated after the
execution. The final type environment is one, in which the parameter
stack environment does not mention the called object and its
associated parameter. The intention is that this mirrors the
modification of the stack environment in the semantics as expressed in
the reduction rule \rt{Ret}.

\vspace{-1cm}

\rulestable[ll]{Type rules for fields and methods}{tab:typingexpression}{4}{
    \vspace{-0.8cm}
    \rt{TNew} & ${\Lambda; \Delta \vdash_{\vv{D}} \sf{new} \ C : C\<\bot\>[C\<\bot\>.\sf{usage}] \triangleright \Lambda; \Delta}$ \\
    \vspace{-.2cm}
    \rt{TNewGen} & ${\Lambda; \Delta \vdash_{\vv{D}} \sf{new} \ C\<g\> : C\<g\>[C\<g\>.\sf{usage}] \triangleright \Lambda; \Delta}$ \\
    \rt{TFld} & \inference{C\<t''\>=\Lambda(o).\sf{class} & \sf{agree}(C\<t''\>.\sf{fields}_{\vv{D}}(f), t') \\\Lambda; \Delta \cdot (o,S) \vdash e:t' \triangleright \Lambda', o.f \mapsto t; \Delta' \cdot (o,S')& \neg \lin (t) }{\Lambda; \Delta \cdot (o,S) \vdash_{\vv{D}} f = e : \sf{void} \triangleright \Lambda' \{o.f \mapsto t'\}; \Delta' \cdot (o, S') } \\
    \rt{TCallF} & \inference{\Lambda;\Delta \cdot (o,S) \vdash e:t \triangleright \Lambda'\{o.f \mapsto C\<t''\>[\mathcal{U}]\};\Delta' \cdot (o,S')  & \\ t' \ m(t\ x)\{e'\} \in C\<t''\>.\sf{methods}_{\vv{D}} & \mathcal{U} \trans[m] \mathcal{W}}{\Lambda; \Delta \cdot (o, S) \vdash f.m(e) : t' \triangleright \Lambda'\{o.f \mapsto C\<t''\>[\mathcal{W}]\}; \Delta' \cdot (o,S')}  \\
    \rt{TCallP} & \inference{\Lambda;\Delta \cdot (o, S)
    \vdash_{\vv{D}} e:t\triangleright \Lambda';\Delta' \cdot (o,
    [x\mapsto C\<t''\>[\mathcal{U}]])  & \\ t'\ m(t\ x)\{ e' \}\in
    C\<t''\>.\sf{methods}_{\vv{D}}& \mathcal{U}\trans[m]
    \mathcal{W}}{\Lambda;\Delta\cdot(o,S) \vdash_{\vv{D}} x.m(e):t'
    \triangleright \Lambda';\Delta'\cdot(o,[x\mapsto
    C\<t''\>[\mathcal{W}]])}\\
    \rt{TRet} & \inference{\Lambda; \Delta \vddash e:t \triangleright \Lambda';\Delta' & \Delta' = \Delta'' \cdot (o', [x \mapsto t']) & \sf{terminated(t')}}{\Lambda; \Delta \cdot (o,S) \vddash \sf{return}\{ e \} : t \triangleright \Lambda';\Delta'' \cdot (o,S)}
}
\vspace{-1cm}

\paragraph{Type rules for control structures} Table \ref{tab:structualexpressions} contains type rules for
control structures in \mungo. The importance of terminal and final environments becomes apparent in
the type rule \rt{TSeq} for sequential composition. This type rule
tell us that $e_1;e_2$ is well-typed if the type of $e_1$ is
terminated, and that $e_2$ can be evaluated using the resulting
environments from the first expression. The rule \rt{TIf} handles if-expressions here it is important to note that the type and the final environments of both branches must be the same, otherwise typing later expressions would be impossible.

The rule \rt{TSwF} is a rule for typing
a $\textsf{switch}_{f.m}$ expression that involves a field. An expression of
this form is well-typed if the type of the condition $e$ is a set of
labels $L$ and the typestate for the field $f$ allows for branching on the
labels in $L$. The rule \rt{TSwP} handle $\textsf{switch}_{x.m}$ expressions that involve a reference. It is similar to \rt{TSwF}, except the updated typestate is contained in parameter stack type environment $envT_S$ instead of the field type environment $\Lambda$.


The rules \textsc{TLab} and \textsc{TCon} for labelled expressions allows environments to change during the evaluation of continue-style loops. However, if a \textsf{continue} expression is encountered, the environments must match the original environments, in order to allow an arbitrary number of iterations. 

\vspace{-1cm}

\rulestable[ll]{Typing rules for control structures}{tab:structualexpressions}{4}{
     \rt{TSeq} & \inference{\Lambda;\Delta \vddash e : t
       \triangleright \Lambda''; \Delta'' & \neg \lin(t) &
       \Lambda'';\Delta'' \vddash e':t' \triangleright
       \Lambda';\Delta'}{\Lambda;\Delta \vddash e;e' : t'
       \triangleright \Lambda';\Delta'} \\
    \rt{TIf} & \inference{
         \Lambda; \Delta \vddash e: \sf{Bool} \triangleright \Lambda''; \Delta'' \\ \Lambda''; \Delta'' \vddash e' : t \triangleright \Lambda'; \Delta'  & \Lambda''; \Delta'' \vddash e'':t \triangleright \Lambda'; \Delta'
    }{
         \Lambda; \Delta \vddash \sf{if}\ (e)\ \{e'\} \ \sf{else}\ \{e''\}:t \triangleright \Lambda'; \Delta'
    } \\
    \rt{TSwP} & \inference{
         \Lambda;\Delta\cdot (o, S) \vddash e : L \triangleright \Lambda'';\Delta''\cdot(o, [x \mapsto C\<t'\>[(\<l_i : u_i\>_{l_i \in L})^{\vv{E}}]]) \\
         \forall l_i \in L . \ \Lambda'';\Delta''\cdot(o, [x\mapsto C\<t'\>[u_i^{\vv{E}}]]) \vddash u_i : t \triangleright \Lambda';\Delta' \cdot (o, S')
    }{\Lambda\Delta \cdot (o, S) \vddash \sf{switch}_{x.m}\ (e)\{l_i : u_i\}_{l_i \in L} : t \triangleright \Lambda';\Delta' \cdot (o, S')} \\
    \rt{TSwF} & \inference{
        \Lambda;\Delta \cdot (o, S) \vddash e : L \triangleright \Lambda'',o.f \mapsto C\<t'\>[(\<l_i : u_i\>_{l_i \in L})^{\vv{E}}]];\Delta'' \\
        \forall l_i \in L . \ \Lambda'', o.f \mapsto C\<t'\>[u_i^{\vv{E}}];\Delta'' \vddash e_i : t \triangleright \Lambda';\Delta' \cdot (o, S')
    }{
        \Lambda;\Delta \cdot (o, S) \vddash \sf{switch}_{f.m}\ (e)\{l_i : e_i\}_{l_i \in L} : t \triangleright \Lambda';\Delta'\cdot (o, S')
    }\\
    \rt{TLab} & \inference{
    \Omega' = \Omega, k: (\Lambda, \Delta) & \Lambda; \Delta \vddash^{\Omega'} e : \sf{void} \triangleright \Lambda'; \Delta'
    }{
    \Lambda; \Delta \vddash^{\Omega} k : e : \sf{void} \triangleright \Lambda'; \Delta'
    } \\
    \rt{TCon} & \inference{
    \Omega' = \Omega, k:(\Lambda, \Delta)
    }{
    \Lambda; \Delta \vddash^{\Omega'} \sf{continue }k : \sf{void} \triangleright \Lambda'; \Delta'
    }
}

In Section \ref{sec:obs} a class \textsf{FileReader} was introduced with a loop repeated in Listing \ref{lst:loop}. Even though calling the \textsf{close} method leaves the field in another state than calling \textsf{read}, the code is well typed. The reason is that after calling \textsf{read}, the field is left in the initial state when entering the loop, and another iteration occurs. When calling \textsf{close} the loop is ended. Hence the only resulting state for the field after the loop, is $\textsf{File}[\textsf{end}]$.

 \begin{lstlisting}[firstnumber=26,caption={Loop from class \textsf{FileReader}},label={lst:loop}]
[...]
file.open(unit)
loop:
    switch(file.isEOF()) {
      EOF: file.close()
      NOTEOF: file.read();
              continue loop
    }
[...]
\end{lstlisting}


\paragraph{Typing rules for values} Table \ref{tab:typingvalues} define rules for typing values in \mungo. The rules are straightforward since typing any of these values do not change any environments. The rules cover four types from the core \mungo language: booleans, litterals, void, and $\bot$.  

\rulestable[llll]{Typing rules for values}{tab:typingvalues}{2.5}{
     \rt{TLit} & \inference{l\in L}{\Lambda; \Delta \vddash l:L \triangleright \Lambda; \Delta} &
     \rt{TVoid} & ${\Lambda; \Delta \vddash \sf{unit}:\sf{void} \triangleright \Lambda; \Delta}$ \\
     \rt{TBool} & \inference{v \in \{\sf{true}, \sf{false}\}}{\Lambda; \Delta \vddash v: \sf{Bool} \triangleright \Lambda; \Delta} &
     \rt{TBot} & ${\Lambda; \Delta \vddash \sf{null} : \bot \triangleright \Lambda; \Delta}$ \\
 }

\paragraph{Type rules for objects and references} Table \ref{tab:typingreferences} contains type rules for typing references and object values. The rule \rt{TObj} handles object typing and it describes that once we type an object, corresponding to reading it, we remove it from the object type environment. 
%
%
%
\rt{TNoLpar} and \rt{TNoLFld} handle typing non-linear parameters and fields, where no updates should happen to the environments. The rules \rt{TLinPar} and \rt{TLinFld} handle typing linear parameters and fields such that after typing the value, the linear parameter or field is updated to the type $\bot$ in either the parameter stack environment or field type environment. 

\rulestable[ll]{Typing rules for
  references}{tab:typingreferences}{2.5}{
 \rt{TObj}  & \inference{envT_O = envT_O', o
      \mapsto t}{\Lambda; envT_O \cdot envT_S \vddash o:t
      \triangleright \Lambda; envT_O' \cdot envT_S} \\
    \rt{TLinPar} & \inference{\lin(t)}{\Lambda;\Delta \cdot (o, [x \mapsto t]) \vddash x : t \triangleright \Lambda; \Delta \cdot (o, [x \mapsto \bot])} \\
    \rt{TNoLPar} & \inference{\neg \lin(t)}{\Lambda;\Delta \cdot (o, [x \mapsto t]) \vddash  x : t \triangleright \Lambda; \Delta \cdot (o, [x \mapsto t])} \\
    \rt{TLinFld} & \inference{t = \Lambda(o).f & \lin(t)}{\Lambda;\Delta \cdot (o, S) \vddash f : t \triangleright \Lambda\{o.f \mapsto \bot\}; \Delta \cdot (o,S)} \\
    \rt{TNoLFld} & \inference{\neg \lin(t)}{\Lambda\{o.f \mapsto t\};\Delta \cdot (o,S) \vddash f : t \triangleright \Lambda\{o.f \mapsto t\};\Delta \cdot (o,S)}
}

\subsubsection{Declarations}

The type rules for program and class declarations are seen in Table
\ref{tab:typeingprogram}. A program declaration $\vv{D}$ is well-typed
if each of its class declarations is well-typed; this is rule
\rt{TProg}. For a class to be well-typed, it usage must be well-typed
and the result of following the class usage $\U$ must result in a
field typing environment $envT_F$ that is terminated i.e. all fields
are unrestricted. This is captured by the rules \rt{tClass} and
\rt{TClassGen}. The latter rule captures our treatment of generic
classes. A generic class declaration is well-typed if we can
substitute the type variable and usage parameter with the top type $\toptype$. This substitution ensures that
the class is well-typed for all types, by checking that the class does
not call any of the methods of the type variable.

\rulestable[ll]{Typing program and class definitions}{tab:typeingprogram}{3.5}{
    \rt{TProg} & \inference{\forall D \in \vv{D}\ . \vdash_{\vv{D}} D}{\vdash D} \\
    \rt{TClass} & \inference{\emptyset;\vv{F}.\sf{inittypes}_{\vv{D}} \vdash_{\vv{D}} C\<\bot\>[\mathcal{U}]\triangleright envT_F\\ \sf{terminated}(envT_F)
    }{ \vdash_{\vv{D}} \sf{class}\ C \{\mathcal{U}, \vv{F}, \vv{M}\}}
    \\
    \rt{TClassGen} & \inference{\emptyset;\vv{F}.\sf{inittypes}_{\vv{D}} \vdash_{\vv{D}} C\<\toptype\>[\mathcal{U}]\triangleright envT_F\\ \sf{terminated}(envT_F)
    }{ \vdash_{\vv{D}} \sf{class}\<\alpha[\beta]\>\ C \{\mathcal{U}, \vv{F}, \vv{M}\}}
}

\subsubsection{Class usages}

Table \ref{tab:classusages} contains the rules for typing class
usages; it is here that method bodies are examined. The rule \rt{TCBr}
is of particular importance, since it tells us that method bodies are
type-checked according to the class usage and that only reachable
methods are examined. For each method $m_i$ mentioned in the class
usage we check that the body $e_i$ is well-typed.

\rt{TCCh} handles the case where a usage is a choice usage. A usage of
this form is well-typed if every branch of the usage is well-typed.

The rules \rt{TCVar} and \rt{TCRec} handle recursive usages.
\rt{TCVar} tells us that a recusive usage variable is well-typed if
the variable $X$ is in the usage environment $\Theta$ and mapped to
the initial field type environment and \rt{TCRec} tells us that a
recursive usage is well-typed if the recursive usage variable is in
the usage environment and the class where we use the usage associated
with the recursive usage variable is well-typed.

\rulestable[ll]{Typing class usage definitions}{tab:classusages}{2.5}{
    \rt{TCBr} & 
        \inference{
            I \neq \emptyset & \forall i \in I \ .\ \exists envT_{F}''\ . \\
            \{\sf{this} \mapsto envT_F\}; \emptyset \cdot (\sf{this}, [x_i \mapsto t_i']) \vddash e_i : t_i \triangleright \{ \sf{this} \mapsto envT_F'' \};\emptyset\cdot(\sf{this}, [x_i \mapsto t_i'']) \\
            \sf{terminated}(t_i'') & t_i\ m_i(t_i'\ x_i)\{ e_i \}\in C\<t\>.\sf{methods}_{\vv{D}} & \Theta;envT_F''\vddash C\<t\>[u_i^{\vv{E}}]\triangleright envT_F'
        }{
            \Theta;envT_F \vddash C\<t\>[\{m_i;u_i\}^{\vv{E}}_{i\in I}] \triangleright envT_F'
        } \\
    \rt {TCCh} &
        \inference{
            \forall l_i \in L \ . \ \Theta;envT_F \vddash C\<t\>[u_i^{\vv{E}}] \triangleright envT_F'
        }{
            \Theta;envT_F \vddash C\<t\>[\<l_i : u_i\>_{l_i \in L}^{\vv{E}}]\triangleright envT_F'
        }\\
    \rt{TCEn} &
        ${
            \Theta;envT_F \vddash C\<t\>[\sf{end}^{\vv{E}}] \triangleright envT_F
        }$ \\
    \rt{TCVar} &
        ${
            (\Theta, [X \mapsto envT_F]);envT_F \vddash C\<t\>[X^{\vv{E}}] \triangleright envT_F'
        }$ \\
    \rt{TCRec} &
        \inference{
            (\Theta, [X \mapsto envT_F]);envT_F \vddash C\<t\>[u^{\vv{E}}]_{\vv{D}} \triangleright envT_F'
        }{
            \Theta; envT_F \vddash C\<t\>[X^{\vv{E} \uplus \{X = u\}}] \triangleright envT_F'
        }
}

\begin{example}
  Let us return to the class \textsf{File} in the code example of Listing
  \ref{listing:file1}. Recall the usage for the \textsf{File} class is 
  \[ \U = \{\textsf{open}; \textsf{X}\}^{\vv{E}}\]
  \[ \vv{E} = \{\textsf{X} = \{\textsf{isEOF}; \left\langle\begin{array}{l} \textsf{EOF} : \{\textsf{close}; \textbf{end}\}  \\\textsf{NOTEOF} : \{\textsf{read}; \textsf{X}\}\end{array} \right\rangle \}\}\]
  To type this class, we inspect the \textsf{File} usage. The usage $\U$ starts as a branch usage where only method \textsf{open} is available. The rule \rt{TCBr} is then applied which checks that the method body of \textsf{open} is well-typed and move on to check the remaining usage $\U'$. $\U'$ is $\textsf{X}^{\vv{E}}$ which is a recursive variable, hence \rt{TCRec} is used to unfold it. Unfolding \textsf{X} results in a branch usage where the body of method \textsf{isEOF} is checked using \rt{TCBr}. The resulting usage $\U''$ is a choice usage so rule \rt{TCCh} is used to check the usages associated with each label. The usage associated with label \textsf{EOF} is $\{\textsf{close};\textbf{end}\}$ a branch usage, hence \rt{TCBr} is used. The remaining usage is $\textbf{end}$ and rule \rt{TCEn} is used to terminate the deviation. 
  The usage associated with label \textsf{NOTEOF} is also a branch usage, hence \rt{TCBr} is used again and its checks that the body of method \textsf{read} is well-typed. The resulting usage is the recursive variable \textsf{X} that has already been checked, so \rt{TCVar} is applied to terminate the derivation. This concludes the type check of class $\sf{File}[\mathcal{U}]$.
\end{example}



\section{The type system is sound}
\label{sec:results}
In this section we state and prove the soundness of our type system. We present a safety theorem, which guarantees that a well-typed program does not attempt \textit{null-derefencing}, and that programs follow the specified usages. 

\subsection{Relating program states and types}

Since the type system must be an overapproximation of the semantics of \mungo, we must relate the notions of program states and type bindings. Central to this is that the type information at run-time (as expressed in the heap) can be retrieved and compared to the type information in a type environment. In order to do this, we define a partial function \sf{getType} that returns the type of a value $v$ in a given heap $h$.  \begin{definition}[Heap value types] \[
    \sf{getType}(v,h) = \begin{cases}
    \sf{void} & \text{if } v= \sf{unit} \\
    \sf{bool} & \text{if } v = \sf{true} \text{ or } v = \sf{false} \\
    \bot & \text{if } v = \sf{null} \\
    L & \text{if } v = l \text{ and } l \in L \\
    C\<t\>[\W] & \text{if } h(v) = \< C\<t\>[\W], env_F \> \\
    \end{cases}
\]
\end{definition}

We now use this function to define what it means to be a well-typed configuration wrt. type environments $\Lambda$ and $\Delta$, written $\Lambda; \Delta \vddash \<h, env_S, e\> : t \triangleright \Lambda'; \Delta'$. A configuration is well-typed if its bindings match the type information given: The heap matches the field typing environment $\Lambda$, the stack $\Delta$ in the type system matches the stack from the semantics, the objects mentioned in the type system match those of $e$, the expression $e$ itself is well typed and the field type environment $\Lambda$ is compatible with the program $\vv{D}$.



\paragraph{Well typed heaps} A heap $h$ is well typed  in a field typing environment $\Lambda$ ($\Lambda \vddash h$) if the types of all objects in the heap matches those of the field typing environment. Furthermore all objects mentioned in the heap must also be mentioned in the field typing environment and vice versa.

\[\rt{WTH}\, \inference{\forall f \in h(o).\sf{fields}\ .\ \Lambda(o).envT_F(f) = \sf{getType}(h(o).f, h) \\ \text{dom}(\Lambda)=\text{dom}(h) & \forall o \in \text{dom}(h)\ .\ h(o).\sf{fields} = \Lambda(o).\sf{fields} = h(o).\sf{class}.\sf{fields}_{\vv{D}}}{ \Lambda \vddash h}\]

\paragraph{Well typed parameter stacks} A parameter stack is well typed in a parameter typing sequence in the context of expression $e$ ($envT_S \vdhash_{e} env_S$), if all levels of the parameter stack match with what is known in the parameter typing sequence. The parameter typing sequence can contain more information than what is in the parameter stack. If the expression does not match any of the rules, then \rt{WTP-Base} is used.

\begin{tabular}{llll}
    \rt{WTP-Base} & \multicolumn{3}{l}{\inference{t_\bot = \sf{getType}(v, h) }{ envT_S \cdot (o, [x\mapsto t_\bot]) \vdhash_{\_} (o, [x \mapsto v])}} \\
    \rt{WTP-Ret} & \multicolumn{3}{l}{\inference{envT_S \vdhash_{e} env_S & \sf{getType}(v, h) = t_{\bot}}{ envT_S \cdot (o, [x \mapsto t_{\bot}]) \vdhash_{\return{e}}  env_S \cdot (o, [x \mapsto v])}} \\
    \rt{WTP-Sw} & \multicolumn{3}{l}{\inference{envT_S \vdhash_{e} env_S}{envT_S \vdhash_{\sf{switch}_{r.m}\ (e) \{l_i: e_i\}_{l_i \in L}} env_S}} \\
    \rt{WTP-Mthd} & \inference{envT_S \vdhash_{e} env_S}{envT_S \vdhash_{r.m(e)} env_S} &
    \rt {WTP-Fld} & \inference{envT_S \vdhash_{e} env_S}{envT_S \vdhash_{f = e} env_S} \\
    \rt{WTP-Seq} & \inference{envT_S \vdhash_{e} env_S}{envT_S \vdhash_{e;e'} env_S} &
    \rt{WTP-If} & \inference{envT_S \vdhash_{e} env_S}{envT_S \vdhash_{\sif{e}{e_1}{e_2}} env_S} 
\end{tabular}

\paragraph{Well typed expressions} An expression is well typed in a object typing environment ($envT_O \vdhash e $) if the objects mentioned in $e$ are precisely those bound in the $envT_O$.

\[
\rt{WTE}\, \inference{\forall o\in \text{dom}(envT_O)\ .\ envT_O(o)=\sf{getType}(o,h) & \text{dom}(envT_O)=\sf{objects}(e)}{envT_O \vdhash e}
\]

\paragraph{Well typed declarations} A program is well typed in a field typing environment ($\Lambda \vdhash \vv{D}$) if the current objects mentioned in the field typing environment or the heap, can all continue execution and terminate -- no objects are stuck.

\[\rt{WTD}\, \inference{\exists \Gamma' & \emptyset;\Lambda(o).envT_F \vddash \sf{getType}(o,h) \triangleright \Gamma' & \sf{terminated}(\Gamma') \\ \forall o \in \text{dom}(h) & \Lambda(o).\sf{class} = h(o).\sf{class} & \text{dom}(h) = \text{dom}(\Lambda) }{\Lambda \vdhash \vv{D}}\]

\paragraph{Well typed configurations} We now combine the notion of well typed heaps, stacks, expressions and declarations into \textit{well typed configurations}. A well typed configuration describes that the current state of the evaluating program is in accordance with the type system, and is defined below.

\[\rt{WTC}\, \inference{\Lambda \vddash h & envT_S \vdhash_e env_S & envT_O \vdhash e \\
& \Lambda; envT_O \cdot envT_S \vddash e : t_\bot \triangleright \Lambda';\Delta & \Lambda \vdhash \vv{D}}{\Lambda;envT_O \cdot envT_S \vddash \< h, env_S, e \> : t_\bot \triangleright \Lambda';\Delta}\]

We now show that the initial configuration presented in Definition \ref{def:initial_configuration} is in fact a well-typed configuration and thereby relating initial configurations to well typed programs.

\begin{lemma}[Well typed initial configuration]
Let $\vv{D}$ be a program such that $\vdash \vv{D}$. The initial configuration of $\vv{D}$ is well typed.
\end{lemma}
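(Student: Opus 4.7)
The plan is to exhibit witness type environments and verify the five premises of rule \rt{WTC}. I would set $\Lambda = \{o_{\sf{main}} \mapsto \<\sf{Main}\<\bot\>[\sf{end}^\emptyset], \vv{F}.\sf{inittypes}_{\vv{D}}\>\}$ where $\vv{F} = \sf{Main}\<\bot\>.\sf{fields}_{\vv{D}}$, $envT_O = \emptyset$, and $envT_S = (o_{\sf{main}}, [x \mapsto \sf{void}])$, aiming at a result type $t_\bot = \sf{void}$ that matches the declared return type of \textsf{main}.

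The four easy premises I would dispatch as follows. For \rt{WTH}, a case analysis on the base types shows that each initial field value has the type prescribed by $\sf{inittypes}_{\vv{D}}$ ($\sf{null} : \bot$, $\sf{false} : \sf{Bool}$, $\sf{unit} : \sf{void}$), and the domain and field-set conditions are immediate by construction. For \rt{WTP}, rule \rt{WTP-Base} applies directly since the runtime stack has a single frame and $\sf{getType}(\sf{unit}, h) = \sf{void}$. For \rt{WTE}, since $envT_O = \emptyset$, I need $\sf{objects}(e) = \emptyset$, which holds because the source-level body of \textsf{main} cannot contain the run-time object names $o$ introduced only by the operational semantics. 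For \rt{WTD}, applying \rt{TCEn} yields $\emptyset; \vv{F}.\sf{inittypes}_{\vv{D}} \vddash \sf{Main}\<\bot\>[\sf{end}^\emptyset] \triangleright \vv{F}.\sf{inittypes}_{\vv{D}}$, and the resulting environment is terminated because none of $\bot$, $\sf{Bool}$, $\sf{void}$ is linear.

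The hard part is establishing the remaining premise $\Lambda; envT_O \cdot envT_S \vddash e : \sf{void} \triangleright \Lambda'; \Delta'$ for the body $e$ of \textsf{main}. To obtain it I would invert $\vdash \vv{D}$: by \rt{TProg} and \rt{TClass} applied to the \textsf{Main} class, $\emptyset; \vv{F}.\sf{inittypes}_{\vv{D}} \vdash_{\vv{D}} \sf{Main}\<\bot\>[\mathcal{U}] \triangleright envT_F''$ with $envT_F''$ terminated. Structural inspection of this derivation — together with the standard convention that $\sf{main}$ is a callable first action of $\mathcal{U}$ — forces \rt{TCBr} to fire on the $\sf{main}$ branch, producing a sub-judgement of the form $\{\sf{this} \mapsto \vv{F}.\sf{inittypes}_{\vv{D}}\}; \emptyset \cdot (\sf{this}, [x \mapsto \sf{void}]) \vddash e : \sf{void} \triangleright \{\sf{this} \mapsto envT_F'''\}; \emptyset \cdot (\sf{this}, [x \mapsto t''])$ with $t''$ terminated.

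The remaining subtlety is that $\sf{this}$ in the extracted sub-derivation must be renamed to $o_{\sf{main}}$. This will require a small auxiliary renaming lemma stating that typing derivations for expressions are preserved under any injective renaming of object names in $\Lambda$ and $\Delta$ — the associated class and field information simply travels with the renamed object. Proving this lemma is a routine induction on the derivation, inspecting each typing rule to confirm that it only uses object names as indices into $\Lambda$ or $\Delta$. With it in hand, the renamed derivation is exactly the judgement needed to discharge the outstanding premise of \rt{WTC}, completing the argument.
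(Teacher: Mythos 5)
Your proposal is correct and takes essentially the same approach as the paper's proof: it inverts $\vdash \vv{D}$ through \rt{TProg}, \rt{TClass} and \rt{TCBr} to obtain the typing of the body of \textsf{main}, and then discharges the five premises of \rt{WTC} with the same witness environments (with \rt{TCEn} giving the \rt{WTD} premise). The only difference is that you make explicit, via a routine renaming lemma, the identification of $\sf{this}$ with $o_{\sf{main}}$, which the paper's proof performs silently; this is added rigor rather than a different argument.
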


\begin{proof}
Recall the initial configuration is defined as the configuration right after $\sf{main}$ was invoked in an object of class $\sf{Main}$: 
$$ ic = \<\{o_{\sf{main}} \mapsto \<\sf{Main}\<\bot\>[\sf{end}^{\emptyset}], \sf{Main}\<\bot\>.\sf{fields}_{\vv{D}}.\sf{initvals}_{\vv{D}}\>\}, (o_{\sf{main}}, s_{\sf{main}}), e\> $$

Assume that $\vdash \vv{D}$ then from \rt{TProg} we also have the class $\sf{Main}$ is well typed.

From \rt{TClass} we know:
\begin{itemize}
    \item $\emptyset; \vv{F}.\sf{inittypes} \vddash \sf{Main}\<\bot\>[\{\sf{main};\sf{end}\}^{\emptyset}] \triangleright envT_F$
    \item $\sf{terminated}(envT_F)$
\end{itemize}

From \rt{TCBr} we know:
\begin{itemize}
    \item $\{\sf{this} \mapsto \vv{F}.\sf{inittypes}\}; \emptyset \cdot (\sf{this}, [x \mapsto \sf{void}]) \vddash e : \sf{void} \triangleright \{\sf{this} \mapsto envT_F\}; \emptyset \cdot (\sf{this}, [x \mapsto t])$
    \item $\emptyset; envT_F \vddash \sf{Main}[\sf{end}]^{\emptyset} \triangleright envT_F'$
    \item $envT_F = envT_F'$ follows from \rt{TCen}, since an $\sf{end}$ usage does not change any fields.
\end{itemize}

We now show that
\[\Lambda;\Delta \vddash \<\{o_{\sf{main}} \mapsto \<\sf{Main}\<\bot\>[\sf{end}^{\emptyset}], \sf{Main}\<\bot\>.\sf{fields}_{\vv{D}}.\sf{initvals}_{\vv{D}}\>\}, (o_{\sf{main}}, [x \mapsto \sf{unit}]), e\> \triangleright \Lambda'; \Delta'\]

Where $\Lambda = \{\sf{this} \mapsto \vv{F}.\sf{inittypes}\}$ and $\Delta = \emptyset \cdot (\sf{this}, [x \mapsto \sf{void}])$.

By showing the premises of \rt{WTC} are satisfied.
\begin{itemize}
    \item $\{\sf{this} \mapsto \vv{F}.\sf{inittypes}\} \vddash \{o_{\sf{main}} \mapsto \<\sf{Main}\<\bot\>[\sf{end}^{\emptyset}], \sf{Main}\<\bot\>.\sf{fields}_{\vv{D}}.\sf{initvals}_{\vv{D}}\>\}$ is satisfied since the field types in $\Lambda$ clearly correspond with the heap and their domains are the same. 
    \item $(\sf{this}, [x \mapsto \sf{void}]) \vdhash_e (o_{\sf{main}}, [x \mapsto \sf{unit}])$ is concluded with \rt{WTP-Base} (potentially after applying other \rt{WTP} rules except \rt{WTP-Ret}).
    \item $\emptyset \vdhash e$ is satisfied since $e$ is not a run-time expression hence $\sf{objects}(e) = \emptyset$.
    \item $\Lambda; \Delta \vddash e : t \triangleright \Lambda'; \Delta'$ from \rt{TCBr}.
    \item $\{\sf{this} \mapsto \vv{F}.\sf{inittypes}\} \vdhash \vv{D}$ satisfied from \rt{TCBr} and \rt{TClass}. Namely, $\emptyset; envT_F \vddash \sf{Main}[\sf{end}]^{\emptyset} \triangleright envT_F$ and $\sf{terminated}(envT_F)$.
\end{itemize}

\end{proof}

\subsection{The subject reduction theorem}

We can finally present the subject reduction theorem for our system. It states that if a well-typed configuration can perform a step, then the resulting configuration will also be well-typed. 

\begin{theorem}[Subject reduction]
\label{lemma:subject_reduction}
Let $\vv{D}$ be such that $\vdash \vv{D}$ and let $\<h, env_S, e\>$ be a configuration. If $\vv{D} \vddash \<h, env_S, e\> \trans \<h', env_S', e'\>$ then:
$$
    \exists \Lambda, \Delta\ .\ \Lambda, \Delta \vddash \<h, env_S, e\> : t \triangleright \Lambda'; \Delta' \implies \exists \Lambda^N, \Delta^N\ .\ \Lambda^N, \Delta^N \vddash \<h', env_S', e' \> : t \triangleright \Lambda'';\Delta'$$

where $\Lambda'(o) = \Lambda''(o)$ and $o$ is the active object in the resulting configuration.
\end{theorem}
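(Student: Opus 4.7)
The plan is to proceed by induction on the derivation of the transition $\vv{D} \vddash \<h, env_S, e\> \trans \<h', env_S', e'\>$, with a case analysis on the last reduction rule. For the compositional rules (\rt{FldC}, \rt{MthdC}, \rt{RetC}, \rt{SeqC}, \rt{IfC}, \rt{SwC}), the inductive hypothesis gives well-typedness of the inner subexpression; the outer typing rule (\rt{TFld}, \rt{TCallF}/\rt{TCallP}, \rt{TRet}, \rt{TSeq}, \rt{TIf}, \rt{TSwF}/\rt{TSwP}) can then be reapplied with the new intermediate environments, because those rules simply thread $\Lambda$ and $\Delta$ through a subderivation that we now have at the stepped expression. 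The invariant on the active object's type, $\Lambda'(o)=\Lambda''(o)$, falls out because compositional rules only modify environments through the inner judgement, which by IH still produces the same final object-typing.

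The base cases are the ground reduction rules. For \rt{uParam}, \rt{uDeref}, \rt{New}, \rt{NewGen}, \rt{IfTrue}, \rt{IfFls}, \rt{Lbl}, the required typing of the resulting configuration is immediate from the corresponding value/expression typing rules together with the fact that \rt{TNew}/\rt{TNewGen} attach exactly the class's declared usage, matching the heap update. For \rt{lParam} and \rt{lDeref} the argument is more delicate: the semantics nullifies the source, and we must show the new heap and stack are typed by $\Lambda\{o.f \mapsto \bot\}$ resp.\ the stack with $[x\mapsto \bot]$. These are precisely the final environments predicted by \rt{TLinFld} and \rt{TLinPar}, so \rt{WTH}/\rt{WTP-Base} remain satisfied because $\sf{getType}(\sf{null},h)=\bot$. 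For \rt{Upd} we invoke \rt{TFld} plus the side condition $\neg\lin(t)$ which matches the semantic premise $\neg\lin(v',h)$. For \rt{Seq} the premise $\neg\lin(v,h)$ matches \rt{TSeq}'s $\neg\lin(t)$ so no reference is dropped.

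The substantive cases are \rt{CallP}, \rt{CallF}, and \rt{Ret}. For the method calls, we must rebuild a well-typed configuration whose expression is $\return{e}$ for the method body $e$. The key ingredients are (i) well-typedness of $\vv{D}$ together with \rt{TClass}/\rt{TClassGen} gives, via \rt{TCBr}, that for the transition $h(o').\sf{usage}\trans[m]\W$ the body $e$ is typeable under $\{\sf{this}\mapsto envT_F\};\emptyset\cdot(\sf{this},[x\mapsto t_i'])$ ending with $\sf{terminated}(t_i'')$; (ii) the WTD invariant is preserved because the callee's new field environment $\Lambda(o').envT_F$ is exactly the environment under which \rt{TCBr} typed the body at usage $\W$. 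We then extend $\Delta$ with the new stack frame $(o',[x\mapsto t_v])$ and rewrap the judgement with \rt{TRet}, which is what \rt{Ret} will subsequently discharge. Invariance of the active object's final type holds because the call updates the caller's field/parameter to the post-state $C\<t''\>[\W]$ that \rt{TCallF}/\rt{TCallP} predict, while the active object of the post-configuration is $o'$, whose $\Lambda$-entry is unchanged apart from its usage component, which is aligned by the same $\W$. For \rt{Ret} the matching side condition $\sf{terminated}(t')$ in \rt{TRet} lines up with the semantic premise $v\neq v'\Rightarrow \neg\lin(v',h)$ (recast via \rt{TLinPar} nullifying the parameter before reaching a value).

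The hardest step will be the method-call cases, specifically showing that the substituted, partially evaluated method body remains well-typed after the call: we must carefully transport the class-declaration judgement of \rt{TCBr} (which is over the abstract body $e$) into the configuration-judgement form, using \rt{WTE} to match $\sf{objects}(\return{e})$ to the $envT_O$ component, and reconcile the heap-level usage transition with the type-level one. A technical sublemma will be needed stating that if $\U \trans[m] \W$ at the semantic level and the class usage $C\<t\>[\U]$ is typeable through \rt{TCBr}, then $C\<t\>[\W]$ is typeable through the chosen branch, preserving the resulting $envT_F$. Finally, the generic case \rt{NewGen} reduces to \rt{TNewGen}, which has already been reconciled with \rt{TClassGen} via the substitution of $\toptype$, so no new argument is required beyond noting that the heap entry stores the instantiation $g$ explicitly.
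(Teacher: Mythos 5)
Your overall strategy is the same as the paper's: induction on the last reduction rule, re-establishing each premise of \rt{WTC} for the stepped configuration, using $\vdash\vv{D}$ together with \rt{TClass}/\rt{TCBr} plus an environment-extension (weakening) step to handle \rt{CallF}/\rt{CallP}, and matching the linearity side conditions of the semantics against those of the typing rules in the ground cases. The ``technical sublemma'' you anticipate for transporting the \rt{TCBr} body judgement into the configuration judgement is exactly the paper's Weakening Lemma, so that part of the plan is on track.

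The genuine gap is in the compositional cases, where you claim the outer rule ``can then be reapplied'' and that the active-object invariant ``falls out''. Note that the theorem's conclusion is weaker than environment equality: the induction hypothesis only yields a new final environment $\Lambda^S$ with $\Lambda''\lambdaeq{o}\Lambda^S$, i.e.\ agreement on the active object, not $\Lambda''=\Lambda^S$. But the second premises of \rt{TIf}, \rt{TSeq}, \rt{TSwF}, \rt{TSwP} and \rt{TCallF} were derived under the \emph{old} intermediate environment $\Lambda''$; to reapply the outer rule after the step you must retype the untouched continuations ($e_1,e_2$, $e''$, the branches $e_i$) under $\Lambda^S$, which does not follow by mere ``threading''. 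The paper closes this with a separate consistency lemma (field typing consistency reduction): if $\Lambda^N\lambdaeq{o}\Lambda$ and $\sf{returns}(e)=0$ then the typing judgement transfers, the $\sf{returns}(e)=0$ hypothesis being supplied by well-formedness of expressions (continuations of sequences, conditionals and switches contain no \textsf{return}). Your plan never isolates this step, and without it the compositional cases do not go through. Relatedly, to invoke the induction hypothesis at all in these cases you need that the inner configuration $\<h,env_S,e\>$ is itself well typed in the same environments; this is the paper's separate sub-configuration lemma (using the \rt{WTP} rules and $\sf{objects}(e)$ bookkeeping), which your plan assumes implicitly rather than establishes.
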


\begin{proof}
By induction in the structure of the reduction rules. 

\end{proof}

\subsection{Error freedom}

We can now formulate an important result guaranteed by the type system, that well-typed programs do not exhibit run-time errors.

Here we need to define the notion of run-time error by means of an error predicate for configurations. If a configuration $\<h, env_S, e\>$ has an error, we write $\<h, env_S, e\>\err$. 

Examples of the types of error we can catch using this notation includes the two important problems we set out to solve, namely \textit{null-dereferencing} and protocol errors. 

In Table \ref{tab:errorpred} we present the rules defining the error predicate. Each rule captures a particular run-time error, and each of corresponds to one of four kinds of errors for methods and fields.

Thus, the rules \rt{MthNotAv-1} and \rt{MthNotAv-2} together define the occurrences of the run-time error \emph{method not available}. The rules \rt{NullCall-1} and \rt{NullCall-2} describe two cases of null dereferencing that occur when the object whose method $m$ is to be called has been nullified. \rt{NullCall-1} is an instance of a \emph{field not available} error, whereas \rt{NullCall-2} is an instance of a \emph{parameter not available} error.


\rulestable[ll]{Error predicates}{tab:errorpred}{2.5}{
    \rt{NullCall-1} & \inference{h(o).f = \sf{null}}{\<h, (o, s) \cdot env_S, f.m(v)\> \err} \\
    \rt{NullCall-2} & $\<h, (o, [x \mapsto \sf{null}]) \cdot env_S, x.m(v)\> \err$ \\
    \rt{MthdNotAv-1} & \inference{h(o).f = o' & h(o').usage \nottrans[m]}{\<h, (o, s) \cdot env_S, f.m(v)\> \err} \\
    \rt{MthdNotAv-2} & \inference{h(o').usage \nottrans[m]}{\<h, (o, [x \mapsto o']) \cdot env_S, x.m(v)\> \err} \\
    \rt{FldErr} & \inference{h(o)=\<C\<t\>[\U], env_F\> & f \not\in \dom{env_F}}{\<h, (o, S) \cdot env_S, f\>\err} \\
    \rt{IfCErr} & \inference{\<h, env_S, e\>\err}{\<h, env_S, \sif{e}{e_1}{e_2}\>\err} \\
    \rt{FldCErr} & \inference{\<h, env_S, e\>\err}{\<h, env_S, f = e\>\err} \\
    \rt{CallCErr} & \inference{\<h, env_S, e\>\err}{\<h, env_S, r.m(e)\>\err} \\
    \rt{RetCErr} & \inference{\<h, env_S, e\>\err}{\<h, env_S \cdot (o, s), \return{e}\>\err} \\
    \rt{SeqCErr} & \inference{\<h, env_S, e\>\err}{\<h, env_S, e;e'\>\err} \\
    \rt{SwCErr} & \inference{\<h, env_S, e\>\err}{\<h, env_S, \sf{switch}_{r.m}\ (e) \{l_i : e_i\}_{l_i \in L}\>\err}
}

\begin{lemma}[Error freedom]
\label{thm:errorfreedom}
If $\exists \Lambda,\Delta.\ \Lambda;\Delta\vddash \<h, env_S, e\> : t \triangleright \Lambda';\Delta'$ then $\<h, env_S, e\>\noterr$
\end{lemma}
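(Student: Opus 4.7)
My plan is to argue by contradiction: suppose $\Lambda;\Delta \vddash \<h,env_S,e\>:t \triangleright \Lambda';\Delta'$ and also $\<h,env_S,e\>\err$, and then derive a contradiction by induction on the derivation of the error judgement. For this, it is useful to first establish a short \emph{inversion lemma} for \rt{WTC}, extracting from $\Lambda;envT_O\cdot envT_S \vddash \<h,env_S,e\>:t\triangleright \Lambda';\Delta'$ both the four side conditions ($\Lambda \vddash h$, $envT_S \vdhash_e env_S$, $envT_O \vdhash e$, $\Lambda\vdhash \vv{D}$) and the derivation of $\Lambda;envT_O\cdot envT_S \vddash e:t\triangleright \Lambda';\Delta'$. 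From \rt{WTH} we specifically read off $h(o).\sf{fields} = \Lambda(o).\sf{fields}$ and that for every field $f$ of $o$ we have $\Lambda(o).envT_F(f) = \sf{getType}(h(o).f,h)$.

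For the base cases of the induction I treat each axiom of the error predicate by inverting the typing derivation at the appropriate expression and using the heap/parameter-stack compatibility facts above. \textbf{Case} \rt{NullCall-1}: If $h(o).f=\sf{null}$, then $\sf{getType}(h(o).f,h)=\bot$, so by \rt{WTH} we have $\Lambda(o).envT_F(f)=\bot$; but typing $f.m(v)$ forces the use of \rt{TCallF}, which demands that $f$ has some typestate $C\<t''\>[\U]$ with $\U\trans[m]\W$, contradicting $\bot$. \textbf{Case} \rt{NullCall-2}: analogously, the parameter stack compatibility $envT_S\vdhash_{x.m(v)} env_S$ forces $[x\mapsto\bot]$ at the top of $envT_S$, while \rt{TCallP} requires a typestate with an $m$-transition. \textbf{Cases} \rt{MthdNotAv-1}/\rt{MthdNotAv-2}: here the heap states $h(o').\sf{usage}\nottrans[m]$, but \rt{TCallF}/\rt{TCallP} again require the typestate of that object to admit $m$; since $\Lambda(o).envT_F(f)=\sf{getType}(h(o).f,h)$ (and similarly for parameters via \vdhash), the static usage coincides with the run-time usage and a transition would exist. \textbf{Case} \rt{FldErr}: if $f\notin \dom{env_F}$ then by \rt{WTH} $f\notin \Lambda(o).\sf{fields}$, so neither \rt{TLinFld} nor \rt{TNoLFld} is applicable.

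For the inductive cases (\rt{IfCErr}, \rt{FldCErr}, \rt{CallCErr}, \rt{RetCErr}, \rt{SeqCErr}, \rt{SwCErr}) the argument is uniform: I invert the typing of the composite expression (using \rt{TIf}, \rt{TFld}, \rt{TCallF}/\rt{TCallP}, \rt{TRet}, \rt{TSeq}, \rt{TSwF}/\rt{TSwP} respectively) to extract a typing derivation for the inner expression $e$. I also need the corresponding inversion on \vdhash (rules \rt{WTP-If}, \rt{WTP-Fld}, \rt{WTP-Mthd}, \rt{WTP-Ret}, \rt{WTP-Seq}, \rt{WTP-Sw}) to pass the parameter-stack compatibility to the subterm, and on \rt{WTE} to see that $\sf{objects}(e)$ is typed in $envT_O$ (augmented or restricted as appropriate). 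With these pieces I rebuild a \rt{WTC} judgement for $\<h,env_S,e\>$ and invoke the induction hypothesis, which says the inner configuration cannot \err, contradicting the premise of the composite error rule.

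The main obstacle I anticipate is bookkeeping in the compound cases rather than any deep idea: in particular \rt{RetCErr}, where the stack changes shape ($env_S\cdot(o,s)$) and the corresponding inversion on $envT_S\cdot(o',[x\mapsto t])$ via \rt{WTP-Ret} must be handled carefully; and the \rt{SwC} case, where \rt{TSwF}/\rt{TSwP} type the scrutinee separately from the branches, so only the scrutinee derivation is relevant to reconstruct \rt{WTC} for the sub-configuration. A minor subtlety is the \rt{TCallF}/\rt{TCallP} case of \rt{CallCErr}, where the inner expression $e$ is the argument (already a value-producing expression) rather than the method body, so no \rt{TRet} reasoning is needed at this step. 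Once the inversion lemmas are in place the induction goes through mechanically, yielding $\<h,env_S,e\>\noterr$ in every case.
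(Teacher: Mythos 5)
Your proposal is correct and follows essentially the same route as the paper: induction on the derivation of $\err$, using the \rt{WTH}/\rt{WTP} compatibility between static and run-time type information to refute each base error rule, and well-typedness of the sub-configuration plus the induction hypothesis for the compound error rules. The only difference is presentational: where you re-derive sub-configuration well-typedness by inverting \rt{WTC}, the typing rules and the \vdhash rules, the paper simply invokes its Lemma~\ref{lemma:subexpression_welltyped} (and a return-specific variant for \rt{RetCErr}), which packages exactly that argument.
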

\begin{proof}
Induction in the structure of $\err$.

\end{proof}

\subsection{Type safety}

We can now state our main theorem: that a well-typed configuration will never experience the run-time errors captured by $\err$. This is a direct consequence of Theorem \ref{lemma:subject_reduction}
 and Lemma \ref{thm:errorfreedom}.

\begin{theorem}[Safety]
\label{thm:safety}
If $\Lambda;\Delta \vddash \<h, env_S, e\> : t \triangleright \Lambda';\Delta'$ and $\<h, env_S, e\>\rightarrow^{*}\<h', env_S', e'\>$ then $\<h', env_S', e'\>\noterr$
\end{theorem}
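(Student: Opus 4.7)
The plan is to prove Theorem \ref{thm:safety} by a straightforward induction on the length $n$ of the reduction sequence $\<h, env_S, e\> \rightarrow^{*} \<h', env_S', e'\>$, using Theorem \ref{lemma:subject_reduction} (subject reduction) to propagate well-typedness through each transition and invoking Lemma \ref{thm:errorfreedom} at the endpoint. The crucial observation is that the two supporting results, read together, fit exactly: subject reduction guarantees that a well-typed configuration stays well-typed after a step (with possibly new environments but the same ``type-of'' witness $t$), and error freedom says that every well-typed configuration is immediately free from the run-time errors captured by $\err$.

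For the base case $n=0$ we have $\<h', env_S', e'\> = \<h, env_S, e\>$, which is well-typed by hypothesis, so Lemma \ref{thm:errorfreedom} yields $\<h', env_S', e'\> \noterr$ directly. For the inductive step, suppose the property holds for all reduction sequences of length $n$, and decompose the given $(n{+}1)$-step reduction as
\[
  \<h, env_S, e\> \trans \<h_1, env_S^1, e_1\> \rightarrow^{*} \<h', env_S', e'\>.
\]
Applying Theorem \ref{lemma:subject_reduction} to the first step, from $\Lambda;\Delta \vddash \<h, env_S, e\> : t \triangleright \Lambda';\Delta'$ we obtain fresh typing environments $\Lambda^N, \Delta^N$ and $\Lambda''$ such that $\Lambda^N;\Delta^N \vddash \<h_1, env_S^1, e_1\> : t \triangleright \Lambda'';\Delta'$. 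The intermediate configuration is therefore well-typed, so the induction hypothesis applied to the remaining $n$-step reduction gives $\<h', env_S', e'\> \noterr$, as desired.

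The main delicacy, rather than any deep obstacle, is bookkeeping: subject reduction does not preserve the starting environments $\Lambda,\Delta$ pointwise but only the existence of \emph{some} well-typing witnesses at each successor configuration (and equality of $\Lambda'$ and $\Lambda''$ on the active object). The induction must therefore be formulated over the existential form ``$\exists \Lambda,\Delta.\ \Lambda;\Delta \vddash c : t \triangleright \Lambda';\Delta'$'' rather than over a fixed environment pair; once phrased this way, the witnesses produced by Theorem \ref{lemma:subject_reduction} at each step feed directly into the next instance of the hypothesis, and the chain closes without further work. No case analysis on the structure of $e$ or on the reduction rules is needed here, since that analysis has been absorbed into the proofs of Theorem \ref{lemma:subject_reduction} and Lemma \ref{thm:errorfreedom}.
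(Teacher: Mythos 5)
Your proposal is correct and follows essentially the same route as the paper, which simply states that safety is a direct consequence of Theorem \ref{lemma:subject_reduction} and Lemma \ref{thm:errorfreedom}; your induction on the length of the reduction sequence, with the existential reformulation of the typing hypothesis to thread the witnesses produced by subject reduction, is exactly the argument the paper leaves implicit.
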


\section{Implementation}
\label{sec:implementation}

A prototype of the type system for the presented version of \mungo has been implemented in Haskell. The language supported by the implementation is the language presented in this report, extended with integers and boolean expressions. These extensions does not introduce any interesting challenges in the type system, and as such is not presented in this work. The implementation is available at \texttt{\url{https://github.com/MungoTypesystem/Mungo-Typechecker}}.

The implementation has three main components, the parser, the type system and an inference module for usages. The inference module can infer usages for classes based on the usages of its fields. A description of the method for usage inference is outside the scope of this report, and will be presented elsewhere. In this presentation we focus on the type system.

\subsection{A \mungo Program}
A \mungo program consists of a number of enum declarations, followed by a number of class declarations, as illustrated in Listing \ref{lst:mungoprogram}. A class declaration can be seen on lines 6-18, and consists of the class usage, followed by the fields and methods. As shown in line 8, the syntax for usages is slightly adapted for writability in the implementation and is on the form $u[X_i = u_i]_{i \in I}$ rather than the usual $u^{\vv{E}}$.

\begin{lstlisting}[language=mungo, label={lst:mungoprogram}, caption={Structure of a \mungo program in the implementation}]
enum State {
    l1 
    l2
}

class<A[b]> C {
    // Usage
    {m; <l1 : end l2 : {n; X}>}[X = {o; end   m; X}]

    // Fields
    bool f1
    OtherClass f2
    
    // Methods
    State m(void x) { ...; l1 }
    A[b] n(A[b] x) { ...; x }
    void o(void x) { ... }
}
\end{lstlisting} 

\subsection{Type Checking}
The implementation of the type checking of \mungo programs follows the structure of the type system presented in Section \ref{sec:typesystem}. Type checking of a class follows the defined usage and a field typing environment is updated in accordance with the bodies of the methods declared by the usage. It is ensured that following the usage of a class results in a terminated field typing environment, and if this is not the case for a class in the program, this is reported to the programmer. Further work the implementation includes translating real-life Java programs to \mungo, and verify their correctness and the absence of errors. 
\section{Conclusions and future work}
\label{sec:discussion}

In this paper we present a behavioural type system for the \mungo language that has also been implemented in the form of a type-checker.

In the system, each class is annotated with usages that describe the protocol to be followed by method calls to object instances of the class. Moreover, object fields that are references in the form of objects can only be used in a linear fashion. The type system extends that of \cite{KouzapasEtal}, and the soundness results that we have established provide a formal guarantee that a well-typed program will satisfy two important properties: That null dereferencing does not occur and that objects complete the protocol that their usage annotations require. Here, behavioural types are essential, as they allow the type of a field to evolve to $\bot$, the only type inhabited by the null value $\textsf{null}$. This is in contrast to most type systems for Java like languages that do not let types evolve during a computation and overload $\textsf{null}$ to have any type.


Our notion of generics is similar to that of universal types from the typed $\lambda$-calculus, and does not allow for bounds to be placed on the typing parameters. On one hand this creates a limitation for the use of such generics, since class parameters of type $\alpha[\beta]$ cannot be used for method calls, as there is no knowledge of the actual type when typechecking the class. As it stands now, the generics can be used to type collections such as the classes in \texttt{java.util.Collections}.

To be able to type a larger subset of Java, than what \mungo currently allows, further work also includes adding inheritance to the language in a type-safe manner. Inheritance is common in object oriented programming, and would allow Mungo to be used for a larger set of programs. This is particularly important, since classes in languages like Java always implicitly inherit from the class \texttt{Object}. 
However, Amin and Tate have shown that the type system of Java \cite{Amin2016}, which uses bounds on type parameters in definitions of generic classes, is unsound. Moreover, Grigore has shown that type checking in the presence of full subtyping is undecidable \cite{Grigore2017}. Therefore, in further work, we need to be extremely careful when introducing subtyping into our system. 

Our present type system requires a non-aliasing property of fields; it would be highly desirable to also deal with this limitation of the system. A possible approach would be to use ideas from the work on behavioural separation of Caires and Seco \cite{DBLP:conf/popl/CairesS13} and on SHAPES \cite{DBLP:journals/corr/abs-1901-08006} by Franco et al.


\bibliographystyle{plain}

\newpage


\appendix
\section{Appendix: Full proofs from \emph{Behavioural Types for Memory and Method Safety in Java}}
In this appendix we include the full proofs for the properties
described in the main paper. 

\subsection{Weakening proof}
\label{app:weakening}
\tagthm{32}

The weakening lemma tells us we can expand the context of a typing judgment for expressions without affecting its validity. The context we expand is the three environments; $\Lambda$, $envT_O$, and $envT_S$. Note that we require the active object of $envT_S$ to remain the active object, hence we cannot add elements to the bottom of this environment. We use weakening in the subject reduction proof specifically the cases \rt{CallF} and \rt{CallP} where we need to extend the environments $\Lambda$ and $envT_S$, used when the method body was typed in \rt{TCbr}, with the additional elements of the current context for the method body.  

\begin{lemma}{Weakening Lemma.}
\label{lemma:weakening}
Suppose $\Lambda;envT_O \cdot envT_S \vddash e : t' \triangleright \Lambda';envT_O' \cdot envT_S'$, $o \not\in \text{dom}(\Lambda)$ and $o' \not\in \text{dom}(envT_O)$, then $\Lambda, o \mapsto (C\<t\>, envT_F);(envT_O, o' \mapsto t'') \cdot (o'', S) \cdot envT_S \vddash e : t' \triangleright \Lambda', o \mapsto (C\<t\>, envT_F); (envT_O', o' \mapsto t'') \cdot (o'', S) \cdot envT_S'$.
\end{lemma}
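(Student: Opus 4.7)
The plan is to proceed by structural induction on the derivation of $\Lambda;envT_O \cdot envT_S \vddash e : t' \triangleright \Lambda';envT_O' \cdot envT_S'$. For every rule the strategy is the same: apply the induction hypothesis to each subderivation, threading through the same extensions $o \mapsto (C\<t\>, envT_F)$ in $\Lambda$, $o' \mapsto t''$ in $envT_O$, and the inserted frame $(o'', S)$ between $envT_O$ and $envT_S$, then reinvoke the original rule. Since $o \notin \dom{\Lambda}$ and $o' \notin \dom{envT_O}$, the extended environments agree with the originals on every position that the rules actually consult or update, so every lookup and functional update behaves identically under weakening.

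First I would dispatch the base cases. The value-typing rules \rt{TLit}, \rt{TVoid}, \rt{TBool}, \rt{TBot} are immediate as they leave all environments unchanged. The reference rules \rt{TObj}, \rt{TLinPar}, \rt{TNoLPar}, \rt{TLinFld}, \rt{TNoLFld} inspect either the top binding of $envT_O$, the active (bottom-most) element of the parameter stack, or the entry of $\Lambda$ indexed by the active object; none of these positions coincide with $o$, $o'$, or the inserted frame, so the rules apply unchanged and the resulting environments carry the extensions in the prescribed positions. The rules \rt{TNew} and \rt{TNewGen} require no environment consultation.

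For the compound cases \rt{TSeq}, \rt{TIf}, \rt{TSwF}, \rt{TSwP}, \rt{TCallF}, \rt{TCallP}, \rt{TFld}, \rt{TLab}, and \rt{TCon}, the step is mechanical: the induction hypothesis yields weakened versions of each premise, and the side conditions involving functional updates $\Lambda'\{o_a.f \mapsto \ldots\}$ or active-frame updates $(o_a, [x \mapsto \ldots])$ concern only the active object $o_a \ne o$ and its parameter binding, which continues to reside at the bottom of the stack. Reassembling the rule with the extended environments delivers the weakened conclusion. The rules \rt{TLab} and \rt{TCon} additionally need the observation that the label environment $\Omega$, which records snapshots of $(\Lambda, \Delta)$ at loop entry, can be weakened consistently by extending all its image pairs in the same manner.

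The principal obstacle is \rt{TRet}, whose premise yields a stack $\Delta' = \Delta'' \cdot (o_{\text{in}}, [x \mapsto t_{\text{term}}])$ and whose conclusion pops the bottom frame, reattaching the outer frame $(o,S)$ from the antecedent. I need to verify that the inserted frame $(o'', S)$, which by hypothesis sits immediately to the right of $envT_O$ and therefore far from the active end, survives this pop. Because the induction hypothesis positions the inserted frame at the very same offset within the premise's stack, and the pop acts solely on the rightmost (active) frame, the inserted frame remains in place in the conclusion. A uniform invariant stating \emph{the inserted frame is always the topmost parameter frame} propagates through every rule, and together with the fresh-name side conditions for $o$ and $o'$ this completes the induction.
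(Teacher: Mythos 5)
Your proposal is correct and follows essentially the same route as the paper's proof: structural induction on the typing derivation, using the freshness of $o$ and $o'$ so that every lookup and update (which only touch the active object and the bottom-most stack frame) is unaffected by the extensions, and observing in the \rt{TRet} case that the inserted frame, sitting at the top of the parameter stack, survives the pop of the active frame before $(o,S)$ is reattached. Your remark that the label environment $\Omega$ must be weakened consistently in the \rt{TLab}/\rt{TCon} cases is, if anything, slightly more careful than the paper, which keeps $\Omega$ unchanged and dismisses that case as trivially satisfied.
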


\begin{proof}
Proof by induction in the structure of the derivation for expression type judgements. Most of the cases follow from applying the induction hypothesis to the premises of a rule, we therefore only include the cases that do not follow this approach.  

\begin{enumerate}[ncases]
    \item[\rt{TFld}] 
    \begin{subproof}
    
    \[\inference{C\<t''\>=\Lambda(o).\sf{class}\\\Lambda; \Delta \cdot (o,S) \vdash e:t' \triangleright \Lambda', o.f \mapsto t; \Delta' \cdot (o,S') & \neg \lin (t) & \sf{agree}(C\<t''\>.\sf{fields}_{\vv{D}}(f), t')}{\Lambda; \Delta \cdot (o,S) \vdash_{\vv{D}} f = e : \sf{void} \triangleright \Lambda' \{o.f \mapsto t'\}; \Delta' \cdot (o, S') } \]
    
    We assume that $\Lambda; \Delta \cdot (o,S) \vdash_{\vv{D}} f = e : \sf{void} \triangleright \Lambda' \{o.f \mapsto t'\}; \Delta' \cdot (o, S')$ is correct, $o'' \not\in \text{dom}(\Lambda)$ and $o^{(3)} \not\in \text{dom}(envT_O)$.
    
    We then show that 
    \begin{multline}
    \label{eq:tfldweak}
        \Lambda, o'' \mapsto (C'\<t^N\>, envT_F);  (envT_O, o^{(3)} \mapsto t^{(3)}) \cdot (o', S') \cdot envT_S \cdot (o,S) \vdash_{\vv{D}} \\ f = e : \sf{void} \triangleright \Lambda', o'' \mapsto (C'\<t^N\>, envT_F) \{o.f \mapsto t'\}; (envT_O', o^{(3)} \mapsto t^{(3)}) \cdot (o', S') \cdot envT_S' \cdot (o,S'')
    \end{multline}
    
    \begin{itemize}
        \item $C\<t''\>=\Lambda(o).\sf{class}$ from the definition of weakening we have that $o'' \not\in \text{dom}(\Lambda)$ hence $o \neq o''$ and
        \begin{equation} 
        \label{eq:tfldweak-2}
            (\Lambda, o'' \mapsto (C'\<t^N\>, envT_F))(o).\sf{class} = \Lambda(o).\sf{class} = C\<t''\>
        \end{equation}
        The premise is therefore satisfied from the assumption and (\ref{eq:tfldweak-2})
        \item $\lin(t)$ from the assumption.
        \item $\sf{agree}(C\< t'' \>.\sf{fields}_{\vv{D}}(f),t')$ from the assumption. 
        \item $ \Lambda, o'' \mapsto (C'\<t^N\>, envT_F);  (envT_O, o^{(3)} \mapsto t^{(3)}) \cdot (o', S') \cdot envT_S \cdot (o,S)  \vdash e:t' \triangleright \Lambda', o'' \mapsto (C'\<t^N\>, envT_F), o.f \mapsto t; (envT_O', o^{(3)} \mapsto t^{(3)}) \cdot  (o', S') \cdot envT_S' \cdot (o,S'')$ from the induction hypothesis.
    \end{itemize}
    
    We can now conclude (\ref{eq:tfldweak}).
    \end{subproof}
    
    \item[\rt{TIf}]
    \begin{subproof}
    \[\inference{
        \Lambda; \Delta \vddash e: \sf{Bool} \triangleright \Lambda''; \Delta'' & \Lambda''; \Delta'' \vddash e' : t \triangleright \Lambda'; \Delta' & \Lambda''; \Delta'' \vddash e'':t \triangleright \Lambda'; \Delta'
    }{
        \Lambda; \Delta \vddash \sf{if}\ (e)\ \{e'\} \ \sf{else}\ \{e''\}:t \triangleright \Lambda'; \Delta'
    }\]
    
    We assume that $\Lambda; \Delta \vddash \sf{if}\ (e)\ \{e'\} \ \sf{else}\ \{e''\}:t \triangleright \Lambda'; \Delta'$ is correct, $o \not\in \text{dom}(\Lambda)$ and $o' \not\in \text{dom}(envT_O)$.
    
    We then show that $\Lambda, o \mapsto (C\<t^N\>, envT_F); (envT_O, o' \mapsto t') \cdot (o'', S) \cdot envT_S \vddash \allowbreak \sf{if}\ (e)\ \{e'\} \ \allowbreak \sf{else}\ \{e''\}:t \triangleright \Lambda', o \mapsto (C\<t^N\>, envT_F); (envT_O', o' \mapsto t') \cdot (o'', S) \cdot envT_S')$
    
    \begin{itemize}
        \item $\Lambda, o \mapsto (C\<t^N\>, envT_F); (envT_O, o' \mapsto t') \cdot (o'', S) \cdot envT_S \vddash e: \sf{Bool} \triangleright \Lambda'', o \mapsto (C\<t^N\>, envT_F); (envT_O'', o' \mapsto t') \cdot (o'', S) \cdot envT_S''$ follows from the induction hypothesis.
        \item $\Lambda, o \mapsto (C\<t^N\>, envT_F); (envT_O'', o' \mapsto t') \cdot (o'', S) \cdot envT_S'' \vddash e': t \triangleright \Lambda', o \mapsto (C\<t^N\>, envT_F); (envT_O', o' \mapsto t') \cdot (o'', S) \cdot envT_S'$ follows from the induction hypothesis.
        \item $\Lambda, o \mapsto (C\<t^N\>, envT_F); (envT_O'', o' \mapsto t') \cdot (o'', S) \cdot envT_S'' \vddash e'': t \triangleright \Lambda', o \mapsto (C\<t^N\>, envT_F); (envT_O', o' \mapsto t') \cdot (o'', S) \cdot envT_S'$ follows from the induction hypothesis.
    \end{itemize}
    
    We can now conclude that $ \Lambda, o \mapsto (C\<t^N\>, envT_F); (envT_O, o' \mapsto t') \cdot (o'', S) \cdot envT_S \vddash \sf{if}\ (e)\ \{e'\} \ \sf{else}\ \{e''\}:t \triangleright  \Lambda', o \mapsto (C\<t^N\>, envT_F); (envT_O', o' \mapsto t') \cdot (o'', S) \cdot envT_S'$
    \end{subproof}
    \item[\rt{TSeq}]
    \begin{subproof}
    \[\inference{\Lambda;\Delta \vddash e : t \triangleright \Lambda''; \Delta'' & \neg \lin(t) & \Lambda'';\Delta'' \vddash e':t' \triangleright \Lambda';\Delta'}{\Lambda;\Delta \vddash e;e' : t' \triangleright \Lambda';\Delta'}\]
    
    We assume that $\Lambda;\Delta \vddash e;e' : t' \triangleright \Lambda';\Delta'$ is correct, $o'' \not\in \text{dom}(\Lambda)$ and $o^{(3)} \not\in \text{dom}(envT_O)$.
    
    We then show that
    \begin{multline}
    \label{eq:tseqweak}
    \Lambda, o'' \mapsto (C\<t^N\>, envT_F) ; (envT_O, o^{(3)} \mapsto t'') \cdot (o', S') \cdot envT_S \vddash \\ e;e' : t' \triangleright \Lambda', o'' \mapsto (C\<t^N\>, envT_F) ; (envT_O', o^{(3)} \mapsto t'') \cdot (o', S') \cdot envT_S'
    \end{multline}
    
    From our induction hypothesis we have:
    \begin{multline}
    \label{eq:tseqweak-2}
        \Lambda, o'' \mapsto (C\<t^N\>, envT_F) ; (envT_O, o^{(3)} \mapsto t'') \cdot (o', S') \cdot envT_S \vddash \\ e : t \triangleright \Lambda'', o'' \mapsto (C\<t^N\>, envT_F) ; (envT_O'', o^{(3)} \mapsto t'') \cdot (o', S') \cdot envT_S''
    \end{multline}
    
    From our assumption we have: $\neg \lin(t)$.
    
    From our induction hypothesis and (\ref{eq:tseqweak-2}) we have:
    \begin{multline}
    \label{eq:tseqweak-3}
    \Lambda'', o'' \mapsto (C\<t^N\>, envT_F) ; (envT_O'', o^{(3)} \mapsto t'') \cdot (o', S') \cdot envT_S'' \vddash \\ e':t' \triangleright \Lambda', o'' \mapsto (C\<t^N\>, envT_F) ; (envT_O', o^{(3)} \mapsto t'') \cdot (o', S') \cdot envT_S'
    \end{multline}
    
    From (\ref{eq:tseqweak-2}) and (\ref{eq:tseqweak-3}) we can conclude (\ref{eq:tseqweak}).
    
    \end{subproof}
    \item[\rt{TCallF}]
    \begin{subproof}
    \[ \inference{\Lambda;\Delta \cdot (o,S) \vdash e:t \triangleright \Lambda'\{o.f \mapsto C\<t''\>[\mathcal{U}]\};\Delta' \cdot (o,S') & \mathcal{U} \trans[m] \mathcal{W} & \\ t' \ m(t\ x)\{e'\} \in C\<t''\>.\sf{methods}_{\vv{D}}}{\Lambda; \Delta \cdot (o, S) \vddash f.m(e) : t' \triangleright \Lambda'\{o.f \mapsto C\<t''\>[\mathcal{W}]\}; \Delta' \cdot (o,S')} \]
    
    Assume that $\Lambda; \Delta \cdot (o, S) \vddash f.m(e) : t' \triangleright \Lambda'\{o.f \mapsto C\<t''\>[\mathcal{W}]\}; \Delta' \cdot (o,S')$ is correct, $o' \not\in \text{dom}(\Lambda)$ and $o'' \not\in \text{dom}(envT_O)$.
    
    We then show that
    \begin{multline}
    \label{eq:tcallfweak}
        \Lambda, o' \mapsto (C'\<t^N\>, envT_F); (envT_O, o'' \mapsto t^{(3)}) \cdot (o^{(3)}, S') \cdot envT_S \cdot (o, S) \vddash \\ f.m(e) : t' \triangleright \Lambda',o' \mapsto (C'\<t^N\>, envT_F)\{o.f \mapsto C\<t''\>[\mathcal{W}]\}; (envT_O', o'' \mapsto t^{(3)}) \cdot (o^{(3)}, S') \cdot envT_S' \cdot (o, S'')
    \end{multline}
    
    \begin{itemize}
        \item $\Lambda, o' \mapsto (C'\<t^N\>, envT_F); (envT_O, x' \mapsto t^{(3)}, (o', S') \cdot envT_S \cdot (o, S)) \vddash e:t \triangleright \Lambda',o' \mapsto (C'\<t^N\>, envT_F)\{o.f \mapsto C\<t''\>[\mathcal{W}]\}; (envT_O', x' \mapsto t^{(3)}) \cdot (o', S') \cdot envT_S' \cdot (o, S'')$ from the induction hypothesis.
        \item $U \trans[m] \W$ from the assumption
        \item $t' \ m(t\ x)\{e'\} \in C\<t''\>.\sf{methods}_{\vv{D}}$ from the assumption
    \end{itemize}
    
    We can now conclude (\ref{eq:tcallfweak}) is correct.
    \end{subproof}
    
    \item[\rt{TCallP}]
    \begin{subproof}
    \[\inference{\Lambda;\Delta \cdot (o, S) \vdash_{\vv{D}} e:t\triangleright \Lambda';\Delta' \cdot (o, [x\mapsto C\<t''\>[\mathcal{U}]]) & \mathcal{U}\trans[m] \mathcal{W} & \\ 
    t'\ m(t\ x')\{ e' \}\in C\<t''\>.\sf{methods}_{\vv{D}}}{\Lambda;\Delta\cdot(o,S) \vdash_{\vv{D}} x.m(e):t' \triangleright \Lambda';\Delta' \cdot (o,[x\mapsto C\<t''\>[\mathcal{W}]])}\]
    
    We assume that $\Lambda;\Delta\cdot(o,S) \vdash_{\vv{D}} x.m(e):t' \triangleright \Lambda';\Delta' \cdot (o,[x\mapsto C\<t''\>[\mathcal{W}]])$ is correct, $o'' \not\in \text{dom}(\Lambda)$ and $o^{(3)} \not\in \text{dom}(envT_O)$.
    
    We then show that 
    \begin{multline}
    \label{eq:tcallpweak}
        \Lambda, o'' \mapsto (C'\<t^N\>, envT_F); (envT_O, o^{(3)} \mapsto t^{(3)}) \cdot (o', S') \cdot envT_S \cdot(o,S) \vdash_{\vv{D}} x.m(e):t' \triangleright \\ \Lambda', o'' \mapsto (C'\<t^N\>, envT_F); (envT_O, o^{(3)} \mapsto t^{(3)}) \cdot (o', S') \cdot envT_S' \cdot(o,[x\mapsto C\<t''\>[\mathcal{W}]])
    \end{multline}
    
    \begin{itemize}
        \item $\Lambda, o'' \mapsto (C'\<t^N\>, envT_F); (envT_O, o^{(3)} \mapsto t^{(3)}) \cdot (o', S') \cdot envT_S \cdot(o,S) \vdash_{\vv{D}} e:t \triangleright \Lambda', o'' \mapsto (C'\<t^N\>, envT_F); (envT_O', o^{(3)} \mapsto t^{(3)}) \cdot (o', S') \cdot envT_S' \cdot (o,[x\mapsto C\<t''\>[\mathcal{U}]])$ from the induction hypothesis.
        \item $\mathcal{U}\trans[m] \mathcal{W}$ from the assumption.
        \item $t'\ m(t\ x')\{ e' \}\in C\<t''\>.\sf{methods}_{\vv{D}}$ from the assumption.
    \end{itemize}
    
    We can now conclude (\ref{eq:tcallpweak}).
    \end{subproof}
    \item[\rt{TSwF} \& \rt{TSwP}] 
    
    \begin{subproof}
    The proof for \rt{TSwF} and \rt{TSwP} are very similar. So below we only show the proof for \rt{TSwF}.
    \[\inference{
        \Lambda; \Delta \cdot (o, S) \vddash e:L \triangleright \Lambda'',o.f \mapsto C\<t\>[(\<l_i :u _i\>_{l_i\in L})^{\vv{E}}];\Delta'' \cdot (o,S'') \\
        \forall i. \Lambda'',o.f \mapsto C\<t\>[u_i^{\vv{E}}];\Delta'' \cdot (o,S'') \vddash e_i:t' \triangleright \Lambda';\Delta' \cdot (o, S')
    }{
    \Lambda; \Delta \cdot (o, S) \vddash \switchf{e}: t' \triangleright \Lambda'; \Delta' \cdot (o, S')
    }
    \]
        
    Assume that:
    $\Lambda; \Delta \cdot (o, S) \vddash \switchf{e} : t' \triangleright \Lambda'; \Delta' \cdot (o, S')$ where $\Delta = (envT_O \cdot envT_S)$, $o'' \not\in \text{dom}(\Lambda)$ and $o^{(3)} \not\in \text{dom}(envT_O)$.
    
    From the premise of \rt{TSwF} we have the following:
    
    \begin{itemize}
        \item $\Lambda; \Delta \cdot (o, S) \vddash e : L \triangleright \Lambda'', o.f\mapsto C\<t\>[(\<l_i : u_i\>_{l_i \in L})^{\vv{E}}]; \Delta'' \cdot (o, S'')$
        \item $\forall i.\ \Lambda'', o.f\mapsto C\<t\>[u_i^{\vv{E}}]; \Delta \cdot (o, S) \vddash e_i : t' \triangleright \Lambda'; \Delta' \cdot (o, S')$
    \end{itemize}
    
    We must show
    \begin{multline} 
    \label{eq:tswfweak}
    \Lambda, o'' \mapsto (C'\<t^N\>, envT_F); (envT_O, o^{(3)} \mapsto t^{(3)}) \cdot (o',S^{(3)}) \cdot envT_S \cdot (o, S) \vddash \\ \switchf{e} : t' \triangleright \Lambda', o''  \mapsto (C'\<t^N\>, envT_F); (envT_O', o^{(3)} \mapsto t^{(3)}) \cdot (o', S^{(3)}) \cdot envT_S' \cdot (o,S')
    \end{multline}
    
    From our induction hypothesis we have 
    \begin{multline}
    \label{eq:tswfweak-2}
        \Lambda, o'' \mapsto (C'\<t^N\>, envT_F); (envT_O, o^{(3)} \mapsto t^{(3)}) \cdot (o',S^{(3)}) \cdot envT_S \cdot (o, S) \vddash \\ e : L \triangleright \Lambda'', o'' \mapsto (C'\<t^N\>, envT_F); (envT_O'', o^{(3)} \mapsto t^{(3)}) \cdot (o', S^{(3)}) \cdot envT_S'' \cdot (o, S'')
    \end{multline}
    
    And from our induction hypothesis and (\ref{eq:tswfweak-2}) we have
    \begin{multline}
    \label{eq:tswfweak-3}
        \forall i.\ \Lambda'', o'' \mapsto (C'\<t^N\>, envT_F), o.f \mapsto C\<t\>[u_i^{\vv{E}}]; (envT_O, o^{(3)} \mapsto t^{(3)}) \cdot (o',S^{(3)}) \cdot envT_S \cdot (o, S) \vddash \\ e_i :t' \triangleright \Lambda', o'' \mapsto (C'\<t^N\>, envT_F); (envT_O', o^{(3)} \mapsto t^{(3)}) \cdot (o', S^{(3)}) \cdot envT_S' \cdot (o,S')
    \end{multline}
    
    From (\ref{eq:tswfweak-2}) and (\ref{eq:tswfweak-3}) we can conclude (\ref{eq:tswfweak}).
    \end{subproof}
    \item[\rt{TLinPar} \& \rt{TNoLPar}]
    
    \begin{subproof}
    The proof for \rt{TNoLPar} is similar to \rt{TLinPar}, hence we only show \rt{TLinPar}.
    We assume that  $\Lambda; \Delta \cdot (o, [x \mapsto t]) \vddash x : t \triangleright \Lambda;\Delta \cdot (o, [x \mapsto t])$ is correct, $o'' \not\in \text{dom}(\Lambda)$ and $o^{(3)} \not\in \text{dom}(envT_O)$.
    
    We show that
    \begin{multline}
    \label{eq:tlinparweak}
        \Lambda, o'' \mapsto (C\<t^N\>, envT_F); (envT_O, o^{(3)} \mapsto t', (o', S') \cdot envT_S \cdot (o,[x \mapsto t])) \vddash \\ x : t \triangleright \Lambda, o'' \mapsto (C\<t^N\>, envT_F); (envT_O, o^{(3)} \mapsto t', (o', S') \cdot envT_S \cdot (o,[x \mapsto t]))
    \end{multline}
    
    From our assumption we have that $\neg \sf{lin}(t)$, hence (\ref{eq:tlinparweak}) is correct.
    \end{subproof}
    \item[\rt{TLinFld} \& \rt{TNoLFld}] 
    
    \begin{subproof}
    The proofs for \rt{TLinFld} and \rt{TNoLFld} are very similar so we only show the former.
    Assume $\Lambda; \Delta \cdot (o, S) \vddash f : t \triangleright \Lambda\{o.f \mapsto  \bot\}; \Delta \cdot (o, S)$, $o' \not\in \text{dom}(\Lambda)$ and $o'' \not\in \text{dom}(envT_O)$, then from $\rt{TLinFld}$ we have that $t=\Lambda(o).f$ and $\lin(t)$.
    
    We must now show 
    \begin{multline}
        \label{eq:tlinfldweak}
        \Lambda, o' \mapsto (C\<t^N\>, envT_F); (envT_O, o'' \mapsto t') \cdot (o^{(3)}, S') \cdot envT_S \cdot (o, S) \vddash f : t \triangleright \\
        (\Lambda\{o.f \mapsto t\}), o' \mapsto (C\<t^N\>, envT_F); (envT_O, o'' \mapsto t') \cdot (o^{(3)}, S') \cdot envT_S \cdot (o, S)
    \end{multline}
    
    Since $o' \not\in \text{dom}(\Lambda)$ we can see that $((\Lambda\{o.f \mapsto t\}), o' \mapsto (C\<t^N\>, envT_F))(o).f = (\Lambda\{o.f \mapsto t\})(o).f=t$, and from our assumption, we know that $\lin(t)$, hence we can conclude that (\ref{eq:tlinfldweak}) is satisfied.
    \end{subproof}

    \item[\rt{TRet}]
    \begin{subproof}
    
    \[\inference{\Lambda; \Delta \vddash e:t \triangleright \Lambda';\Delta' & \Delta' = \Delta'' \cdot (o', [x \mapsto t']) & \sf{terminated(t')}}{\Lambda; \Delta \cdot (o,S) \vddash \sf{return}\{ e \} : t \triangleright \Lambda';\Delta'' \cdot (o,S)}\]
    
    We assume that $\Lambda; \Delta \cdot (o,S) \vddash \sf{return}\{ e \} : t \triangleright \Lambda';\Delta'' \cdot (o,S)$, $o^{(3)} \not\in \text{dom}(\Lambda)$ and $o^{(4)} \not\in \text{dom}(envT_O)$. We let $\Delta=envT_S \cdot envT_O$.
    
    We then show that:
    \begin{multline}
    \label{eq:tretweak}
    \Lambda, o^{(3)} \mapsto (C\<t^N\>, envT_F); (envT_O, o^{(4)} \mapsto t'') \cdot (o'',S') \cdot envT_S \cdot (o,S) \vddash \\ \sf{return}\{ e \} : t \triangleright \Lambda', o^{(3)} \mapsto (C\<t^N\>, envT_F); (envT_O', o^{(4)} \mapsto t'') \cdot (o'',S') \cdot envT_S' \cdot (o,S)
    \end{multline}
    
    From our induction hypothesis:
    \begin{multline}
    \label{eq:tretweak-2}
    \Lambda, o^{(3)} \mapsto (C\<t^N\>, envT_F); (envT_O, o^{(4)} \mapsto t'') \cdot (o'',S') \cdot envT_S \vddash \\ e : t \triangleright \Lambda', o^{(3)} \mapsto (C\<t^N\>, envT_F); (envT_O', o^{(4)} \mapsto t'') \cdot (o'',S') \cdot envT_S'
    \end{multline}
    
    
    From our assumption we have: $\sf{terminated}(t')$ and from (\ref{eq:tretweak-2}) we can now conclude (\ref{eq:tretweak}).
    \end{subproof}
    
    \item[\rt{TObj}]
    \begin{subproof}
    
    \[\inference{envT_O = envT_O', o \mapsto t}{\Lambda; envT_O \cdot envT_S \vddash o:t \triangleright \Lambda; envT_O' \cdot envT_S}\]
    
    We assume that $\Lambda; envT_O \cdot envT_S \vddash o:t \triangleright \Lambda; envT'_O \cdot envT_S$, $o' \not\in \text{dom}(\Lambda)$ and $o'' \not\in \text{dom}(envT_O)$.
    
    We then show that
    \begin{multline}
    \label{eq:tobjweak}
        \Lambda, o' \mapsto (C\<t^N\>, envT_F); (envT_O, o'' \mapsto t') \cdot (o^{(3)}, S) \cdot envT_S \vddash \\ o : t \triangleright \Lambda, o' \mapsto (C\<t^N\>, envT_F); (envT'_O, o'' \mapsto t') \cdot (o^{(3)}, S) \cdot envT_S
    \end{multline}
    
    From our assumption we have that
    \begin{equation}
    \label{eq:tobjweak-2}
        (envT_O, o'' \mapsto t') = (envT_O', o \mapsto t), o'' \mapsto t' = (envT_O', o'' \mapsto t), o \mapsto t
    \end{equation}
    Because $o \neq o''$.
    
    We can now conclude (\ref{eq:tobjweak}) is correct from (\ref{eq:tobjweak-2}).
    \end{subproof}
    
    \item[\rt{TCon}]
    \begin{subproof}
    
    Assume $\Lambda; \Delta \vddash^{\Omega'} \sf{continue}\ k:\sf{void} \triangleright \Lambda'; \Delta'$, $o \not\in \text{dom}(\Lambda)$ and $o' \not\in \text{dom}(envT_O)$. 
    
    Then we have $\Omega'=\Omega, k: (\Lambda, \Delta)$. We now show that $\Lambda, o \mapsto (C\<t^N\>, envT_F);(envT_O, o' \mapsto t) \cdot (o'', S) \cdot envT_S \vddash^{\Omega'} \sf{continue}\ k : \sf{void} \triangleright \Lambda', o \mapsto (C\<t^N\>, envT_F);(envT_O', o' \mapsto t) \cdot (o'', S) \cdot envT_S'$. This is trivially satisfied, as the premise for $\Omega'$ is still satisfied, and the environments are not changed in the rule.
    \end{subproof}
    
\end{enumerate}

\end{proof}
\subsection{Well-typed sub-configurations proof}
\label{app:subexp}
\tagthm{35}

Well-typed sub-configurations tells us that we can type the first sub-configuration of a larger configuration using the same environments. We use this lemma in the subject reduction proof in all cases of composite expressions, except \rt{RetC} where a similar lemma is employed. This lemma allows us say that the sub-configuration before a small-step transition is well-typed in the same context as the overall configuration in a composite expression.

\begin{lemma}[Well-typedness of sub-configurations]
\label{lemma:subexpression_welltyped}

For any configuration $c$ on the following form:
\begin{itemize}
    \item $\<h, env_S, \sif{e}{e_1}{e_2}\>$
    \item $\<h, env_S, r.m(e)\>$
    \item $\<h, env_S, f = e\>$
    \item $\<h, env_S, e;e'\>$
    \item $\<h, env_S, \sf{switch}_{r.m}\ (e) \{l_i : e_i\}_{l_i \in L}\>$
\end{itemize}

We have that:

\[\exists \Lambda, \Delta . \ \Lambda;\Delta \vddash c : t \triangleright \Lambda';\Delta' \implies \Lambda;\Delta \vddash \<h, env_S, e\> : t' \triangleright \Lambda'';\Delta''\]
\end{lemma}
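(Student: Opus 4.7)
The proof proceeds by case analysis on the structural form of the composite expression constituting $c$, one case for each of the five listed forms. In every case the argument has the same shape: I invert the well-typedness of the enclosing configuration via rule \rt{WTC}, obtaining the five constituent judgements ($\Lambda \vddash h$, $envT_S \vdhash_c env_S$, $envT_O \vdhash c$, the typing derivation $\Lambda; envT_O \cdot envT_S \vddash c : t \triangleright \Lambda'; \Delta'$, and $\Lambda \vdhash \vv{D}$), and then re-derive \rt{WTC} for the sub-configuration $\<h, env_S, e\>$ using the same $\Lambda$ and $\Delta$.

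Two of the five \rt{WTC} premises persist without any work: $\Lambda \vddash h$ and $\Lambda \vdhash \vv{D}$ transfer verbatim because the heap and program definition are identical in the sub-configuration. For the parameter-stack condition $envT_S \vdhash_e env_S$ I invert the relevant \rt{WTP} rule in each case, namely \rt{WTP-If} for the conditional, \rt{WTP-Mthd} for the method call, \rt{WTP-Fld} for the assignment, \rt{WTP-Seq} for the sequential composition, and \rt{WTP-Sw} for the switch; in each of these rules the single premise is precisely $envT_S \vdhash_e env_S$, which is exactly what is needed. The expression-typing premise $\Lambda; envT_O \cdot envT_S \vddash e : t' \triangleright \Lambda''; \Delta''$ is obtained by inverting the corresponding typing rule for the composite form (\rt{TIf}, \rt{TCallF} or \rt{TCallP}, \rt{TFld}, \rt{TSeq}, \rt{TSwF} or \rt{TSwP}); each of these rules has, as its leftmost premise, a typing judgement for $e$ starting from the initial $\Lambda$ and $\Delta$, supplying exactly the required derivation (for \rt{TSeq} and the branching cases one uses the first premise, and the additional condition $\lnot \lin(t)$ or the branch-typing premises are not needed for the sub-configuration obligation).

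The remaining obligation, and the main obstacle, is re-establishing $envT_O \vdhash e$. By inversion of \rt{WTE} on the composite expression we have $\text{dom}(envT_O) = \sf{objects}(c)$ and $envT_O(o) = \sf{getType}(o,h)$ for every $o$ in its domain, whereas for the sub-configuration \rt{WTE} requires $\text{dom}(envT_O) = \sf{objects}(e)$. The typing correctness of each individual binding is inherited immediately because the heap is unchanged, so the whole burden reduces to showing $\sf{objects}(c) = \sf{objects}(e)$. For the cases $f = e$ and $r.m(e)$ this is an immediate consequence of the definition of $\sf{objects}$, since neither a field name nor a parameter/field reference contributes any run-time object occurrence. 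For the conditional, sequential, and switch forms the equality follows from well-formedness of the configuration: by the call-by-value operational discipline together with conditions (1) and (3) in the definition of well-formed expressions, the unreduced alternatives ($e_1, e_2$ of if, $e'$ of seq, the $e_i$ of switch) cannot contain object occurrences while $e$ is still being reduced, because run-time objects enter an expression only through \rt{New} on the actively-reducing sub-term.

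With all five \rt{WTC} premises discharged, I reassemble them via \rt{WTC} to conclude $\Lambda; \Delta \vddash \<h, env_S, e\> : t' \triangleright \Lambda''; \Delta''$, as required. I expect the object-occurrence step above to be the only delicate part of the argument; the remaining steps are purely mechanical inversion of \rt{WTC} and of the matching typing and stack-typing rules.
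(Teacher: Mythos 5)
Your proposal is correct and follows essentially the same route as the paper's proof: inversion of \rt{WTC}, reuse of $\Lambda \vddash h$ and $\Lambda \vdhash \vv{D}$, the matching \rt{WTP} rule for the stack condition, the first premise of the corresponding typing rule for the sub-expression, and an $\sf{objects}(c)=\sf{objects}(e)$ argument via well-formedness for the $envT_O$ condition. The paper handles that last point with the same (similarly informal) appeal to well-formedness and evaluation order, so there is no substantive difference.
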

\begin{proof}
We prove this result by a case analysis of the typing rules for expression used to show that the expression in the configuration is well typed.

\begin{enumerate}[ncases]
    \item[\rt{IfC}]
    
    \begin{subproof}
    Assume that the configuration is well typed, that is: 
    
    $\Lambda;\Delta \vddash \<h,env_S, \sif{e}{e_1}{e_2}\> : t \triangleright \Lambda';\Delta'$ where $\Delta=envT_O\cdot envT_S$. We now show that $\<h, env_S, e\>$ is a well typed configuration.

From \rt{WTC} we know:
\begin{itemize}
    \item $\Lambda \vddash h$
    \item $envT_S \vdash^h_{\sif{e}{e_1}{e_2}} env_S$
    \item $envT_O \vhdash \sif{e}{e_1}{e_2}$
    \item $\Lambda;\Delta\vddash \sif{e}{e_1}{e_2} : t \triangleright \Lambda';\Delta'$
    \item $\Lambda \vhdash \vv{D}$
\end{itemize}

From \rt{TIf} we know that $\Lambda;\Delta \vddash e : \sf{Bool} \triangleright \Lambda'';\Delta''$.

We can show that $\Lambda;\Delta \vddash \<h, env_S, e\> : \sf{Bool} \triangleright \Lambda''; \Delta''$, by showing each premise of \rt{WTC}.
\begin{itemize}
    \item $\Lambda \vddash h$ follows from assumption
    \item $envT_S \vdash^{h}_e$ follows from the premise of \rt{WTP-If} that we know is true from the assumptions.
    \item $envT_O \vhdash e$ follows from our assumption, and that it is a well formed expression, hence no objects can be mentioned in neither $e_1$ nor $e_2$, before $e$ is evaluated completely, hence $\objects{\sif{e}{e_1}{e_2}}=\objects{e}$.
    \item $\Lambda;\Delta \vddash e : \sf{Bool} \triangleright \Lambda'';\Delta''$ follows directly from assumptions.
    \item $\Lambda \vhdash \vv{D}$ follows from assumption
\end{itemize}

All premises of \rt{WTC} are satisfied, hence we can conclude $\Lambda;\Delta \vddash \<h, env_S, e\> : \sf{Bool} \triangleright \Lambda'';\Delta''$.

    \end{subproof}
    
    \item[\rt{MthdC}]
    
    \begin{subproof}
    The case is split into two cases, one for parameter calls and one for field calls. We only show the case for fields, since the two cases are very similar.

Assume $\Lambda;\Delta \vddash \<h, env_S, f.m(e)\> : t' \triangleright \Lambda'; \Delta'$ where $\Delta = envT_O \cdot envT_S$.

From \rt{WTC} we know:
\begin{itemize}
    \item $\Lambda \vddash h$
    \item $envT_S \vdash^h_{f.m(e)} env_S$
    \item $envT_O \vhdash f.m(e)$
    \item $\Lambda;\Delta\vddash f.m(e) : t' \triangleright \Lambda';\Delta'$
    \item $\Lambda \vhdash \vv{D}$
\end{itemize}

We can now show that $\Lambda;\Delta\vddash \<h, env_S, e\> : t \triangleright \Lambda'\{o.f \mapsto C\<t''\>[\W]\};\Delta''\cdot(o, S')$

\begin{itemize}
    \item $\Lambda \vddash h$ follows from assumption
    \item $envT_S \vdash^h_e env_S$ follows from \rt{WTP-Mthd} premise
    \item $envT_O \vhdash e$ is trivial since $\objects{f.m(e)}=\objects{e}$
    \item $\Lambda;\Delta\vddash e : t \triangleright \Lambda'\{o.f \mapsto C\<t''\>[\W]\};\Delta''\cdot(o, S') $ follows from \rt{TCallF}
    \item $\Lambda \vhdash \vv{D}$ follows from assumption
\end{itemize}

Hence we have $\Lambda;\Delta\vddash \<h, env_S, e\> : t \triangleright \Lambda'\{o.f \mapsto C\<t''\>[\W]\};\Delta''\cdot(o, S')$.

    \end{subproof}
    
    \item[\rt{FldC}]
    
    \begin{subproof}
    Assume that the configuration is well typed, that is: $\Lambda;\Delta\vddash\<h, env_S, f = e\> : \sf{void} \triangleright \Lambda'; \Delta'$, where $\Delta = envT_O \cdot envT_S$. 

From \rt{WTC}:
\begin{itemize}
    \item $\Lambda \vddash h$
    \item $envT_S \vdash^h_{f = e} env_S$
    \item $envT_O \vhdash f = e$
    \item $\Lambda;\Delta\vddash f = e : \sf{void} \triangleright \Lambda';\Delta'$
    \item $\Lambda \vhdash \vv{D}$
\end{itemize}

From \rt{TFld} we know $\Lambda;\Delta \vddash e : t \triangleright \Lambda\{o.f \mapsto t\};\Delta''\cdot(o, S)$. We can now show that $\Lambda;\Delta\vddash \<h, env_S, e\> : t \triangleright \Lambda\{o.f \mapsto t\};\Delta''\cdot(o, S)$ is a well typed configuration, by showing the premises of \rt{WTC}.

\begin{itemize}
    \item $\Lambda \vddash h$ follows from assumption
    \item $envT_S \vdash^h_e env_S$ follows from the premise of \rt{WTP-Fld}
    \item $envT_O \vhdash e$, trivial since $\objects{f = e} = \objects{e}$.
    \item $\Lambda;\Delta \vddash e : t \triangleright \Lambda'\{o.f \mapsto t\};\Delta''\cdot(o, S)$ follows from \rt{TFld}
    \item $\Lambda \vhdash \vv{D}$ follows from assumption
\end{itemize}
All premises of \rt{WTC} are satisfied, hence we can conclude $\Lambda;\Delta \vddash \<h, env_S, e\> : t \triangleright \Lambda'\{o.f \mapsto t\};\Delta''\cdot (o, S)$.

    \end{subproof}
    
    \item[\rt{SeqC}]
    
    \begin{subproof}
    Assume that $\Lambda;\Delta \vddash \<h, env_S, e;e'\> : t' \triangleright \Lambda'; \Delta'$. where $\Delta = envT_O \cdot envT_S$.

From \rt{WTC} we know:
\begin{itemize}
    \item $\Lambda \vddash h$
    \item $envT_S \vdash^h_{e;e'} env_S$
    \item $envT_O \vhdash e;e'$
    \item $\Lambda;\Delta\vddash e;e' : t' \triangleright \Lambda';\Delta'$
    \item $\Lambda \vhdash \vv{D}$
\end{itemize}

Then we can show the premises of \rt{WTC} for $\Lambda;\Delta \vddash \<h, env_S, e\> : t \triangleright \Lambda''; \Delta''$.

\begin{itemize}
    \item $\Lambda \vddash h$
    \item $envT_S \vdash^h_e env_S$ follows from the premise of \rt{WTP-Seq}
    \item $envT_O \vhdash e$ is satisfied, since because $e;e'$ is well formed, we know that $\objects{e;e'}=\objects{e}$. 
    \item $\Lambda;\Delta \vddash e : t \triangleright \Lambda'';\Delta''$ follows from \rt{TSeq}.
    \item $\Lambda \vhdash \vv{D}$ follows from assumption
\end{itemize}

Hence we have $\Lambda;\Delta \vddash \<h, env_S, e\> : t \triangleright \Lambda'';\Delta''$.
    \end{subproof}
    
    \item[\rt{SwC}]
    
    \begin{subproof}
    The proof is split into two cases, one for parameters and one for fields. Since the two cases are very similar, we only show the case for parameters.

Assume $\Lambda;\Delta \vddash \<h, env_S, \sf{switch}_{x.m}\ (e) \{l_i : e_i\}_{l_i \in L}\> : t' \triangleright \Lambda'; \Delta'$ where $\Delta = envT_O \cdot envT_S$.

From \rt{WTC} we know:
\begin{itemize}
    \item $\Lambda \vddash h$
    \item $envT_S \vdash^h_{\sf{switch}_{x.m}\ (e) \{l_i : e_i\}_{l_i \in L}} env_S$
    \item $envT_O \vhdash \sf{switch}_{x.m}\ (e) \{l_i : e_i\}_{l_i \in L}$
    \item $\Lambda;\Delta\vddash \sf{switch}_{x.m}\ (e) \{l_i : e_i\}_{l_i \in L} : t' \triangleright \Lambda';\Delta'$
    \item $\Lambda \vhdash \vv{D}$
\end{itemize}

We then show $\Lambda;\Delta\vddash \<h, env_S, e\> : L \triangleright \Lambda''; \Delta''$.

\begin{itemize}
    \item $\Lambda \vddash h$ follows from assumption
    \item $envT_S \vdash^h_e env_S$ follows from \rt{WTP-Sw}
    \item $envT_O \vhdash e$ follows since $\sf{switch}_{x.m}\ (e) \{l_i : e_i\}_{l_i \in L}$ is well formed, hence
    
    $\objects{\sf{switch}_{x.m}\ (e) \{l_i : e_i\}_{l_i \in L}} = \objects{e}$
    \item $\Lambda;\Delta\vddash e : L \triangleright \Lambda''; \Delta''$ follows from \rt{TSwP}
    \item $\Lambda \vhdash \vv{D}$ follows from assumption
\end{itemize}

Hence we have $\Lambda;\Delta\vddash \<h, env_S, e\> : L \triangleright \Lambda''; \Delta''$
    \end{subproof}
\end{enumerate}
\end{proof}

\subsection{Object agreement proof}
\label{app:object_agreement}
\tagthm{39}
Here we present results about typing expressions in the same environments. These results are used when typing branching expressions, where all branches are typed using the same environments.

\begin{proposition}
\label{lemma:object_start}
If $\Lambda;envT_O \cdot envT_S \vddash e : t \triangleright \Lambda';envT_O'\cdot envT_S'$ and $o\in \objects{e}$ then $o \in \dom{envT_O}$ and $o \not\in \dom{envT_O'}$.
\end{proposition}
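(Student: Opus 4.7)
The plan is to prove the statement by structural induction on the derivation of the typing judgement $\Lambda; envT_O \cdot envT_S \vddash e : t \triangleright \Lambda'; envT_O' \cdot envT_S'$. The key preliminary observation is that no typing rule in the system introduces new object identifiers into the object type environment; only \rt{TObj} modifies it, and it does so by removing an entry. From this I would establish, by a short separate induction, a monotonicity property: $\dom{envT_O'} \subseteq \dom{envT_O}$ whenever $\Lambda; envT_O \cdot envT_S \vddash e : t \triangleright \Lambda'; envT_O' \cdot envT_S'$ is derivable. This will let me propagate membership (or non-membership) through the initial/final boundary of any sub-derivation.

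The base cases then split cleanly. The rule \rt{TObj} is the only base case in which $\objects{e}$ is non-empty: here $e = o$, $\objects{e} = \{o\}$, and the premise $envT_O = envT_O', o \mapsto t$ immediately yields $o \in \dom{envT_O}$ and $o \not\in \dom{envT_O'}$. For every other axiom — \rt{TLit}, \rt{TVoid}, \rt{TBool}, \rt{TBot}, \rt{TLinPar}, \rt{TNoLPar}, \rt{TLinFld}, \rt{TNoLFld}, \rt{TNew}, \rt{TNewGen}, \rt{TCon} — we have $\objects{e} = \emptyset$, so the implication holds vacuously.

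For inductive cases in which the rule has premises typing sub-expressions $e_1, \ldots, e_n$ chained through object environments $envT_O \to envT_O^{(1)} \to \cdots \to envT_O^{(n)} = envT_O'$, well-formedness of $e$ ensures $\objects{e} \subseteq \bigcup_i \objects{e_i}$, so any $o \in \objects{e}$ lies in some $\objects{e_j}$. The induction hypothesis applied to the $j$-th premise yields $o \in \dom{envT_O^{(j-1)}}$ and $o \not\in \dom{envT_O^{(j)}}$; monotonicity then gives $o \in \dom{envT_O^{(j-1)}} \subseteq \dom{envT_O}$ and $o \not\in \dom{envT_O^{(j)}} \supseteq \dom{envT_O'}$. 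This template dispatches \rt{TSeq}, \rt{TFld}, \rt{TCallF}, \rt{TCallP}, \rt{TRet}, and \rt{TLab} uniformly.

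The one point requiring a little more care — and the likely main obstacle — is the branching rules \rt{TIf}, \rt{TSwF}, and \rt{TSwP}, where two or more branches start at a common intermediate environment $envT_O''$ and end at a common $envT_O'$. If $o \in \objects{e_i}$ for some branch $i$, the induction hypothesis applied to that branch's premise gives $o \in \dom{envT_O''}$ and $o \not\in \dom{envT_O'}$; monotonicity applied to the condition's derivation extends $\dom{envT_O''} \subseteq \dom{envT_O}$, closing the argument. The apparent asymmetry that a different branch might not syntactically mention $o$ does not affect the claim, since we only reason about the outer $envT_O$ and $envT_O'$ rather than the bookkeeping of individual branches; the consistency of final environments across branches is something the rule itself already imposes and is not something this proposition needs to re-establish.
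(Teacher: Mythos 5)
Your proposal is correct and follows essentially the same route as the paper: induction on the typing derivation, with \rt{TObj} as the only non-vacuous base case and all remaining rules dispatched via the induction hypothesis. The explicit monotonicity lemma $\dom{envT_O'} \subseteq \dom{envT_O}$ that you use to push membership facts from intermediate to outer environments is precisely the bookkeeping the paper leaves implicit in its phrase ``in the remaining cases, it follows directly from the induction hypothesis,'' so your version is, if anything, the more complete of the two.
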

\begin{proof}
We proceed with proof by induction in the type rules. In the cases \rt{TNew}, \rt{TNewGen}, \rt{TLit}, \rt{TVoid}, \rt{TBool}, \rt{TBot}, \rt{TLinPar}, \rt{TNoLPar}, \rt{TLinFld} and \rt{TNoLFld} this is trivially true, since $\objects{e}$ is empty. 

In the case of \rt{TObj} we can see that if $\Lambda;envT_O \cdot envT_S \vddash o : t \triangleright \Lambda;envT_O' \cdot envT_S$, then $envT_O = envT_O', o \mapsto t$. From this definition we can see that $o \in \dom{envT_O}$ but $o \not\in\dom{envT_O'}$, hence the proposition is satisfied. 

In the remaining cases, it follows directly from the induction hypothesis.
\end{proof}

\begin{proposition}
\label{lemma:object_end}
If $\Lambda;envT_O \cdot envT_S \vddash e : t \triangleright \Lambda';envT_O'\cdot envT_S'$, $o\in \dom{envT_O}$ and $o \not\in \objects{e}$ then $o \in \dom{envT_O'}$.
\end{proposition}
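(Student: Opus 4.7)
The plan is to proceed by induction on the derivation of the typing judgement $\Lambda;envT_O \cdot envT_S \vddash e : t \triangleright \Lambda';envT_O'\cdot envT_S'$, mirroring the structure of the proof of Proposition~\ref{lemma:object_start}. The statement is essentially the contrapositive companion to that earlier result: rather than showing that every object mentioned in $e$ starts in $envT_O$ and gets consumed, we show that anything in $envT_O$ that is \emph{not} mentioned in $e$ is preserved.

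For the base cases \rt{TNew}, \rt{TNewGen}, \rt{TLit}, \rt{TVoid}, \rt{TBool}, \rt{TBot}, \rt{TLinPar}, \rt{TNoLPar}, \rt{TLinFld} and \rt{TNoLFld}, the conclusion $envT_O' = envT_O$ is immediate from the typing rules, so $o \in \dom{envT_O'}$ follows trivially. The interesting base case is \rt{TObj}: here $e = o''$ for some object $o''$, and the rule tells us $envT_O = envT_O', o'' \mapsto t$. Since $o \not\in \objects{e} = \{o''\}$, we have $o \neq o''$, and therefore $o \in \dom{envT_O')$ because removing $o''$ does not affect the binding for $o$.

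For the inductive cases (\rt{TFld}, \rt{TCallF}, \rt{TCallP}, \rt{TSeq}, \rt{TIf}, \rt{TSwF}, \rt{TSwP}, \rt{TRet}, \rt{TLab}), I would apply the induction hypothesis to each sub-derivation in sequence, threading the object $o$ through the intermediate environments. The key observation that makes this work is that for every typing rule in the system, $\objects{e'} \subseteq \objects{e}$ whenever $e'$ is a subexpression used in a premise; hence $o \not\in \objects{e}$ implies $o \not\in \objects{e'}$, and the induction hypothesis applies at each step. For rules with multiple branches (\rt{TIf}, \rt{TSwF}, \rt{TSwP}) the intermediate environment after typing the guard/scrutinee is shared across the branches, so a single application of the induction hypothesis to each sub-derivation suffices to propagate membership of $o$ all the way to the final $envT_O'$. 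For \rt{TCon} the conclusion is immediate since the environments on the right are constrained by $\Omega$ and inherit the membership of $o$ from the labelling in \rt{TLab}.

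The main obstacle is bookkeeping rather than insight: in the composite rules the environment $envT_O$ is updated in ways that interleave with the parameter stack $envT_S$, and one must verify that $o$, being in $envT_O$, is neither relocated nor confused with a stack entry. This is handled by noting that all rules only modify $envT_O$ by either consuming a binding via \rt{TObj} (which the induction hypothesis already controls) or leaving it untouched; no rule inserts or deletes an arbitrary $o \in \dom{envT_O}$ without it appearing in $\objects{e}$. Once this invariant is made explicit, all inductive cases reduce to a direct chaining of the induction hypothesis across premises.
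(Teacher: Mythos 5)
Your proof follows essentially the same approach as the paper's: induction on the typing derivation, with $envT_O' = envT_O$ in the value/reference base cases, the \rt{TObj} case resolved by noting $o \neq o''$ because $o \not\in \objects{e}$, and the remaining cases discharged by the induction hypothesis. The paper's own proof is in fact terser (it dispatches every inductive case with a one-line appeal to the induction hypothesis), so your additional bookkeeping about subexpression objects merely makes explicit what the paper leaves implicit.
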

\begin{proof}
We proceed with proof by induction in the type rules. In the cases \rt{TNew}, \rt{TNewGen}, \rt{TLit}, \rt{TVoid}, \rt{TBool}, \rt{TBot}, \rt{TLinPar}, \rt{TNoLPar}, \rt{TLinFld} and \rt{TNoLFld} $envT_O=envT_O'$, hence the result is trivially true. 

In the case of \rt{TObj} we assume $\Lambda;envT_O \cdot envT_S \vddash o' : t \triangleright \Lambda;envT_O'\cdot envT_S'$. Since $o \not\in\objects{o'}$, we must have $o'\neq o$. But then $envT_O'$ is equal to $envT_O$ except in the case of $o'$, so if $o\in\dom{envT_O}$, we must have that $o\in\dom{envT_O'}$, hence the proposition is satisfied.

In the remaining cases, it follows from the induction hypothesis.
\end{proof}

\begin{lemma}{(Object agreement.)}
\label{lemma:objectagreement}
If $\Lambda;envT_O \cdot envT_S \vddash e : t \triangleright \Lambda'; envT_O'\cdot envT_S'$ and $\Lambda;envT_O \cdot envT_S \vddash e' : t' \triangleright \Lambda'; envT_O'\cdot envT_S'$ then $\sf{objects}(e)=\sf{objects}(e')$.

\end{lemma}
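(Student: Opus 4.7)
The two preceding propositions give a tight characterisation of which objects are ``consumed'' by the typing of an expression in terms of the change between $envT_O$ and $envT_O'$. The plan is to observe that these propositions together imply
\[
\sf{objects}(e) \;=\; \dom{envT_O} \setminus \dom{envT_O'},
\]
from which the lemma follows immediately because the right-hand side depends only on the shared initial and final object type environments, not on the expression being typed.

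First, I would argue the ``$\subseteq$'' direction of this equality. Suppose $o \in \sf{objects}(e)$. Then Proposition (object\_start) applied to the judgement $\Lambda;envT_O\cdot envT_S \vddash e : t \triangleright \Lambda';envT_O'\cdot envT_S'$ gives $o \in \dom{envT_O}$ and $o \notin \dom{envT_O'}$, so $o \in \dom{envT_O}\setminus\dom{envT_O'}$.

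Next I would argue the reverse inclusion by contraposition. Suppose $o \in \dom{envT_O}$ but $o \notin \sf{objects}(e)$. Then Proposition (object\_end) applied to the same judgement gives $o \in \dom{envT_O'}$, so $o \notin \dom{envT_O}\setminus\dom{envT_O'}$. Contrapositively, if $o \in \dom{envT_O}\setminus\dom{envT_O'}$ then $o \in \sf{objects}(e)$. Together these inclusions establish the equality.

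Finally, running the very same argument on the second typing judgement $\Lambda;envT_O\cdot envT_S \vddash e' : t' \triangleright \Lambda';envT_O'\cdot envT_S'$ yields $\sf{objects}(e') = \dom{envT_O}\setminus\dom{envT_O'}$ as well, hence $\sf{objects}(e) = \sf{objects}(e')$. There is no genuine obstacle here: the lemma is essentially a direct consequence of the two propositions, and all the real work was already discharged in their inductive proofs on the typing rules. The only thing worth double-checking is that the characterisation is a genuine equality of multisets (and not merely of underlying sets), which relies on the well-formedness convention that $\sf{objects}(e)$ is in fact a set, as stipulated in the definition of well-formed expressions.
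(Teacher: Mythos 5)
Your proposal is correct and follows essentially the same route as the paper: both rest entirely on the two preceding propositions (object\_start and object\_end), yours packaged as the characterisation $\sf{objects}(e)=\dom{envT_O}\setminus\dom{envT_O'}$, the paper's as a direct contradiction on an element assumed to lie in one of $\sf{objects}(e),\sf{objects}(e')$ but not the other. The difference is purely presentational, and your closing remark about multiset versus set equality matches the paper's implicit reliance on well-formedness.
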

\begin{proof}
We proceed with proof by contradiction. Assume the contradictory situation, where $o \in \sf{objects}(e)$ and $o \not\in \sf{objects}(e')$. If $o \in \sf{objects}(e)$ then by Lemma \ref{lemma:object_start} we have $o \in envT_O$ and $o \not\in {envT_O'}$. But since $o\not\in\objects{e'}$ and $o\in\dom{envT_O}$ then by Lemma \ref{lemma:object_end} $o\in envT_O$. This is a contradiction, hence $\objects{e}=\objects{e'}$.
\end{proof}

\begin{corollary}[Empty objects for expressions]
\label{corollary:emptyobjects}
If $\Lambda;\Delta \vddash e : t \triangleright \Lambda;\Delta$ then $\sf{objects}(e) = \emptyset$
\end{corollary}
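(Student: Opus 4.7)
The plan is to derive the corollary directly from Proposition \ref{lemma:object_start}, which already characterises the object-set of an expression in terms of the object type environments before and after typing. Since the hypothesis of the corollary asserts that the initial and final environments coincide, any object that would have to be removed from $envT_O$ during typing cannot exist.

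Concretely, I would argue by contradiction. Assume $\Lambda;\Delta \vddash e : t \triangleright \Lambda;\Delta$, write $\Delta = envT_O \cdot envT_S$, and suppose towards a contradiction that there exists some $o \in \objects{e}$. Applying Proposition \ref{lemma:object_start} with $envT_O' = envT_O$ (which is forced by the hypothesis, since the final $\Delta$ equals the initial $\Delta$), we obtain simultaneously $o \in \dom{envT_O}$ and $o \notin \dom{envT_O}$, an immediate contradiction. Hence no such $o$ exists, and $\objects{e} = \emptyset$.

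I do not expect any real obstacle here: the entire content of the corollary is already packaged inside Proposition \ref{lemma:object_start}, and the only step required is to observe that the hypothesis $\Lambda;\Delta \vddash e : t \triangleright \Lambda;\Delta$ forces $envT_O = envT_O'$. The only minor care needed is to be explicit that $\Delta$ decomposes uniquely as $envT_O \cdot envT_S$, so that the equality of the combined environments entails equality of the object typing components; this is a direct consequence of the definition of $\Delta$ in Section \ref{sec:typesystem}.
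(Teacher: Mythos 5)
Your proof is correct, but it takes a shorter route than the paper. The paper derives the corollary from Lemma \ref{lemma:objectagreement} (object agreement): since $\Lambda;\Delta \vddash \sf{true} : \sf{Bool} \triangleright \Lambda;\Delta$ holds by \rt{TBool}, the expression $e$ and the auxiliary expression $\sf{true}$ are typed with identical initial and final environments, so $\objects{e} = \objects{\sf{true}} = \emptyset$. You instead go straight back to Proposition \ref{lemma:object_start}: with the final environments equal to the initial ones, any $o \in \objects{e}$ would be both in and not in $\dom{envT_O}$, an immediate contradiction. Your argument is the more elementary one --- it uses only Proposition \ref{lemma:object_start}, whereas the paper's route additionally invokes Lemma \ref{lemma:objectagreement}, whose own proof needs both Proposition \ref{lemma:object_start} and Proposition \ref{lemma:object_end}, together with the trick of comparing against a trivially typed value. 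What the paper's phrasing buys is mainly presentation: the corollary is exhibited as a direct instance of the object-agreement lemma that is then reused in the subject-reduction cases. Your side remark that the decomposition $\Delta = envT_O \cdot envT_S$ must determine the object-typing component uniquely is the right point to flag, and it is harmless here, since the object type environment and the parameter type stack are syntactically distinct components of $\Delta$.
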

\begin{proof}
Since $\Lambda;\Delta\vddash \sf{true} : \sf{Bool} \triangleright \Lambda;\Delta$ can be concluded using \rt{TBool}, we can conclude using Lemma \ref{lemma:objectagreement} that $\sf{objects}(e)=\sf{objects}(\sf{true})=\emptyset$.
\end{proof}

\subsection{Consistency proof}
\label{app:consistency}
\tagthm{43}
When typechecking expressions, we only consider the updates to the currently active object. This means that field updates in other objects are assumed to be correct, since it has been checked previously by \rt{TClass}. Therefore we only require the field typing environment to be consistent across reductions, for the active object. 

\begin{definition}[Field typing environment consistency]
We say that $\Lambda$ is consistent with $\Lambda'$ w.r.t object $o$, written $\Lambda \lambdaeq{o} \Lambda'$ iff $\Lambda(o)=\Lambda'(o)$.
\end{definition}

\begin{lemma}[Field typing consistency reduction]
\label{lemma:consistency}
If $\Lambda;\Delta \vddash e : t \triangleright \Lambda';\Delta'$, $\Lambda^N\lambdaeq{o}\Lambda$ and $\sf{returns}(e)=0$ then $\Lambda^N;\Delta \vddash e : t \triangleright \Lambda^{N'};\Delta'$ where $\Lambda^{N'}\lambdaeq{o}\Lambda'$.
\end{lemma}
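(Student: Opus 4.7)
The plan is to proceed by induction on the structure of the derivation $\Lambda;\Delta \vddash e : t \triangleright \Lambda';\Delta'$. The key observation is that since $\sf{returns}(e)=0$, the active object (the bottom element of the parameter stack $\Delta$) is fixed throughout the derivation, and every rule that modifies $\Lambda$ does so only at the entry for this active object. Consequently, if $\Lambda^N$ already agrees with $\Lambda$ on that entry, each modification applied to $\Lambda$ can be mirrored verbatim on $\Lambda^N$ and the resulting field typing environments still agree on $o$.

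For the base cases (\rt{TNew}, \rt{TNewGen}, \rt{TLit}, \rt{TVoid}, \rt{TBool}, \rt{TBot}, \rt{TObj}, \rt{TLinPar}, \rt{TNoLPar}), the field typing environment is not altered at all, so we simply take $\Lambda^{N'} = \Lambda^N$ and the consistency property is immediate. For the rules \rt{TLinFld} and \rt{TNoLFld}, the read (and possible nullification) happens at $o.f$ where $o$ is the active object; since $\Lambda^N(o) = \Lambda(o)$ the same type is read and the same update is performed, preserving the $\lambdaeq{o}$ relation. For \rt{TFld} and \rt{TCallF} we additionally need $C\<t''\> = \Lambda(o).\sf{class}$; but this premise transfers to $\Lambda^N$ by $\Lambda^N \lambdaeq{o} \Lambda$, and the update is again to the $o$ component only. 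The case \rt{TCallP} and \rt{TSwP} affect $\Delta$ rather than $\Lambda$, so the result is immediate from the inductive hypothesis, while \rt{TSwF} modifies only $o.f$.

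For the inductive cases \rt{TSeq}, \rt{TIf}, \rt{TSwF}, \rt{TSwP}, and \rt{TLab}, we first observe that by the definition of well-formed expressions together with the recursive definition of $\sf{returns}$, the hypothesis $\sf{returns}(e)=0$ propagates to every sub-expression appearing in the premise. We then apply the induction hypothesis once to the first premise, obtaining an intermediate $\Lambda^{N''} \lambdaeq{o} \Lambda''$; a second application yields $\Lambda^{N'} \lambdaeq{o} \Lambda'$. In \rt{TIf}, \rt{TSwF}, and \rt{TSwP} the inductive hypothesis is applied once per branch, each starting from the same intermediate consistent environment. The cases \rt{TCon} and \rt{TRet} require no work: the former does not touch $\Lambda$, and the latter is excluded outright because $\sf{returns}(\return{e}) \geq 1$.

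The main obstacle I anticipate is a purely bookkeeping one: verifying that in the side-condition premises of \rt{TFld}, \rt{TCallF}, and \rt{TSwF} the components of $\Lambda$ that are inspected (namely $\Lambda(o).\sf{class}$ and $\Lambda(o).f$) depend only on the $o$-entry, so that $\lambdaeq{o}$ is strong enough to preserve them. Once this is spelled out, each case of the induction is a straightforward reapplication of the same typing rule with $\Lambda$ replaced by $\Lambda^N$, and the concluding step $\Lambda^{N'} \lambdaeq{o} \Lambda'$ follows because the sole modification performed is at the $o$-entry.
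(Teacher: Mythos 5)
Your proposal is correct and is essentially the paper's own argument spelled out in more detail: the paper proves this lemma by a one-line induction on the typing derivation, observing that only \rt{TRet} can access entries of $\Lambda$ for objects other than the currently active one, and \rt{TRet} is excluded by $\sf{returns}(e)=0$. The extra bookkeeping you describe (mirroring each $o$-local update on $\Lambda^N$ and transferring the $\Lambda(o).\sf{class}$ and $\Lambda(o).f$ side conditions via $\lambdaeq{o}$) is precisely what the paper leaves implicit.
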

\begin{proof}
Induction in the structure of type judgments of expressions. It is easy to see that the lemma must be satisfied, since only \rt{TRet} can access other objects in $\Lambda$ than the currently active object, and in that case we do not have $\sf{returns}(e)=0$. 
\end{proof}
\subsection{Subject reduction proof}
\label{app:subject_reduction}
\tagthm{34}

In this section we present the complete proof of the subject reduction (preservation) property of our type system which represents one part of the safety theorem. Subject reduction tells us that an well-typed expression remains well-typed after a reduction step in its evaluation. The general approach for this proof is to use the information gained from the premises of \rt{WTC} and other rules that must have been used to type the expression of the configuration. This information tell us about the context after a reduction step and then we can verify that the premises of \rt{WTC} are still valid in the new configuration. 

\begin{theorem}[Subject reduction]
\label{thm:subjectreduction}
Let $\vv{D}$ be such that $\vdash \vv{D}$ and let $\<h, env_S, e\>$ be a configuration. If $\vv{D} \vddash \<h, env_S, e\> \trans \<h', env_S', e'\>$ then:
\[
    \exists \Lambda, \Delta\ .\ \Lambda, \Delta \vddash \<h, env_S, e\> : t \triangleright \Lambda'; \Delta' \implies \exists \Lambda^N, \Delta^N\ .\ \Lambda^N, \Delta^N \vddash \<h', env_S', e' \> : t \triangleright \Lambda'';\Delta'
\]

where $\Lambda' \lambdaeq{o} \Lambda''$ and $o$ is the active object in the resulting configuration.
\end{theorem}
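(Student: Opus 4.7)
The plan is to proceed by induction on the derivation of the reduction step $\vv{D} \vddash \<h, env_S, e\> \trans \<h', env_S', e'\>$, doing a case analysis on the last rule used. For each case, I would start from the hypothesis $\Lambda;\Delta \vddash \<h, env_S, e\>: t \triangleright \Lambda';\Delta'$, unfold \rt{WTC} to extract the five conditions ($\Lambda \vddash h$, $envT_S \vdhash_{e} env_S$, $envT_O \vhdash e$, the expression judgement $\Lambda;\Delta\vddash e:t\triangleright \Lambda';\Delta'$, and $\Lambda \vhdash \vv{D}$), and invert the typing derivation to match the reduction premise. I would then exhibit $\Lambda^N$ and $\Delta^N$ that witness well-typedness of the successor configuration, and verify each of the five \rt{WTC} premises plus the consistency condition $\Lambda' \lambdaeq{o} \Lambda''$ on the active object.

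The ground reduction cases split naturally. For field/parameter access (\rt{uParam}, \rt{lParam}, \rt{uDeref}, \rt{lDeref}, \rt{Upd}) the crucial observation is that the linearity tests in the reduction rule mirror the \rt{TLin}/\rt{TNoL} rules and the \sf{agree} predicate in \rt{TFld}, so the heap update (setting the field to \textsf{null}) is matched by the field-type update to $\bot$, keeping $\Lambda \vddash h$ intact. For \rt{New} and \rt{NewGen} I would set $\Lambda^N = \Lambda \uplus \{o \mapsto \dots\}$, $envT_O^N = envT_O, o \mapsto C\<\bot\>[C\<\bot\>.\sf{usage}]$, using the freshness of $o$ together with Weakening (Lemma~32) to re-type the ambient expression. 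For the control cases (\rt{IfTrue}/\rt{IfFls}, \rt{SwF}/\rt{SwP}, \rt{Lbl}, \rt{Seq}), I would use inversion on \rt{TIf}, \rt{TSwF}/\rt{TSwP}, \rt{TLab}/\rt{TCon}, and \rt{TSeq} respectively to select the correct branch or unfold the loop using $\Omega$; here \rt{TCon} ensures environments match at continuation points.

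The main obstacles are the method-call and return cases. For \rt{CallP}/\rt{CallF}, after inverting \rt{TCallP}/\rt{TCallF} I obtain that the callee's usage steps $\U \trans[m] \W$, matching the semantic premise. I then need to produce a typing derivation for $\sf{return}\{e\}$ under the new stack $(o', [x \mapsto v])\cdot env_S$. This derivation does not come from the current expression's typing but from the class-level \rt{TClass}/\rt{TCBr} derivation for $C$, which provides a typing of the method body under $\{\sf{this} \mapsto envT_F\}; \emptyset \cdot (\sf{this}, [x \mapsto t'])$ where $envT_F$ coincides with the fields of $o'$ in $h'$ (this is guaranteed by the $\Lambda \vhdash \vv{D}$ clause together with the fact that $\W$ is reachable). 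I then apply Weakening (Lemma~32) to insert this derivation into the ambient $\Lambda$ and $envT_S$, and wrap it with \rt{TRet}, which is the natural counterpart of $\sf{return}\{e\}$. For \rt{Ret}, inversion on \rt{TRet} gives that the popped parameter type $t'$ is terminated, matching the linearity side-condition $\neg \lin(v', h)$ in the reduction rule (when $v \neq v'$); this justifies popping the parameter stack on both sides while preserving $\Lambda \vddash h$ and the sub-configuration well-typedness.

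The composite cases (\rt{FldC}, \rt{MthdC}, \rt{RetC}, \rt{SeqC}, \rt{IfC}, \rt{SwC}) are handled uniformly by first applying Lemma~35 (well-typedness of sub-configurations) to the subexpression $e$, then invoking the induction hypothesis to obtain $\Lambda^N, \Delta^N$ for $\<h', env_S', e_{\text{sub}}'\>$, and finally reassembling by reapplying the corresponding expression-typing rule (\rt{TFld}, \rt{TCallF}/\rt{TCallP}, \rt{TRet}, \rt{TSeq}, \rt{TIf}, \rt{TSwF}/\rt{TSwP}). The \rt{RetC} case is slightly delicate because the inner reduction may change the stack structure, so I would rely on the invariant $\norm{env_S'} = \sf{returns}(e)$ from the well-formedness definition to keep the outer $(o,s)$ frame aligned. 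Consistency (Lemma~43) guarantees that updates to non-active fragments of $\Lambda$ do not disturb the $\lambdaeq{o}$ condition demanded by the conclusion.
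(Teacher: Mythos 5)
Your proposal is correct and follows essentially the same route as the paper's proof: induction on the reduction rules, unfolding \textsc{(WTC)} and inverting the expression typing in each case, using the class-level \textsc{(TCBr)} derivation plus the Weakening lemma to type $\sf{return}\{e\}$ in the \textsc{(CallF)}/\textsc{(CallP)} cases, the sub-configuration lemma together with the induction hypothesis for the composite cases, and the consistency lemma to discharge the $\lambdaeq{o}$ condition. The only differences are at the level of sketch detail (e.g.\ the paper handles \textsc{(New)} directly via \textsc{(TObj)} without weakening, and uses the object-agreement lemma to check the $envT_O$ premise in the branching cases), not of approach.
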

\begin{proof}
By induction in the structure of the reduction rules. For each reduction rule, we assume that the configuration is well typed, and show that the resulting configuration after the reduction is also well typed. We construct new environments $\Lambda^N, \Delta^N$ and use the premises of the \rt{WTC} rule, to conclude that the resulting configuration is well typed, based on the construction.

\begin{enumerate}[ncases]
    \item[\rt{uDeref}]
    \begin{subproof}
    \[\inference{h(o).f = v & \neg \lin(v,h)}{\vdash_{\vv{D}} \<h, (o,s)\cdot env_S, f\> \trans \<h, (o, s)\cdot env_S, v\>}\]
    
    Assume that $\exists \Lambda, \Delta\ .\ \Lambda; \Delta \vddash \<h, env_S, f\> : t \triangleright \Lambda'; \Delta'$, where $\Delta = (envT_O \cdot envT_S)$.
    
    From \rt{WTC} we know:
    \begin{itemize}
        \item $\Lambda \vddash h$
        \item $envT_S\vhdash_{f} (o, s) \cdot env_S$
        \item $envT_O \vhdash f$
        \item $\Lambda;envT_O \cdot envT_S \vddash f : t_\bot \triangleright \Lambda'; \Delta'$
        \item $\Lambda \vhdash \vv{D}$
    \end{itemize}
    
    
    From \rt{WTE} we know:
    \begin{itemize}
        \item $\dom{envT_O} = \objects{f}=\emptyset$
    \end{itemize}

    From \rt{uDeref} we know:
    \begin{itemize}
        \item $h(o).f = v$
        \item $\neg \lin(v, h)$
        \item $h' = h$
        \item $(o, s) \cdot env_S' = (o, s) \cdot env_S$
    \end{itemize}
    
    Since $\neg\lin(v, h)$ we must have $\neg \lin(t_{\bot})$, hence we must have concluded $\Lambda;envT_O \cdot envT_S \vddash f : t_\bot \triangleright \Lambda'; \Delta'$ using \rt{TNoLFld}.
    
    From \rt{TNoLFld} and \rt{WTC} we have:
    \begin{itemize}
        \item $\Lambda' = \Lambda$
        \item $\Delta' = \Delta = envT_O \cdot envT_S$
    \end{itemize}
    
    We wish to show that $\exists \Lambda^N, \Delta^N.\ \Lambda^N;\Delta^N \vddash \<h, (o, s) \cdot env_S, v\> : t \triangleright \Lambda'; \Delta'$.
    
    For the cases where $v$ is not an object, we can let $\Lambda^N=\Lambda, \Delta^N = (\emptyset \cdot envT_S)$. 
    We have the following from our hypothesis:
    \begin{itemize}
        \item $\Lambda \vddash h$
        \item $envT_S \vdhash_{v} (o, s) \cdot env_S$, by definition of \rt{WTP-Base}
    \end{itemize}
    
    Since $\dom{\emptyset} = \sf{objects}(v) = \emptyset$, we can conclude that $\emptyset \vhdash v$. 
    For each of the basic types we can use the corresponding rule \rt{TLit}, \rt{TBool}, \rt{TVoid}, \rt{TNull} to conclude that $\Lambda^N; \Delta^N \vddash v : t \triangleright \Lambda^N; \Delta^N$. Finally we can conclude that since $\Lambda^N = \Lambda = \Lambda'$ we have that $\Lambda^N \vhdash \vv{D}$. In fact, since $\Lambda^N=\Lambda'$ and $\Delta^N = \emptyset \cdot envT_S$, we have $\Lambda^N; \Delta^N \vddash v : t \triangleright \Lambda'; \Delta'$. This means that all premises are satisfied and that $\Lambda^N;\Delta^N \vddash \<h, (o, s) \cdot env_S, v\> : t \triangleright \Lambda';\Delta'$. 
    
    For the case where $v$ is an object that is terminated, we let $\Lambda^N = \Lambda, \Delta^N = (\emptyset, v \mapsto t) \cdot envT_S$. Now we have $\emptyset, v \mapsto t \vhdash v$ since $\sf{objects}(v)=\{v\} = \dom{\emptyset, v \mapsto t}$ and $\sf{getType}(v, h) = t = (\emptyset, v \mapsto t)(v)$.

    We can conclude that $\Lambda^N;\Delta^N \vddash v : t \triangleright \Lambda^N; \emptyset \cdot envT_S$ using the \rt{TObj}. Since $envT_O = \emptyset$, we have that $\Lambda^N;\Delta^N \vddash v : t \triangleright \Lambda'; \Delta'$ Using the same reasoning as the previous case, we can conclude that $\Lambda^N;\Delta^N \vddash \<h, env_S, v\> : t \triangleright \Lambda'; \Delta'$. $\Lambda'\lambdaeq{o} \Lambda'$ is satisfied because of reflexivity of $\lambdaeq{o}$.
    \end{subproof}
    
    \item[\rt{TCallF}]
    \begin{subproof}
    \[\inference{env_S=(o,s)\cdot env_S' & o' = h(o).f \\ \_\ m(\_ x)\{e\} \in h(o').\sf{class}.\sf{methods}_{\vv{D}} & h(o').\sf{usage} \trans[m] \W}{\vddash \<h, env_S, f.m(v)\> \trans \<h\{\W/h(o').\sf{usage}\}, (o', [x \mapsto v])\cdot env_S, \sf{return}\{e\}\>}\]
    
    Assume $\Lambda; \Delta \vddash \< h, env_S, f.m(v) \> : t \triangleright \Lambda'; \Delta'$ where $\Delta = envT_O \cdot envT_S$
    
    From \rt{CallF} we know
    \begin{itemize}
        \item $t\ m(t'\ x)\{e\} \in h(o').\sf{class}.\sf{methods}_{\vv{D}}$
        \item $h(o').\sf{usage} \trans[m] \W$
    \end{itemize}
    
    From \rt{WTC} we know
    \begin{itemize}
        \item $\Lambda \vddash h$
        \item $envT_S \vhdash_{f.m(v)} env_S$
        \item $envT_O \vhdash f.m(v)$
        \item $\Lambda; \Delta \vddash f.m(v) : t \triangleright \Lambda'; \Delta'$
        \item $\Lambda \vhdash \vv{D}$
    \end{itemize}
    
    From \rt{WTP} we know
    \begin{itemize}
        \item $envT_S = envT_S'' \cdot (o, S)$
        \item $env_S = (o, s)$
    \end{itemize}
    
    From \rt{TCallF} and typing rules for values we know
    \begin{itemize}
        \item $\Lambda;\Delta \vdash v:t' \triangleright \Lambda\{o.f \mapsto C\<t''\>[\mathcal{U}]\};\Delta'$
        \item $\Lambda(o).f = C\<t''\>[\U]$
        \item $\U \trans[m] \W$
        \item $\Lambda' = \Lambda\{o.f \mapsto C\<t''\>[\W]\}$
    \end{itemize}
    
    From \rt{WTE} we know 
    \begin{itemize}
        \item $\dom{envT_O} = \objects{f.m(v)}$
    \end{itemize}
    
    From \rt{TCallF} and typing rules for values we know
    \begin{itemize}
        \item $\Lambda' = \emptyset \cdot envT_S$
    \end{itemize}
    
    Let $h' = h\substitute{\W}{h(o').\sf{usage}}$
    
    Let $\Lambda^N = \Lambda'$, $\Delta^N = \emptyset \cdot envT_S'' \cdot (o', [x \mapsto t']) \cdot (o, S)$
    
    We must now conclude $\Lambda^N; \Delta^N \vddash \<h', (o', [x \mapsto v]) \cdot env_S, \return{e}\> : t \triangleright \Lambda^{N'}; \Delta^{N'}$
    \begin{itemize}
        \item $\Lambda^N \vddash h'$ since we only update the field $o.f$ in both $\Lambda^N$ and $h'$ and by assumption we know $\Lambda^{N}(o).\sf{envT}_{\sf{F}}(f) = \sf{getType}(o', h') = C\<t''\>[\W]$.
        \item $envT_S^N \vhpdash_{\return{e}} (o', [x \mapsto v]) \cdot env_S$ is satisfied by \rt{WTP-Ret}
        \item $envT_O^N \vdhpash \return{e}$ since $\objects{\return{e}} = \emptyset$
    \end{itemize}
    
    We wish to show $\Lambda^N; \emptyset \cdot envT_S'' \cdot (o', [x \mapsto t']) \cdot (o, S) \vddash \return{e} : t \triangleright \Lambda^{N'}; \Delta^{N'}$
    
    From \rt{TCBR} have that:
    \begin{itemize}
        \item $\{\sf{this} \mapsto \Lambda^N(o').\sf{envT}_\sf{F}\}; \emptyset \cdot (o', [x \mapsto t']) \vddash e : t \triangleright \{this \mapsto  \sf{envT}_\sf{F}'\}; \emptyset \cdot (o', [x \mapsto t'''])$
        \item $\sf{terminated}(t''')$
    \end{itemize}
    
    We can then use Lemma \ref{lemma:weakening} to conclude
    $\Lambda^N; \emptyset \cdot envT_S'' \cdot (o', [x \mapsto t']) \vddash e : t \triangleright \Lambda^N\{o' \mapsto (C\<t''\>, \sf{envT}_\sf{F}')\}; \emptyset \cdot envT_S'' \cdot (o', [x \mapsto t'''])$ 
    
    Since we have that $\sf{terminated}(t''')$, we can conclude with \rt{TRet}:
    \begin{itemize}
        \item $\Lambda^N ; \Delta^N \vddash \return{e} : t \triangleright \Lambda^{N'}; \Delta'$
    \end{itemize}
    
    The only difference between $\Lambda'$ and $\Lambda^{N'}$ is for the object $o'$, hence we must have that $\Lambda' \lambdaeq{o} \Lambda^{N'}$.

    \end{subproof}
    
    \item[\rt{TCallP}]
    
    \begin{subproof}
    \[\inference{env_S = (o, [x' \mapsto o'])\cdot env_S'  & \_\ m(\_\ x)\{e\}\in h(o').\sf{class}.\sf{methods}_{\vv{D}} \\ h(o').\sf{usage} \trans[m] \W}{\vddash \<h, env_S, x'.m(v)\> \trans \<h\{\W/h(o').\sf{usage}\}, (o', [x \mapsto v])\cdot env_S,\sf{return}\{e\}\>} \]
    
    Assume $\Lambda;\Delta \vddash \<h, env_S, x'.m(v)\> : t \triangleright \Lambda';\Delta'$, where $\Delta=envT_O \cdot envT_S$.
    
    From \rt{CallP} we know
    \begin{itemize}
        \item $env_s = (o, [x' \mapsto o']) \cdot env_S'$
        \item $t\ m(t'\ x)\{e\}\in h(o').\sf{class}.\sf{methods}_{\vv{D}}$
        \item $h(o').\sf{usage} \trans[m] \W$
    \end{itemize}
    
    From \rt{WTC} we know
    \begin{itemize}
        \item $\Lambda \vddash h$
        \item $envT_S \vhdash_{x'.m(v)} env_S$
        \item $envT_O \vhdash x'.m(v)$
        \item $\Lambda;\Delta \vddash x'.m(v) : t \triangleright \Lambda';\Delta'$
        \item $\Lambda \vhdash \vv{D}$
    \end{itemize}
    
    From \rt{WTP-Base} we know
    \begin{itemize}
        \item $envT_S = envT_S'' \cdot (o, [x' \mapsto C\<t'''\>[\U]])$
        \item $env_S = (o, [x' \mapsto o'])$
    \end{itemize}
    
    From \rt{WTE} we know:
    \begin{itemize}
        \item $\dom{envT_O} = \objects{x'.m(v)}=\objects{v}$
    \end{itemize}
    
    From \rt{TCallP} and typing rules for values we have
    \begin{itemize}
        \item $\U \trans[m] \W$
        \item $\Lambda' = \Lambda$
        \item $\Delta' = \emptyset \cdot env_T'' \cdot (o, [x' \mapsto C\<t'''\>[\W]])$
    \end{itemize}
    
    Now let $\Lambda^N = \Lambda$, $\Delta^N = \emptyset \cdot envT_S'' \cdot (o', [x \mapsto t']) \cdot (o, [x' \mapsto C\<t'''\>[\W]])$
    
    We must now conclude $\Lambda^N;\Delta^N \vddash \<h\substitute{\W}{h(o').usage}, (o', [x \mapsto v]) \cdot env_S, \return{e}\>$.
    
    \begin{itemize}
        \item $\Lambda^N \vddash h'$ since because of linearity we know that $o'$ does not appear in any fields, and we have $\dom{h'}=\dom{h}=\dom{\Lambda^N}$.
        \item $envT_S^N \vhpdash_{\return{e}} (o', [x \mapsto v]) \cdot env_S$ is satisfied with \rt{WTP-Ret}
        \item $\emptyset \vhdash \return{e}$ is trivially true since $\objects{\return{e}} = \emptyset$
        \item $\Lambda^N \vhpdash \vv{D}$ is fulfiled since per \rt{TCbr} we are following the usage of $o'$, which will eventually lead to a terminated environment.
    \end{itemize}
    
    We must show that $\Lambda^N;\Delta^N \vddash \return{e} : \triangleright \Lambda^N;\Delta'$.
    
    From \rt{TCbr} we have that 
    \begin{itemize}
        \item $\{\sf{this} \mapsto \Lambda^N(o').\sf{envT}_{\sf{F}}\};\emptyset \cdot (o', [x \mapsto t']) \vddash e : t \triangleright \{\sf{this} \mapsto envT_F'\};\emptyset \cdot (\sf{this}, [x \mapsto t'''])$
        \item $\sf{terminated}(t''')$
    \end{itemize}
    
    We can then use Lemma \ref{lemma:weakening} to conclude $\Lambda^N;\emptyset\cdot envT_S'' \cdot (o', [x \mapsto t']) \vddash e : t \triangleright \Lambda^{N'};\emptyset \cdot envT_S''\cdot(o', [x \mapsto t'''])$.
    
    Since we have that $\sf{terminated(t''')}$, we can conclude with \rt{TRet} that:
    \begin{itemize}
        \item $\Lambda^N;\Delta^N\vddash \return{e} : t \triangleright \Lambda^{N'};\Delta'$
    \end{itemize}
    
    The only difference between $\Lambda'$ and $\Lambda^{N'}$ is for the object $o'$, hence we must have that $\Lambda' \lambdaeq{o} \Lambda^{N'}$.
    
    \end{subproof}
    
    \item[\rt{Ret}]
    \begin{subproof}
    \[\inference{v \neq v' \Rightarrow \neg \lin(v', h)}{\vddash \<h, (o, [x \mapsto v'])\cdot env_S,\sf{return}\{v\}\> \trans \<h, env_S, v\>}\]
    
    We assume that $\Lambda; envT_O \cdot envT_S \cdot (o', S) \vddash \<h, (o, [x \mapsto v']) \cdot env_S, \sf{return}\{v\}\> : t \triangleright \Lambda'; envT_O' \cdot envT_S'' \cdot (o', S)$.
    
    From \rt{TRet} we have:
    \begin{itemize}
        \item $\Lambda, envT_O \cdot envT_S \vddash v : t \triangleright \Lambda'; envT_O' \cdot envT_S'$
        \item $envT_S' = envT_S'' \cdot (o'', S')$
    \end{itemize}
    
    From the typing rules for values we have:
    \begin{itemize}
        \item $\Lambda' =\Lambda$
    \end{itemize}
    
    From \rt{WTP-Ret} we have:
    \begin{itemize}
        \item $envT_S'' \cdot (o'', [x \mapsto t']) \cdot (o', [x' \mapsto t'']) \vdhash_{\sf{return}\ v} (o, [x \mapsto v']) \cdot (o', [x' \mapsto v''])$
        \item $o = o''$
        \item $\sf{getType}(v'', h) = t''$
        \item $env_S = (o', [x' \mapsto v''])$
    \end{itemize}
    From \rt{WTP-Base} we have:
    \begin{itemize}
        \item $envT_S'' \cdot (o, [x \mapsto t']) \vdhash_{v} (o, [x \mapsto v'])$
        \item $\sf{getType}(v', h) = t'$
    \end{itemize}
 
    Let $\Lambda^N = \Lambda$, $envT_O^N = envT_O$, $envT_S^N = envT_S'' \cdot (o', [x \mapsto t''])$.
    We now need to show that the premises of \rt{WTC} are satisfied. 
    
    Show \rt{WTP} using $envT_S^N \vhdash_{v} env_S$ then we have $envT_S'' \cdot (o', [x \mapsto t'']) \vhdash_{v} (o', [x' \mapsto v''])$. From assumption we have that $\sf{getType}(v'', h) = t''$, hence we conclude using \rt{WTP-Base}
    
    Case where $v$ is an object.
    \begin{itemize}
        \item $envT_O = envT_O', v \mapsto t$
        \item $\dom{envT_O} = \{v\}$
        \item $\Lambda^N \vddash h$ (From hypothesis)
        \item $envT_O^N \vhdash v$ trivial since $\sf{objects}(\sf{return}\ \{v\})=\sf{objects}(v) = \{v\}$.
        \item $\Lambda^N;envT_O^N\cdot envT_S^N \vddash v : t \triangleright \Lambda'; envT_O' \cdot envT_S''\cdot(o', S)$ from \rt{TObj}
        \item $\Lambda^N \vhdash \vv{D}$ (From hypothesis)
    \end{itemize}
    
    Case where $v$ is not object.
    \begin{itemize}
        \item $envT_O = envT_O'$
        \item $\dom{envT_O} = \emptyset$
        \item $\Lambda^N \vddash h$ (From hypothesis)
        \item $envT_O^N \vhdash v$ trivial since $\sf{objects}(\sf{return}\ \{v\})=\sf{objects}(v) = \emptyset$.
        \item $\Lambda^N;envT_O^N\cdot envT_S^N \vddash v : t \triangleright \Lambda'; envT_O' \cdot envT_S''\cdot(o', S)$ from \rt{TLit}
        \item $\Lambda^N \vhdash \vv{D}$ (From hypothesis)
    \end{itemize}
    Hence we can conclude 
    $\Lambda^N;envT_O^N\cdot envT_S^N \vddash \<h, env_S, v\> : t \triangleright \Lambda';envT_O' \cdot envT_S'' \cdot (o', S)$
    
    $\Lambda'\lambdaeq{o} \Lambda'$ is satisfied because of reflexivity of $\lambdaeq{o}$.
    \end{subproof}

    \item[\rt{New}]
    \begin{subproof}
    \[\inference{o\text{ fresh} & C\<\bot\>.\sf{fields}_{\vv{D}} = \vv{F} & C\<\bot\>.\sf{usage}_{\vv{D}} = \W}{\vddash \<h, env_S, \sf{new}\ C \> \trans \<h \cup \{o \mapsto \<C\<\bot\>[\W], \vv{F}.\sf{initvals}_{\vv{D}}\>\}, env_S, o\>}\]
    
    Let $h'=h \cup \{o \mapsto \<C\<\bot\>[\W], \vv{F}.\sf{initvals}_{\vv{D}}\>\}$.
    
    Assume $\Lambda; envT_O \cdot envT_S \vddash \<h, env_S, \sf{new}\ C\> : C\<t\>[\W] \triangleright \Lambda';envT_O' \cdot envT_S'$.
    
    From \rt{TNew} we have
    \begin{itemize}
        \item $\Lambda'=\Lambda$
        \item $envT_O' = envT_O$
        \item $envT_S' = envT_S$
    \end{itemize}
    
    Now let $\Lambda^N = \Lambda\{o \mapsto \<C\<\bot\>, \vv{F}.\sf{inittypes}_{\vv{D}}\>\}$, $envT_O^N = envT_O, o\mapsto C\<\bot\>[\W], envT_S^N = envT_S$, and show that $\Lambda^N;envT_O^N\cdot envT_S^N \vddash \<h\cup\{o\mapsto C\<\bot\>[\W], \vv{F}.\sf{initvals}_{\vv{D}}\>\}, env_S, o\> : C\<\bot\>[\W] \triangleright \Lambda';\Delta'$.
    
    To show $\Lambda^N\vddash h'$, we must show that \rt{WTH} holds for $o$, since the remaining objects follows from the assumption that $\Lambda \vddash h$.
    
    From the updates to $h$ and $\Lambda$, we can conclude that 
    
    $h'(o).\sf{fields}\allowbreak=\Lambda^N(o).\sf{fields}= h'(o).\sf{class}.\sf{fields}_{\vv{D}}=\vv{F}$. It follows from the definition of $\vv{F}.\sf{initvals}_{\vv{D}}$ and $\vv{F}.\sf{inittypes}_{\vv{D}}$ that $\forall f\in\vv{F} . \Lambda^N(o).\sf{envT}_\sf{F}(f) = \sf{getType}(h'(o).f, h')$. 
    
    Since we know that $\dom{\Lambda} = \dom{h}$, we can conclude that $\dom{\Lambda^N} = \dom{h'}$, since $o$ is now in both domains.
    
    We can conclude that $envT_S^N \vhdash_{o} env_S$ since $o$ must be concluded with \rt{WTP-Base} and $\sf{new}\ C$ was concluded with \rt{WTP-Base}, hence it follows from the assumption $envT_S \vhdash_{\sf{new}\ C} env_S$.
    
    $envT_O^N \vdash^{h'} o$ is satisfied, since based on the assumption $\dom{envT_O}=\sf{objects}(\sf{new}\ C) = \emptyset$, hence $\dom{envT_O^N}=\sf{objects}(o) = \llbar o \rrbar$.
    
    $\Lambda^N;envT_O^N\cdot envT_S \vddash o : C\<\bot\>[\W] \triangleright \Lambda^N ; \Delta$ follows directly from \rt{TObj}.
    
    Finally $\Lambda^N \vhpdash \vv{D}$ must be shown to hold for $o$. Other objects follow from the assumption $\Lambda \vhdash \vv{D}$. It follows from \rt{TProg} and in turn \rt{TClass} that $\emptyset; \vv{F}.\sf{inittypes}_{\vv{D}}\vddash C\<\bot\>[\W]\triangleright \Gamma$ and that $\sf{terminated}(\Gamma)$.
    
    All premises of \rt{WTC} are satisfied, hence we can conclude $\Lambda^N;envT_O^N\cdot envT_S^N \vddash \<h\cup\{o\mapsto C\<\bot\>[\W], \vv{F}.\sf{initvals}_{\vv{D}}\>\}, env_S, o\> : C\<\bot\>[\W] \triangleright \Lambda^N; \Delta$.
    
    Since $o\not\in\dom{\Lambda'}$, and $\Lambda^N$ is equal to $\Lambda$ except for $o$. We have that $\Lambda' \lambdaeq{o'} \Lambda^N$ for the active object $o'$.

    \end{subproof}
    \item[\rt{NewGen}]
    This case is the same as \rt{New}, except all occurrences of $\bot$ is replaced with $g$.

    \item[\rt{Seq}]
    \begin{subproof}
    \[ \inference{\neg \lin(v,h)}{\vddash \<h, env_S, v;e\> \trans \< h, env_S, e\> }\]
    
    Assume $\Lambda; envT_O \cdot envT_S \vddash \<h, env_S, v;e\> : t \triangleright \Lambda'; envT_O' \cdot envT_S'$
    
    From \rt{WTC} we have
    
    \begin{itemize}
        \item $\Lambda \vdash h$
        \item $envT_S \vdhash_{v;e} env_S$
        \item $envT_O \vdhash v ; e$
        \item $\Lambda; envT_O \cdot envT_S \vddash v ; e :t \triangleright \Lambda'; envT_O' \cdot envT_S'$
        \item $\Lambda \vdhash \vv{D} $ 
    \end{itemize}
    
    From \rt{WTE} we have
    \begin{itemize}
        \item $\forall o \in \dom{envTo}\ .\ envT_O(o) = \sf{getType}(o, h)$
        \item $\dom{envT_O} = \sf{objects}(e)$
    \end{itemize}
    From \rt{Seq} we have
    \begin{itemize}
        \item $\neg \lin(v,h)$
    \end{itemize}
    
    From \rt{TSeq} we have
    \begin{itemize}
        \item $\Lambda; envT_O \cdot envT_S \vdash v : t' \vddash v : t' \triangleright \Lambda''; envT_O'' \cdot envT_S''$
        \item $\neg \lin(t')$
        \item $\Lambda''; envT_O'' \cdot envT_S'' \vdash e  : t\triangleright \Lambda'; envT_O' \cdot envT_S'$
    \end{itemize}
    
    From \rt{WTP-Seq} we have $envT_S \vhdash_{v} env_S$
    
    From \rt{WTP-Base} we have:
    \begin{itemize}
        \item $envT_S' \cdot (o, [x \mapsto t']) \vhdash_{v} (o, [x \mapsto v'])$
        \item $\sf{getType}(v', h) = t'$
    \end{itemize}
    
    $\Lambda; envT_O \cdot envT_S \vddash v : t' \triangleright \Lambda'' \cdot envT_S''$ must have been concluded with typing rules for values, hence we have:
    \begin{itemize}
        \item case $v \neq o$: $envT_O = envT_O''$ from \rt{TLit}, \rt{TVoid}, \rt{TBool} or \rt{TBot}.
        \item case $v = o$: $envT_O = envT_O'', o \mapsto t''$
        \item $\Lambda = \Lambda''$
        \item $envT_S = envT_S''$
    \end{itemize}

    Show $\exists \Lambda^N, \Delta^N . \Lambda^N; \Delta^N \vddash \<h, env_s, e\>$ Let $\Lambda ^N = \Lambda$, $\Delta^N = envT_O^N \cdot envT_S^N $, $envT_O^N = envT_O''$ and $envT_S^N = envT_S$ then we have 
    
    \begin{itemize}
        \item $\Lambda \vdash h$ from assumption
        \item $envT_s^N \vdhash_{e} env_S$ from assumption since $\sf{returns}(e) = 0$ because $e$ is a well formed expression, hence it is concluded with \rt{WTP-Ret}.
        \item $\Lambda^N; envT_O^N \cdot envT_S^N \vddash e \triangleright \Lambda'; envT_O' \cdot envT_S' $ from assumption
        \item $\Lambda^N \vdhash \vv{D}$ from assumption
    \end{itemize}
    
    We need to show $envT_O^N \vhdash e$. We have two cases
    
    case $v \neq o$: we know $envT_O = envT_O''$ then $envT_O'' \vhdash e$
    
    case $v = o$: 
    \begin{itemize}
        \item $envT_O = envT_O'', v \mapsto t''$.
        \item $\dom{envT_O''} = \sf{objects}(e)$
        \item $\forall o \in \dom{envT_O''} \ . \ \sf{getType}(o, h) = envT_O(o)$, we know that $v \not\in \dom{envT_O''}$ 
    \end{itemize}
    
    All premises of \rt{WTC} are satisfied, hence we can conclude $\Lambda^N;envT_O^N\cdot envT_S^N \vddash \< h, env_S, e\> : t \triangleright \Lambda'; envT_O' \cdot envT_S'$.
    
    Finally we can conclude that $\Lambda' \lambdaeq{o} \Lambda'$, because of reflexivity. 
    
    \end{subproof}
    \item[\rt{IfTrue}]
    \begin{subproof}
    \[
    \inference{}{\vddash \<h, env_S, \sf{if }(\sf{true}) \{e'\}\sf{ else } \{e''\}\> \trans \<h, env_S, e'\>}
    \]
    
    Assume $\Lambda;envT_O \cdot envT_S \vddash \<h, env_S, \sf{if}\ (\sf{true})\ \{e'\}\ \sf{else}\ \{e''\} \> : t \triangleright \Lambda'; \Delta'$.
    
    From \rt{WTP-If} we know $envT_S \vhdash_{v} env_S$ which must have been concluded using \rt{WTP-If} and then \rt{WTP-Base}.
    
    Let $\Lambda^N = \Lambda$, $envT_O^N = envT_O$ and $envT_S^N = envT_S$. We now show that $\Lambda^N;envT_O^N \cdot envT_S^N \vddash \<h, env_S, e'\> : t \triangleright \Lambda'; \Delta'$. 
    
    By Lemma \ref{lemma:objectagreement} we have that $\sf{objects}(e') = \sf{objects}(e'')$. Since $\sf{objects}(\sf{true})=\emptyset$ we get that $\sf{objects}(e')=\sf{objects}(\sf{if}\ (\sf{true})\ \{e'\}\ \sf{else}\ \{e''\} )$, hence from our assumption we get $envT_O^N \vhdash e'$.
    
    We have $envT_S^N \vhdash_{e'} env_S$, $\Lambda^N \vddash h$, $e'$ is a well formed expression hence we have $\sf{returns}(e') = 0$, hence it is eventually concluded using \rt{WTP-Base}. 
    
    $\Lambda \vddash \vv{D}$ follows directly from our assumptions. 
    
    We can conclude $\Lambda^N;envT_O^N\cdot envT_S^N \vddash e' : t \triangleright \Lambda';\Delta'$ from the premises of \rt{TIf}.
    
    We have shown all premises of \rt{WTC}, hence we can conclude $\Lambda^N;envT_O^N \cdot envT_S^N \vddash \<h, env_S, e'\> : t \triangleright \Lambda'; \Delta'$.
    
    $\Lambda'\lambdaeq{o} \Lambda'$ is satisfied because of reflexivity of $\lambdaeq{o}$.
    
    \end{subproof}
    \item[\rt{IfFls}] Follows the same structure as \rt{IfTrue}.
    
    \item[\rt{Lbl}]
    \begin{subproof}
    \[
    \inference{}{\vddash \<h, env_S, k:e\> \trans \<h, env_S, e\{k:e/\sf{continue}\ k\}\>}
    \]
    
    Assume $\Lambda; \Delta \vddash^{\Omega} \<h, env_S, k : e \> : t \triangleright \Lambda'; \Delta'$ where $\Delta = envT_O \cdot envT_S$.
    
    From \rt{WTC} we know 
    \begin{itemize}
        \item $\Lambda \vdash h$ 
        \item $envT_S \vhdash_{k : e} env_S$
        \item $envT_O \vhdash k : e$
        \item $\Lambda;\Delta \vddash^{\Omega} k : e : t \triangleright \Lambda';\Delta'$
        \item $\Lambda \vdash^h \vv{D}$
    \end{itemize}
    
    From \rt{TLab} we know
    \begin{itemize}
        \item $\Omega' = \Omega, k : (\Lambda, \Delta)$
        \item $\Lambda;\Delta \vddash^{\Omega'} e : void \triangleright \Lambda';\Delta'$
    \end{itemize}
    
    There are two cases, either $e$ contains $\sf{continue}\ k$ or it does not. The latter case is trivial, hence we will focus on the first case.
    
    If $\sf{continue}\ k$ does appear in $e$, then at some point during the type derivation, we must have concluded $\Lambda'';\Delta'' \vdash^{\Omega''} \sf{continue}\ k : \sf{void} \triangleright \Lambda''';\Delta'''$.
    
    We never remove elements from $\Omega$, hence we know that $\Omega'' = \Omega''',k : (\Lambda, \Delta)$, as the binding of $k$ was added in \rt{TLab}. 
    
    But from \rt{TCon} we must have that $\Lambda'' = \Lambda$ and $\Delta'' = \Delta$. 
    
    As $\sf{continue}$ expressions can only appear in blocks, and cannot appear on the left-hand side of a sequential expression, we know that changes to environments cannot happen after deriving the type of a $\sf{continue}$ expression. Together with the fact that all branches in if and switch-expressions must have the same resulting environments, we see that the only choice for $\Lambda'''$ and $\Delta'''$ is $\Lambda'$ and $\Delta'$ respectively.
    
    At this point we know $$\Lambda;\Delta \vddash^{\Omega''} \sf{continue k} : \sf{void} \triangleright \Lambda';\Delta',$$
    
    $$ 
    \Lambda;\Delta \vddash^{\Omega} k : e : \sf{void} \triangleright \Lambda';\Delta',
    $$
    and
    $$
        \Lambda;\Delta \vddash^{\Omega'} = e : \sf{void} \triangleright \Lambda';\Delta'
    $$
    
    Weakening and strengthening of $\Omega$ is trivial so replacing $\Omega$ with $\Omega''$ will leave $k: e$ well typed, hence replacing $\sf{continue}\ k$ with $k : e$ in $e$ will leave the resulting environments the same. This means that we can conclude
    
    $$ \Lambda;\Delta \vddash^{\Omega} e\substitute{k : e}{\sf{continue}\ k} : \sf{void} \triangleright \Lambda';\Delta'.$$
    
    $envT_O \vhdash e\substitute{k : e}{\sf{continue}\ k}$ follows from our assumption since $\objects{e\substitute{k : e}{\sf{continue}\ k}} = \objects{k : e}=\emptyset$ as $e$ has not been evaluated yet.
    
    $envT_S \vhdash_{e\substitute{k : e}{\sf{continue}\ k}} env_S$ is concluded with \rt{WTP-Base} which follows from our assumption.
    
    The remaining premises for \rt{WTC} follows directly from our assumptions.
    
    $\Lambda'\lambdaeq{o} \Lambda'$ is satisfied because of reflexivity of $\lambdaeq{o}$.

    \end{subproof}
    \item[\rt{LDeref}]
    \begin{subproof}
    
    \[ \inference{h(o).f=v & \lin(v,h)}{\vdash_{\vv{D}} \< h, (o, s) \cdot env_S, f\> \trans \< h\{\sf{null} / o.f \}, (o,s)\cdot env_S, v \>} \]
    
    let $h' = h\{\sf{null} / o.f \}$
    
    Assume $\Lambda; envT_O  \cdot envT_S \vddash \<h, (o, s) \cdot env_S, f\> : t \triangleright \Lambda' ; \Delta'$
    
    From \rt{LDeref} we know:
    \begin{itemize}
        \item $\lin(v, h)$
        \item $h(o).f = v$
    \end{itemize}
    
    From \rt{WTP-Base} we know $envT_S \vhdash_{f} (o, s) \cdot env_S$

    From \rt{TLinFld} we have:
    \begin{itemize}
        \item $envT_S = envT_S' \cdot (o, S)$
        \item $t = \Lambda(o).f$
        \item $\lin(t)$
        \item $\Lambda; envT_O \cdot envT_S' \cdot (o,S) \vddash f : t \triangleright \Lambda \{ o.f \mapsto \bot\}; envT_O \cdot envT_S' \cdot (o, S)$
    \end{itemize}
    
    Let $\Lambda^N = \Lambda\{o.f \mapsto \bot\}$, $envT_O^N = envT_O, v \mapsto t$ and $envT_S^N = envT_S$.
    
    Show $\Lambda^N; envT_O^N \cdot envT_S^N \vddash \<h\substitute{\sf{null}}{h(o).f}, (o,s)\cdot env_s, v\> : t \triangleright \Lambda^N; \emptyset \cdot envT_S^N$. We need to show \rt{WTC}.
    
    To show $\Lambda^N \vdash h'$, we have to show that the premise is satisfied.
    \begin{itemize}
        \item $\sf{getType}(h'(o).f,h) = \bot = \Lambda^N(o).f$ is the only case, where the heap is changed.
        \item $\dom{\Lambda^N} = \dom{h'}$ does not change, as we do not add or remove objects.
    \end{itemize}
    Hence we can conclude $\Lambda^N \vdash h'$.
    
    We have $envT_S^N \vhpdash_{v} (o, s) \cdot env_S$, since $v$ is an object and from linearity we have that $h\substitute{\sf{null}}{o.f}$ does not change the type on the parameter stack, as $x$ and $o.f$ cannot refer to the same object. Hence it is concluded using \rt{WTP-Base} from assumption.
    
    $envT_O^N \vhpdash v$ is trivially true as $\dom{envT_O^N} = \objects{v} = \{v\}$.
    
    $\Lambda^N; envT_O^N \cdot envT_S^N \vddash v : t \triangleright \Lambda^N; \emptyset \cdot envT_S^N$ can be concluded with \rt{TObj}.
    
    To show $\Lambda^N \vhpdash \vv{D}$ is satisfied, we need to show the premise is satisfied.
    \begin{itemize}
        \item $\dom{\Lambda^N} = \dom{h'}$. We do not remove or add any objects, hence this is trivially true.
        \item $\forall o \in \dom{h'} . \Lambda(o).\sf{fields} = h'(o).\sf{fields} = h'(o).\sf{class}.\sf{fields}_{\vv{D}}$ is trivially true, since we do not remove or add any fields.
        \item $\forall f \in h'(o).\sf{fields} . \Lambda(o).\sf{envT}_\sf{F}(f) = \sf{getType}(h'(o).f, h')$ holds for all unchanged fields. From definition we have $h\substitute{\sf{null}}{o.f}$, we have that $\sf{getType}(h'(o).f, h') = \bot$ and $\Lambda(o).\sf{envT}_{\sf{F}}(f) = \bot$ from assumption. 
    \end{itemize}
    The premise is satisfied hence we have $\Lambda^N \vhpdash \vv{D}$.
    
    The premises of \rt{WTC} is satisfied, hence we have that
    
    $\Lambda^N; envT_O^N \cdot envT_S^N \vddash \<h\substitute{\sf{null}}{h(o).f}, (o,s) \cdot env_S, v\> : t \triangleright \Lambda^N; \emptyset \cdot envT_S^N$
    
    Since we have that $\Lambda^N = \Lambda\{o.f \mapsto \bot\}$, it is clear that $\Lambda^N \lambdaeq{o} \Lambda\{o.f \mapsto \bot\}$.
    
    \end{subproof}
    \item[\rt{UParam}]
    \begin{subproof}
    
    \[\inference{\neg \lin(v,h)}{ \vdash_{\vv{D}} \<h,(o,[x \mapsto v]) \cdot env_S, x \> \trans \<h, (o, [x \mapsto v])\cdot env_S, v\>} \]
    
    Assume $\Lambda; envT_S \cdot envT_S \vdash \<h,(o,[x \mapsto v]) \cdot env_S, x \> : t \triangleright \Lambda' ; envT_O' \cdot envT_S'$, $\Lambda; envT_O \cdot envT_S \vddash x : t \triangleright \Lambda'; envT_O' \cdot envT_S'$ must have been concluded with \rt{TNoLPar}, hence we know:
    \begin{itemize}
        \item $\Lambda' = \Lambda$
        \item $envT_S = envT_S'' \cdot (o, [x \mapsto t])$
        \item $envT_S' = envT_S$
    \end{itemize}
    
    from \rt{WTP-Base} we have $envT_S \vhdash_x (o, [x \mapsto v]) \cdot env_S$
    
    Consider two cases
    case 1: $v$ is an unrestricted object. From \rt{WTP} we have $t = \sf{getType}(v,h)$ hence $v \in \dom{h}$. Let $\Lambda^N=\Lambda$, $envT_O^N = envT_O, v \mapsto t$ and $envT_S^N = envT_S$. 
    
    Case 2: $v$ is a base value. Let $\Lambda^N=\Lambda$, $envT_O^N = envT_O$ and $envT_S^N = envT_S$.
    
    Both cases.
    \begin{itemize}
        \item $\Lambda^N \vddash h$ from hypothesis
        \item $envT_S^N \vhpdash_{v} env_s$ from hypothesis as $v$ must be concluded using \rt{WTP-Base}.
    \end{itemize}
    
    Case $v$ is object. $\dom{envT_O} = \sf{objects}(x) = \emptyset \implies \dom{envT_O^N} = \emptyset \cup \{v\} = \sf{objects}(v)$ and $envT_O^N(v) = t = \sf{getType}(v, h)$.
    Case $v$ is a base value. $\dom{envT_O^N} = \sf{objects}(x) = \sf{objects}(v)$ is trivial.
    
    Both cases can be concluded $\Lambda^N; envT_O^N \cdot envT_S^N \vddash v:t \triangleright \Lambda^N ;\emptyset \cdot envT_S$ and we have $\Lambda^N \vhpdash \vv{D}$ from hypothesis. The premise of \rt{WTC} is satisfied, hence we have that $\Lambda^N ; envT_O^N \cdot envT_S^N \vddash \<h, (o, [x \mapsto v]) \cdot env_S, v\> : t \triangleright \Lambda^N ; \emptyset \cdot envT_S$
    
    Since we have that $\Lambda^N=\Lambda=\Lambda'$, we also have $\Lambda^N \lambdaeq{o} \Lambda'$.
    
    \end{subproof}
    \item[\rt{LParam}]
    \begin{subproof}
    
    \[\inference{\lin(v,h)}{ \vdash_{\vv{D}} \<h,(o,[x \mapsto v]) \cdot env_S, x \> \trans \<h, (o, [x \mapsto \sf{null}])\cdot env_S, v\>}\]
    
    Assume $\Lambda; envT_O \cdot envT_S \vddash \<h, (o, [x \mapsto v])\cdot env_S, x\> : t \triangleright \Lambda; envT_O' \cdot envT_S' \cdot (o, [x \mapsto \bot])$ and $envT_S=envT_S'\cdot(o,[x\mapsto t])$
    
    From \rt{LParam} we know $\lin(v, h) \implies h(o).\sf{class} = C\<t'\> \wedge h(o).\sf{usage} \neq end$

    From \rt{TLinPar} we have
    \begin{itemize}
        \item $\Lambda; envT_O \cdot envT_S' \cdot (o, [x \mapsto t]) \vddash x : t \triangleright \Lambda'; envT_O \cdot envT_S' \cdot (o, [x \mapsto \bot])$
        \item $\lin(t) \implies t=C\<t'\>[\U] \wedge \U \neq \sf{end}$
    \end{itemize}
    
    From \rt{TObj} we have
    \begin{itemize}
        \item $\Lambda' = \Lambda$
    \end{itemize}
    
    From \rt{WTP-Base} we have 
    \begin{itemize}
        \item $envT_S \vhdash_x (o, [x \mapsto v])$
        \item $envT_S' \cdot (o, [x \mapsto t]) \vhdash (o, [x \mapsto v])$
        \item $\sf{getType}(v, h) = t$
        \item $env_S = \varepsilon$
    \end{itemize}
   
    Let $\Lambda^N = \Lambda$, $envT_O^N = envT_O, v \mapsto C\<t'\>[\U]$ and $envT_S^N = envT_S' \cdot (o, [x \mapsto \bot])$
    
    We now show that
    \begin{itemize}
        \item $\Lambda^N \vddash h$ from hypothesis
        \item To show $envT_S^N \vhdash_v (o, [x \mapsto \sf{null}]) \cdot env_S$ we have $v$ is a base type and must be concluded using \rt{WTP-Base} and $\sf{getType}(\sf{null}, h') = \bot$ from the definition of $\sf{getType}$, hence the premise is satisfied.
        \item To show $envT_O^N \vhdash v$ we need to show that the premise is satisfied.
        \begin{itemize}
            \item $\sf{objects}(x) = \emptyset = \dom{envT_O} \implies \dom{envT_O^N} = \llbar v \rrbar = \sf{objects}(v)$ 
            \item $envT_O^N(v)=t=\sf{getType}(v, h)$
            \item From \rt{WTP} $v \in \dom{h}$ and $\sf{getType}(v, h)=t$
        \end{itemize}
        \item $\Lambda^N; envT_O^N \cdot envT_S^N \vddash v : t \triangleright \Lambda^{N}; envT_O \cdot envT_S^N$
        \item $\Lambda^N \vhdash \vv{D}$ from hypothesis.
    \end{itemize}  
    
    We have concluded the premises for \rt{WTC}, hence we conclude $\Lambda^N; envT_O^N \cdot envT_S^N \vddash v : t \triangleright \Lambda^N; envT_O \cdot envT_S' \cdot (o, [x \mapsto \bot])$
    
    Since we have that $\Lambda^N=\Lambda$, we also have $\Lambda^N \lambdaeq{o} \Lambda$.

    \end{subproof}
    
    \item[\rt{SwF}]
    \begin{subproof}
    \[\inference{h(o).f = o' & h(o').\sf{usage} \trans[l_i] \U & l_i \in L}{\begin{gathered}\vddash \<h, (o,s) \cdot env_S, \sf{switch}_{f.m}(l_i)\{l_j : e_j\}_{l_j \in L}\> \trans \\ \hspace{3.5cm}\<h\{o.f \mapsto C\<t'\>[\U]\}, (o,s) \cdot env_S, e_i\>\end{gathered}}\]
    We assume $\Lambda; \Delta \cdot (o'',S) \vddash \<h, (o,s) \cdot env_S, \sf{switch}_{f.m}(l_i)\{l_j : e_j\}_{l_j \in L}\> : t \triangleright \Lambda'; \Delta'$. Let $envT_O \cdot envT_S = \Delta \cdot (o'',S)$ and $h' = h\{o.f \mapsto C\<t'\>[\U]\}$.
    
    From \rt{WTC} we know:
    \begin{itemize}
        \item $\Lambda \vddash h$
        \item $envT_S \vhdash_{\sf{switch}_{f.m}(l_i)\{l_j : e_j\}_{l_j \in L}} (o,s) \cdot env_S$
        \item $envT_O \vhdash \sf{switch}_{f.m}(l_i)\{l_j : e_j\}_{l_j \in L}$
        \item $\Lambda; \Delta \cdot (o'', S) \vddash \sf{switch}_{f.m}(l_i)\{l_j : e_j\}_{l_j \in L} : t \triangleright \Lambda'; \Delta'$ 
        \item $\Lambda \vddash \vv{D}$
    \end{itemize}
    
    From \rt{WTP-Sw} we know $envT_S \vhdash_{l_i} (o, s) \cdot env_s$
    
    From \rt{WTP-Base} we know:
    \begin{itemize}
        \item $envT_S \cdot (o'', [x \mapsto t'']) \vhdash_{l_i} (o,[x \mapsto v])$
        \item $o = o''$
    \end{itemize}
    
    From \rt{TSwF} we know:
    \begin{itemize}
        \item $\Lambda; \Delta \cdot (o,S) \vddash l_i : L \triangleright \Lambda^{(3)}, o.f \mapsto C\<t'\>[(\<l_i : u_i\>_{l_i \in L})^{\vv{E}}]; envT_O^{(3)} \cdot envT_S^{(3)} \cdot (o,S'')$
        \item $\forall l_i \in L . \ \Lambda^{(3)}, o.f \mapsto C\<t'\>[u_i^{\vv{E}}]; envT_O^{(3)} \cdot envT_S^{(3)} \cdot (o, S'') \vddash e_i : t \triangleright \Lambda''; \Delta'' \cdot (o,S')$
    \end{itemize}

    We know from \rt{WTC} and \rt{TSwF}:
    \begin{itemize}
        \item $\Lambda'' = \Lambda'$
        \item $\Lambda^{(3)} = \Lambda$
        \item $\Delta'' \cdot (o,S') = \Delta'$
    \end{itemize}
    
    Now let $\Lambda^N = \Lambda^{(3)}, o.f \mapsto C\<t'\>[u_i^{\vv{E}}]$ and $\Delta^N = envT_O^N \cdot envT_S^N = envT_O^{(3)} \cdot envT_S^{(3)} \cdot (o,S'')$. We now have to show that $\Lambda^N; \Delta^N \vddash \<h\{o.f \mapsto C\<t'\>[\U]\}, (o,s) \cdot envs, e_i\> : t \triangleright \Lambda^{N'}; \Delta^{N'}$ where $\Lambda' \lambdaeq{o} \Lambda^{N'}$ by showing that the premises of \rt{WTC} are satisfied.
    
    We can see that the only changes we have made to the heap $h$ and field type environment $\Lambda$ is updating the type state of field $f$, we therefore have that $\Lambda^N \vddash h\{o.f \mapsto C\<t'\>[\U]\}$ is satisfied. We know that we have not made any other changes to the environment because typing $l_i : L$ using \rt{TLit} does not change the environment. 
    Similarly, we have that $envT_O^N = envT_O^{(3)} = envT_O$ and $envT_S^N = envT_S^{(3)} \cdot (o,S'') = envT_S$. We now have that $envT_S^N \vhdash_{e_i} (o,s) \cdot env_S$ is satisfied from \rt{WTP} since the parameter stack environment has not changed after typing a label and $\sf{returns}(e_i) = 0$ because $e_i$ is a well-formed expression, hence it can only be concluded with \rt{WTP-Base} and we know the premise is satisfied. $envT_O^N \vhpdash e_i$ is satisfied from Lemma \ref{lemma:objectagreement} and our assumption. We know that the object type environment has not changed after typing a label and from linearity we know that the updated object in the heap does not occur in both $envT_O$ and $\Lambda$ hence we do not have to consider the newly updated object in $envT_O$ and from Lemma \ref{lemma:objectagreement} $\sf{objects}(\switchf{l_i}) = \sf{objects}(e_i)$ hence from our assumption the premises of \rt{WTE} are satisfied.
    
    
    $\Lambda^N; envT_O^N \cdot envT_S^N \vddash e_i : t \triangleright \Lambda^{N'}; \Delta^{N'}$ is satisfied from our assumption, specifically the second premise of \rt{TswF} since we know the environments are the same. 
    
    We now have to show $\Lambda^N \vhpdash \vv{D}$, from \rt{WTD} we have the premise $\forall o \in \dom{h'}$, $\Lambda^N(o).\sf{class} = h'(o).\sf{class}$ is satisfied since the field we have updated in $h'$ is also updated accordingly in $\Lambda^N$. From \rt{TClass} and \rt{TCCh} we have that $\Theta; \Lambda(h'(o).f).\sf{envT}_{\sf{F}} \vddash C\<t'\>[(\<l_i : u_i\>_{l_i \in L})^{\vv{E}}] \triangleright envT_F'$ and we reached $C\<t'\>[u_i^{\vv{E}}]$ by following the usage then we have already checked that we can reach a terminated field type environment with \rt{TCCh} and all premises of \rt{WTD} are therefore satisfied. We know from the second premise of \rt{TswF} that $\Lambda^{N'} = \Lambda'$ hence $\Lambda' \lambdaeq{o} \Lambda^{N'}$ and $\Delta^{N'} = \Delta'$. 
    
    The premises of \rt{WTC} are all satisfied and we can conclude $\Lambda^N; \Delta^N \vddash \<h\{o.f \mapsto C\<t'\>[\U]\}, (o,s) \cdot envs, e_i\> : t \triangleright \Lambda'; \Delta'$.
    
    \end{subproof}
    
    \item[\rt{SwP}]
    \begin{subproof}
    \[\inference{h(o').\sf{usage} \trans[l_i] \U & l_i \in L}{ \begin{gathered} \vddash \<h, (o, [x \mapsto o']) \cdot env_S, \sf{switch}_{x.m}(l_i)\{l_j : e_j\}_{l_j \in L}\> \trans \\ \hspace{3.5cm} \<h\{o' \mapsto C\<t'\>[\U]\}, (o,[x \mapsto o']) \cdot env_S,e_i\>\end{gathered}}\]
    We assume $\Lambda; \Delta \cdot (o'',S) \vddash \<h, (o,s) \cdot env_S, \sf{switch}_{x.m}(l_i)\{l_j : e_j\}_{l_j \in L}\> : t \triangleright \Lambda'; \Delta'$. let $envT_O \cdot envT_S = \Delta \cdot (o'',S)$.
    
    From \rt{WTC} we know:
    \begin{itemize}
        \item $\Lambda \vddash h$
        \item $envT_S \vhdash_{\sf{switch}_{x.m}(l_i)\{l_j : e_j\}_{l_j \in L}} (o,s) \cdot env_S$
        \item $envT_O \vhdash \sf{switch}_{x.m}(l_i)\{l_j : e_j\}_{l_j \in L}$
        \item $\Lambda; envT_O \cdot envT_S \vddash \sf{switch}_{x.m}(l_i)\{l_j : e_j\}_{l_j \in L} : t \triangleright \Lambda'; \Delta'$  
        \item $\Lambda \vddash \vv{D}$
    \end{itemize}
    
    From \rt{WTP-Sw} we know $envT_S \vhdash_{l_i} (o, [x \mapsto o']) \cdot env_s$
    
    From \rt{WTP-Base} we know:
    \begin{itemize}
        \item $envT_S' \cdot (o'', [x \mapsto t'']) \vhdash_{l_i} (o, [x \mapsto o'])$
        \item $\sf{getType}(v, h) = t''$
        \item $o = o''$
        \item $env_S = \varepsilon$
    \end{itemize}
    
    From \rt{TswP} we know:
    \begin{itemize}
        \item $\Lambda;\Delta \cdot (o'', S) \vddash l_i : L \triangleright \Lambda^{(3)};envT_O^{(3)} \cdot envT_S^{(3)} \cdot(o, [x \mapsto C\<t'\>[(\<l_i : u_i\>_{l_i \in L})^{\vv{E}}]])$
        \item $\forall l_i \in L . \ \Lambda^{(3)};envT_O^{(3)} \cdot envT_S^{(3)} \cdot(o, [x\mapsto C\<t'\>[u_i]]) \vddash e_i : t \triangleright \Lambda'';\Delta'' \cdot (o, S')$
    \end{itemize}
    
    We know from \rt{WTC} and \rt{TswP}:
    \begin{itemize}
        \item $\Lambda'' = \Lambda'$
        \item $\Delta'' \cdot (o,S') = \Delta'$
    \end{itemize}
    
    From \rt{TLit} we know:
    \begin{itemize}
        \item $envT_O^{(3)} \cdot envT_S^{(3)} \cdot(o, [x \mapsto C\<t'\>[(\<l_i : u_i\>_{l_i \in L})^{\vv{E}}]]) = envT_O \cdot envT_S$
    \end{itemize}

    Now let $\Lambda^N = \Lambda^{(3)}$ and $\Delta^N = envT_O^N \cdot envT_S^N \cdot (o, [x \mapsto C\<t'\>[u_i^{\vv{E}}]]) = envT_O^{(3)} \cdot envT_S^{(3)} \cdot (o, [x \mapsto C\<t'\>[u_i^{\vv{E}}]])$. We now have to show that $\Lambda^N;\Delta^N \vddash \<h\{o' \mapsto C\<t'\>[\U]\}, (o, [x \mapsto o']) \cdot envs, e_i\> : t \triangleright \Lambda'; \Delta'$ by showing that the premises of \rt{WTC} are satisfied.
    
    We have typed $l_i : L$ using \rt{TLit} with this we know that the only change made to $h$ is an update to the typestate of object $o'$, however, we still have $\text{dom}(h) = \text{dom}(\Lambda^N)$ and the fields have not changed, hence $\Lambda^N \vddash h\{o' \mapsto C\<t'\>[\U]\}$ is satisfied since the premises of \rt{WTH} are satisfied. We also have that $envT_O^N \vhpdash e_i$ is satisfied from our assumption, linearity, and Lemma \ref{lemma:objectagreement}. We have $envT_S^N \cdot (o, [x \mapsto C\<t'\>[u_i^{\vv{E}}]]) \vhpdash_{e_i} (o,[x \mapsto o']) \cdot env_S$ is satisfied from \rt{WTP-Base} as a result of $\sf{returns}(e_i) = 0$ because $e_i$ is a well-formed expression, and only the topmost element on both parameter stack and $envT_S^N$ is changed where this change is also reflected in the heap $h'$, such that $\sf{getType}(o', h') = C\<t'\>[u_i^{\vv{E}}]$ hence the premises of \rt{WTP-Base} are satisfied. 
    
    $\Lambda^N; \Delta^N \cdot (o,[x \mapsto C\<t'\>[u_i^{\vv{E}}]]) \vddash e_i : t \triangleright \Lambda^{N'};\Delta^{N'} \cdot (o, S')$ is satisfied from our assumption, specifically, the second premise of \rt{TswP} since the environments are the same. Finally, we have $\Lambda^{N} \vhpdash \vv{D}$, from the premise we have that $\forall o \in \dom{h'}$, $\Lambda^N(o).\sf{class} = h'(o).\sf{class}$ is satisfied since we do not update any fields. From \rt{TClass} and \rt{TCCh} we have that $\Theta; \Lambda(o).\sf{envT}_{\sf{F}} \vddash C\<t'\>[(\<l_i : u_i\>_{l_i \in L})^{\vv{E}}] \triangleright envT_F'$ and from reachability we have that the remaining premise of \rt{WTD} is satisfied. We can therefore conclude $\Lambda^N \vhpdash \vv{D}$.
    
    We know from the second premise of \rt{TswP} that $\Lambda^{N'} = \Lambda'$ hence $\Lambda' \lambdaeq{o} \Lambda^{N'}$ and $\Delta^{N'} = \Delta'$. 
    
    The premises of \rt{WTC} are all satisfied and we can conclude 
    
    $\Lambda^N;\Delta^N \vddash \<h\{o' \mapsto C\<t'\>[\U]\}, (o, [x \mapsto o']) \cdot envs, e_i\> : t \triangleright \Lambda'; \Delta'$.
    
    \end{subproof}
    
    \item[\rt{Upd}]
    \begin{subproof}
    \[
    \inference{h(o).f = v' & \neg \lin(v', h)}{\vddash \<h, (o,s) \cdot env_S, f=v \> \trans\<h\{v/o.f\}, (o,s)\cdot env_S, \sf{unit}\>} \]
    
    Assume that $\Lambda; \Delta \vddash \<h, (o, s), f = v\> : \sf{void} \triangleright \Lambda'; \Delta'$ where $\Delta = envT_O \cdot envT_S$
    
    From \rt{TFld} we know:
    \begin{itemize}
        \item $\Lambda; envT_O \cdot envT_S' \cdot (o', S) \vddash f = v : \sf{void} \triangleright \Lambda\{o.f \mapsto t' \}; envT_O' \cdot envT_S'' \cdot (o', S') $
        \item $\Lambda; envT_O \cdot envT_S' \cdot (o' ,S) \vddash v :t' \triangleright \Lambda'', o.f \mapsto t; envT_O' \cdot envT_S'' \cdot (o', S')$
        \item $\neg\lin(t)$
        \item $\Lambda' = \Lambda\{o.f \mapsto t'\}$
        \item $\Delta' = envT_O' \cdot envT_S'' \cdot(o', S')$
    \end{itemize}
    
    From \rt{WTP-Base} we know: $o = o'$
    
    We have two cases, one where $v$ is an object and one where $v$ is a base value. 
    
    Case 1 $v$ is an object. From \rt{TObj} we know:
    \begin{itemize}
        \item $\Lambda'', o.f \mapsto t = \Lambda$
        \item $envT_O = envT_O', v \mapsto t'$
        \item $envT_S'' = envT_S'$
        \item $S=S'$
    \end{itemize}
    
    Case 2 $v$ is not an object. From \rt{TLit}, \rt{TBool}, \rt{TVoid} or \rt{TBot} we know
    \begin{itemize}
        \item $\Lambda'', o.f \mapsto t = \Lambda$
        \item $envT_O' = envT_O$
        \item $envT_S'' = envT_S'$
        \item $S = S'$
    \end{itemize} 
    
    We know $h' = h\substitute{v}{o.f}$
     
    Let $\Lambda^N = \Lambda\{o.f \mapsto t'\}$, $envT_O^N = envT_O'$ and $envT_S^N = envT_S$
    
    First we show $\Lambda^N \vddash h'$
    \begin{itemize}
        \item $\sf{getType}(h'(o).f, h') = t'=  \Lambda^N(o).f$, is true, since we update the field in $\Lambda^N$ and $h'$.
        \item $\dom{\Lambda^N} = \dom{h'}$ is trivially true as no objects are added.
    \end{itemize}
    
    $envT_S^N \vhpdash_{\sf{unit}} env_S$ is true from the premise of the assumption.
    
    $envT_O^N \vhpdash \sf{unit}$ is trivially true because $\dom{envT_O^N} = \dom{\sf{unit}} = \emptyset$ 
    
    \rt{TVoid} is trivially true for all environments, hence we can conclude. $\Lambda^N; envT_O^N \cdot envT_S^N \vdash \sf{unit} \triangleright \Lambda\{o.f \mapsto t'\}; envT_O' \cdot envT_S$
    
    To show $\Lambda^N \vhpdash \vv{D}$ we have to show the premise
    \begin{itemize}
        \item $\dom{\Lambda^N} = \dom{h'}$ is trivially true.
        \item $\Lambda^N(o).\sf{class} = h'(o).\sf{class}$ is trivially true.
        \item $\emptyset; \Lambda^N(o).\sf{envT}_\sf{F} \vddash \sf{getType}(o, h') \triangleright \Gamma'$ we have from reachability. 
    \end{itemize}
     
     We have shown the premise of \rt{WTC} and therefore we have $\Lambda^N; envT_O^N \cdot envT_S^N \vdash \<h\{v/o.f\}, (o,s)\cdot env_S, \sf{unit}\> : \sf{void} \triangleright \Lambda\{o.f \mapsto t'\}; envT_O' \cdot envT_S$
     
    We have that $\Lambda\{o.f \mapsto t'\} \lambdaeq{o} \Lambda\{o.f \mapsto t'\}$ is trivially true.
     
    \end{subproof}
    \item[\rt{IfC}]
    
    \begin{subproof}
    Assume $\<h, env_S, \sif{e}{e_1}{e_2}\> \trans \<h', env_S', \sif{e'}{e_1}{e_2}\>$ and $\Lambda;\Delta \vddash \<h, env_S, \sif{e}{e_1}{e_2}\> : t \triangleright \Lambda' ; \Delta'$ where $\Delta=envT_O \cdot envT_S$. Then from \rt{WTC} we have:
    \begin{itemize}
        \item $\Lambda \vddash h$
        \item $envT_S \vhdash_{\sif{e}{e_1}{e_2}} env_S$
        \item $envT_O \vhdash \sif{e}{e_1}{e_2}$
        \item $\Lambda;\Delta \vddash \sif{e}{e_1}{e_2} : t \triangleright \Lambda';\Delta'$
        \item $\Lambda\vhdash \vv{D}$
    \end{itemize}
    
    From \rt{TIf} we have:
    \begin{itemize}
        \item $\Lambda;\Delta \vddash e : \sf{Bool} \triangleright \Lambda'';\Delta''$
        \item $\Lambda'';\Delta'' \vddash e_i : t \triangleright \Lambda';\Delta'$
    \end{itemize}
    
    From \rt{IfC} we have that $\<h, env_S, e\> \trans \<h', env_S', e'\>$, and from Lemma \ref{lemma:subexpression_welltyped} we know that $\Lambda;\Delta \vddash \<h, env_S, e\>$. From our induction hypothesis, we have that $\exists \Lambda^{(3)}, \Delta^{(3)}.\ \Lambda^{(3)};\Delta^{(3)} \vddash \<h', env_S', e'\> : \sf{Bool} \triangleright \Lambda^S;\Delta''$ where $\Lambda'' \lambdaeq{o} \Lambda^S$ and $\Delta^{(3)} = envT_O^{(3)} \cdot envT_S^{(3)}$ 
    
    We can now show that $\Lambda^{(3)};\Delta^{(3)} \vddash \<h', env_S', \sif{e'}{e_1}{e_2}\> : t \triangleright \Lambda^{S'}; \Delta'$, by showing that all premises of \rt{WTC} holds.
    
    \begin{itemize}
        \item $\Lambda^{(3)}\vddash h'$ follows from IH
        \item $envT_S^{(3)} \vdash^{h'}_{\sif{e'}{e_1}{e_2}} env_S$ is satisfied since we have $envT_S^{(3)} \vdash^{h'}_{e'} env_S$ which is the premise of \rt{WTP-If}
        \item $envT_O^{(3)} \vhdash \sif{e'}{e_1}{e_2}$ since $\objects{\sif{e'}{e_1}{e_2}}=\objects{e'}$ because of well-formedness
        \item $\Lambda^{(3)} \vhpdash \vv{D}$ follows from IH.
    \end{itemize}
    
    Since we know $\Lambda^{(3)};\Delta^{(3)}\vddash e' : \sf{Bool} \triangleright \Lambda^S;\Delta''$ from Lemma \ref{lemma:consistency} and that $\Lambda''\lambdaeq{o} \Lambda^S$, we can use weakening to conclude $\Lambda^S;\Delta''\vddash e_i : t \triangleright \Lambda^{S'};\Delta'$ where $\Lambda' \lambdaeq{o} \Lambda^{S'}$.
    
    These are the two premises of $\rt{TIf}$, hence we can conclude $\Lambda^{(3)};\Delta^{(3)} \vddash \sif{e'}{e_1}{e_2} : t \triangleright \Lambda^{S'};\Delta'$.
    
    We have shown all premises of \rt{WTC}, hence we can conclude
    
    $\Lambda^{3};\Delta^{3}\vddash \<h', env_S', \sif{e'}{e_1}{e_2}\> : t \triangleright \Lambda^{S'};\Delta'$.
    \end{subproof}
    
    \item[\rt{RetC}]
    
    \begin{subproof}
    Assume $\<h, env_S \cdot (o, s), \return{e}\> \trans \<h', env_S' \cdot (o, s), \return{e'}\> $ and that $\Lambda;\Delta\vddash \<h, env_S \cdot (o, s), \return{e}\> : t \triangleright \Lambda'; \Delta'$. Let $\Delta=envT_O \cdot envT_S$.
    
    From \rt{WTC} we have:
    \begin{itemize}
        \item $envT_S \vdash^h_{\return{e}} env_S \cdot (o, s)$
        \item $envT_S = envT_S' \cdot (o, S)$ from \rt{WTP-Ret}
        \item $\Lambda;\Delta \vddash \return{e} : t \triangleright \Lambda';\Delta'$
    \end{itemize}
    
    From \rt{TRet} we have that $\Lambda;envT_O\cdot envT_S' \vddash e : t \triangleright \Lambda';\Delta''$, where $\Delta''=\Delta^{(4)} \cdot (o', [x \mapsto t'])$, $\sf{terminated}(t')$ and that $\Delta' = \Delta^{(4)}\cdot(o, S)$.
    
    From Lemma 36 we know that $\Lambda;envT_O \cdot envT_S' \vddash \<h, env_S, e\> : t \triangleright \Lambda';\Delta''$. 
    
    Then from our induction hypothesis we have $\exists \Lambda^{(3)}, \Delta^{(3)}.\ \Lambda^{(3)};\Delta^{(3)} \vddash \<h', env_S', e'\> : t \triangleright \Lambda^S;\Delta''$ where $\Lambda'\lambdaeq{o'}\Lambda^S$. We can also assume $\Lambda'\lambdaeq{o}\Lambda^S$, since because of linearity we cannot access any fields in $o$ inside $e$, hence $e$ will also be well-typed if $\Lambda'(o)=\Lambda^S(o)$.
    
    We now show that $\Lambda^{(3)};\Delta^{(3)} \cdot (o, S) \vddash \<h', env_S' \cdot (o, s), \return{e'}\> : t \triangleright \Lambda^S;\Delta'$, by showing the premises of \rt{WTC} are satisfied.
    
    \begin{itemize}
        \item $\Lambda^{(3)} \vddash h'$ is satisfied from our assumptions
        \item $envT_S^{(3)}\cdot (o, S) \vdash^{h'}_{\return{e'}} env_S' \cdot (o, s)$ is satisfied since we know that $envT_S^{(3)} \vdash^{h}_{e'} env_S'$, we also know from linearity that the only binding of $o$ is in the parameters, that are not present when executing $e'$, hence we must have $h'(o)=h(o)$. But if this is the case, then from $envT_S \vhdash_{\return{e}} env_S \cdot (o, s)$ we know that $S$ and $s$ matches types in $h'$.
        \item $envT_O^{(3)} \vhpdash \return{e'}$ is trivial since $\objects{\return{e'}}=\objects{e'}$.
        \item $\Lambda^{(3)} \vddash \vv{D}$ is satisfied by our induction hypothesis.
    \end{itemize}
    
    We know from \rt{WTC} that $\Lambda^{(3)};\Delta^{(3)} \vddash e' : t \triangleright \Lambda^S;\Delta''$. We can then use \rt{TRet} to conclude $\Lambda^{(3)};\Delta^{(3)} \cdot (o, S) \vddash \return{e'} : t \triangleright \Lambda^S; \Delta'$.

    Hence we can conclude that $\Lambda^{(3)};\Delta^{(3)}\vddash \<h', env_S' \cdot (o, s), \return{e'}\> : t \triangleright \Lambda^S;\Delta'$.
    
    \end{subproof}
    
    \item[\rt{FldC}]
    
    \begin{subproof}
    
    Assume $\<h, env_S, f = e\> \trans \<h', env_S', f = e'\>$ and $\Lambda;\Delta \vddash \<h, env_S, f = e\> : \sf{void} \triangleright \Lambda';\Delta'$
    
    From \rt{FldC} we have that $\<h, env_S, e\> \trans \<h', env_S', e'\>$.
    
    From \rt{WTC} we know:
    \begin{itemize}
        \item $\Lambda \vddash h$
        \item $envT_O \vhdash f = e$
        \item $envT_S \vdash^h_{f = e} env_S$
        \item $\Lambda;\Delta \vddash f = e : \sf{void} \triangleright \Lambda';\Delta'$
        \item $\Lambda \vhdash \vv{D}$
    \end{itemize}
    
    From \rt{TFld} we have:
    \begin{itemize}
        \item $\Lambda;\Delta \vddash e : t \triangleright \Lambda'', o.f \mapsto t';\Delta'$
        \item $\Delta = \Delta'' \cdot (o, S)$
        \item $C\<t'''\> = \Lambda(o).\sf{class}$
        \item $\sf{agree}(C\<t'''\>.\sf{fields}(f), t)$
        \item $\neg\lin(t')$
    \end{itemize}
    
    From Lemma \ref{lemma:subexpression_welltyped} we know that $\Lambda;\Delta\vddash \<h, env_S, e\> : t \triangleright \Lambda'', o.f \mapsto t';\Delta'$.
    
    From our induction hypothesis we have
    \[\exists \Lambda^{(3)},\Delta^{(3)}.\ \Lambda^{(3)};\Delta^{(3)}\vddash \<h', env_S', e'\> : t \triangleright \Lambda^S; \Delta'\]
    where $\Lambda'', o.f \mapsto t' \lambdaeq{o} \Lambda^S$
    
    From \rt{WTC} we now know:
    \begin{itemize}
        \item $\Lambda^{(3)} \vddash h'$
        \item $envT_O^{(3)} \vhpdash e'$
        \item $envT_S^{(3)} \vdash^{h'}_{e'} env_S'$
        \item $\Lambda^{(3)};\Delta^{(3)} \vddash e' : t \triangleright \Lambda^S;\Delta'$
        \item $\Lambda^{(3)} \vhpdash \vv{D}$
    \end{itemize}
    
    We then show $\Lambda^{(3)};\Delta^{(3)}\vddash \<h', env_S', f = e'\> : \sf{void} \triangleright \Lambda^S\{o.f \mapsto t\};\Delta'$ by showing that all premises of \rt{WTC} are satisfied. 
    
    \begin{itemize}
        \item $\Lambda^{(3)} \vddash h'$ follows from assumption
        \item $envT_O^{(3)} \vhpdash f = e'$ follows from assumption since $\objects{f = e'} = \objects{e'}$
        \item $envT_S^{(3)} \vdash^{h'}_{f=e'} env_S'$ follows from \rt{WTP-Fld} since the premise $envT_S^{(3)} \vdash^{h'}_{e'} env_S'$ follows from our assumption
    \end{itemize}
    
    We now show that $\Lambda^{(3)};\Delta^{(3)} \vddash f = e' : \sf{void} \triangleright \Lambda^S\{o.f \mapsto t\};\Delta'$.
    
    Since we have that $\Lambda'', o.f \mapsto t' \lambdaeq{o} \Lambda^S$ we know that $\Lambda^S = \Lambda^{S'}, o.f \mapsto t'$. We also know that $\neg\lin(t')$. We know that $\sf{agree}(C\<t'''\>, t)$, so we only need to argue for $\Lambda^{(3)}(o).\sf{class}=C\<t'''\>$. This must be the case however, since we know that the class binding cannot be updated in any field typing environment, and we know that since $\Lambda(o).\sf{class}=C\<t'''\>$, then $\Lambda'(o).\sf{class}=C\<t'''\>$. But since $\Lambda'\lambdaeq{o} \Lambda^S\{o.f \mapsto t\}$, we must also have that $\Lambda^{(3)}(o).\sf{class}=C\<t'''\>$ (since we cannot update the class binding in $e'$).
    
    We have shown all premises of \rt{TFld}, hence we can conclude that $\Lambda^{(3)};\Delta^{(3)} \vddash \<h', env_S', f = e'\> : \sf{void} \triangleright \Lambda^S\{o.f \mapsto t\};\Delta'$.
    
    \end{subproof}
    
    \item[\rt{CallFC}]
    
    \begin{subproof}
        $\<h, env_S, f.m(e)\> \trans \<h', env_S', f.m(e')\>$ and $\Lambda;\Delta \vddash \<h, env_S, f.m(e)\> : t' \triangleright \Lambda'; \Delta'$ where $\Delta = envT_O \cdot envT_S$. 
        
        From \rt{MthdC} we have $\<h, env_S, e\> \trans \<h', env_S', e'\>$
        
        From \rt{WTC}
        \begin{itemize}
            \item $\Lambda \vddash h$
            \item $envT_S \vhdash_{f.m(e)} env_S$
            \item $envT_O \vhdash f.m(e)$
            \item $\Lambda;\Delta \vddash f.m(e) : t' \triangleright \Lambda';\Delta'$
            \item $\Lambda \vhdash \vv{D}$
        \end{itemize}
        
        From \rt{TCallF} we have:
        \begin{itemize}
            \item $\Lambda;\Delta''\cdot(o, S) \vddash e : t \triangleright \Lambda''\{o.f \mapsto C\<t''\>[\U]\};\Delta'''\cdot(o, S')$
            \item $\U \trans[m] \W$
            \item $t'\ m(t\ x)\{e''\}\in C\<t''\>.\sf{methods}_{\vv{D}}$
            \item $\Delta' = \Delta'''\cdot(o, S')$
            \item $\Lambda' = \Lambda''\{o.f \mapsto C\<t''\>[\W]\}$
        \end{itemize}
        
        From Lemma \ref{lemma:subexpression_welltyped} we have $\Lambda;\Delta \vddash \<h, env_S, e\> : t \triangleright \Lambda''\{o.f \mapsto C\<t''\>[\U]\};\Delta'$
        
        From our induction hypothesis we have 
        \[
        \exists \Lambda^{(3)}, \Delta^{(3)}. \ \Lambda^{(3)};\Delta^{(3)} \vddash \<h', env_S', e'\> : t \triangleright \Lambda^S;\Delta'
        \]
        where $\Lambda''\{o.f \mapsto C\<t''\>[\U]\} \lambdaeq{o} \Lambda^S$.
        
        From \rt{WTC} we then have:
        \begin{itemize}
            \item $\Lambda^{(3)} \vddash h'$
            \item $envT_S^{(3)}\vhpdash_{e'} env_S'$
            \item $envT_O^{(3)} \vhpdash e'$
            \item $\Lambda^{(3)};\Delta^{(3)} \vddash e' : t \triangleright \Lambda^S;\Delta'$
            \item $\Lambda^{(3)} \vhpdash \vv{D}$
        \end{itemize}
        
        We can then show $\Lambda^{(3)};\Delta^{(3)} \vddash \<h', env_S', f.m(e')\> : t \triangleright \Lambda^{S'}\{o.f \mapsto C\<t''\>[\W]\};\Delta'$.
        
        \begin{itemize}
            \item $\Lambda^{(3)} \vddash h'$
            \item $envT_S^{(3)} \vhpdash_{f.m(e')} env_S'$ follows from \rt{WTP-Mthd}
            \item $envT_O^{(3)} \vhpdash f.m(e')$ is trivial since $\objects{f.m(e')}=\objects{e'}$.
            \item $\Lambda^{(3)} \vhpdash \vv{D}$
        \end{itemize}
        It only remains to show $f.m(e')$ is well typed. Since $\Lambda''\{o.f \mapsto C\<t''\>[\U]\} \leq \Lambda^S$, we know that $\Lambda^S = \Lambda^{S'}\{o.f \mapsto C\<t''\>[\U]\}$. 
        
        Then we know $\Lambda^{(3)};\Delta^{(3)}\vddash e' : t \triangleright \Lambda^{S'}\{o.f \mapsto C\<t''\>[\U]\};\Delta'$. From our assumptions we have $\U \trans[m] \W$ and $t'\ m(t\ x)\{e''\}\in C\<t''\>.\sf{methods}$. Hence all premises of \rt{TCallF} are satisfied, and we have $\Lambda^{(3)};\Delta^{(3)} \vddash f.m(e') : t' \triangleright \Lambda^{S'}\{o.f \mapsto C\<t''\>[\W]\};\Delta'$.
        
        We have that $\Lambda' \lambdaeq{o} \Lambda^{S'}\{o.f \mapsto C\<t''\>[\W]\}$ since we have that $\Lambda''\lambdaeq{o} \Lambda^S$. The only change between $\Lambda''$ and $\Lambda'$ is the binding of $o.f$, which is reflected in $\Lambda^{S'}\{o.f \mapsto C\<t''\>[\W]\}$.
        
        Hence we can conclude $\Lambda^{(3)};\Delta^{(3)}\vddash \<h', env_S', f.m(e')\> : t \triangleright  \Lambda^{S'}\{o.f \mapsto C\<t''\>[\W]\};\Delta'$.
        
    \end{subproof}
    
    \item[\rt{CallPC}]
    \begin{subproof}
        Assume $\<h, env_S, x'.m(e)\> \trans \<h', env_S', x'.m(e')\>$ and $\Lambda; \Delta \vddash \<h, env_S, x'.m(e)\> : t' \triangleright \Lambda'; \Delta'$ where $\Delta = envT_O \cdot envT_S \cdot (o,S)$.
        
        From \rt{MthdC} we have:
        \begin{itemize}
            \item $\<h, env_S, e\> \trans \<h', env_S', e'\>$
        \end{itemize}
        
        From \rt{WTC} we have:
        \begin{itemize}
            \item $\Lambda \vddash h$
            \item $envT_S \vhdash_{x'.m(e)} env_S$
            \item $envT_O \vhdash x'.m(e)$
            \item $\Lambda; \Delta \vddash x'.m(e) : t' \triangleright \Lambda'; \Delta'$
            \item $\Lambda \vhdash \vv{D}$
        \end{itemize}
        
        From \rt{TCallP} we have:
        \begin{itemize}
            \item $\Lambda;envT_O \cdot envT_S \cdot (o, S) \vdash_{\vv{D}} e:t\triangleright \Lambda';\Delta'' \cdot (o, [x\mapsto C\<t''\>[\U]])$
            \item $\mathcal{U}\trans[m] \W$
            \item $t'\ m(t\ x)\{ e'' \}\in C\<t''\>.\sf{methods}_{\vv{D}}$
            \item $\Delta' = \Delta'' \cdot (o, [x \mapsto C\<t''\>[\W]])$
        \end{itemize}
        
        From Lemma \ref{lemma:subexpression_welltyped} we know that $\Lambda; \Delta \vddash \<h, env_S, e\> : t \triangleright \Lambda'; \Delta'' \cdot (o, [x \mapsto C\<t''\>[\U]])$.
        
        From our induction hypothesis we have $$\exists \Lambda^{(3)}, \Delta^{(3)} \ . \ \Lambda^{(3)}; \Delta^{(3)} \vddash \<h', env_S', e'\> : t \triangleright \Lambda^{S}; \Delta'' \cdot (o, [x \mapsto C\<t''\>[\U]])$$
        
        where $\Lambda' \lambdaeq{o} \Lambda^{S}$.
        
        From \rt{WTC} we then have:
        \begin{itemize}
            \item $\Lambda^{(3)} \vddash h'$
            \item $envT_S^{(3)} \vhpdash_{e'} env_S'$
            \item $envT_O^{(3)} \vhpdash e'$
            \item $\Lambda^{(3)}; \Delta^{(3)} \vddash e' : t \triangleright \Lambda^S; \Delta'' \cdot (o, [x \mapsto C\<t''\>[\U]])$
            \item $\Lambda^{(3)} \vhpdash \vv{D}$
        \end{itemize}
        
        We now have to show $\Lambda^{(3)}; \Delta^{(3)} \vddash \<h', env_S', x'.m(e')\> : t' \triangleright \Lambda^{S}; \Delta'$ by showing that the premises of \rt{WTC} are satisfied:
        \begin{itemize}
            \item $\Lambda^{(3)} \vddash h'$ from assumption
            \item $envT_S^{(3)} \vhpdash_{x'.m(e')} env_S$ from assumption and \rt{WTP-Mthd}
            \item $envT_O^{(3)} \vhpdash x'.m(e')$ from assumption since $\sf{objects}(x'.m(e')) = \sf{objects}(e')$
            \item $\Lambda^{(3)} \vhpdash \vv{D}$ from assumption
        \end{itemize}
        
        We now have to show that $x'.m(e')$ is well-typed. We know the following: $\Lambda^{(3)}; \Delta^{(3)} \vddash e' : t \triangleright \Lambda^S; \Delta'' \cdot (o, [x \mapsto C\<t''\>[\U]])$, $\U \trans[m] \W$, and $t' \ m(t \ x)\{e''\} \in C\<t''\>.\sf{methods}_{\vv{D}}$. Hence all premises of \rt{TCallP} are satisfied. We know $\Lambda' \lambdaeq{o} \Lambda^{S}$ and we can now conclude $\Lambda^{(3)}; \Delta^{(3)} \vddash \<h', env_S', x'.m(e')\> : t' \triangleright \Lambda^{S}; \Delta'$.
        
    \end{subproof}
    
    \item[\rt{SeqC}]
    
    \begin{subproof}
        Assume $\<h, env_S, e;e''\> \trans \<h', env_S', e';e''\>$ and $\Lambda;\Delta \vddash \<h', env_S', e;e''\> : t' \triangleright \Lambda'; \Delta'$. We know from \rt{SeqC} we know $\<h, env_S, e\> \trans \<h', env_S', e'\>$.
        
        From \rt{WTC} we have:
        \begin{itemize}
            \item $\Lambda;\Delta \vddash e;e'' : t' \triangleright \Lambda';\Delta'$
        \end{itemize}
        
        From \rt{TSeq} we know:
        \begin{itemize}
            \item $\Lambda';\Delta \vddash e : t \triangleright \Lambda'';\Delta''$
            \item $\Lambda'';\Delta'' \vddash e'' : t' \triangleright \Lambda';\Delta'$
        \end{itemize}
        
        From Lemma \ref{lemma:subexpression_welltyped} we know $\Lambda;\Delta\vddash \<h, env_S, e\> : t \triangleright \Lambda'';\Delta''$.
        
        From our induction hypothesis we have
        
        \[
            \exists \Lambda^{(3)},\Delta^{(3)} \vddash \<h', env_S', e'\> : t \triangleright \Lambda^S;\Delta''
        \]
        where $\Lambda''\lambdaeq{o} \Lambda^S$. 
        
        From \rt{WTC}
        \begin{itemize}
            \item $\Lambda^{(3)}\vddash h'$
            \item $envT_S^{(3)} \vdash^{h'}_{e'} env_S'$
            \item $envT_O^{(3)} \vhpdash e'$
            \item $\Lambda^{(3)};\Delta^{(3)} \vddash e' : t \triangleright \Lambda^S;\Delta''$
            \item $\Lambda^{(3)}\vhpdash \vv{D}$
        \end{itemize}
        
        We now show $\Lambda^{(3)};\Delta^{(3)}\vddash \<h', env_S', e';e''\>  : t' \triangleright \Lambda^{S'}; \Delta''$ by showing the premises of \rt{WTC}.
        
        \begin{itemize}
            \item $\Lambda^{(3)}\vddash h'$ follows from assumption
            \item $envT_S^{(3)} \vdash^{h'}_{e';e''} env_S'$ Follows from \rt{WTP-Seq}
            \item $envT_O^{(3)} \vhpdash e'$ is trivial since $\objects{e';e''}=\objects{e'}$ 
            \item $\Lambda^{(3)}\vhpdash \vv{D}$
        \end{itemize}
        
        We need to show $\Lambda^{(3)};\Delta^{(3)}\vddash e';e'' : t' \triangleright \Lambda^{S'};\Delta'$. We do this using \rt{TSeq}. 
        
        We know $\Lambda^{(3)};\Delta^{(3)} \vddash e' : t \triangleright \Lambda^S;\Delta''$ and $\neg\lin(t)$. Since $e;e''$ is well formed, then we know that $\sf{returns}(e'')=0$, we can use Lemma \ref{lemma:consistency} to conclude $\Lambda^{S};\Delta''\vddash e'' : t' \triangleright \Lambda^{S'};\Delta'$ where $\Lambda^{S'} \lambdaeq{o}\Lambda'$.
        
        In total we can conclude $\Lambda^{(3)};\Delta^{(3)} \vddash \<h', env_S', e';e''\> : t' \triangleright \Lambda^{S'}; \Delta'$.
    \end{subproof}
    
    \item[\rt{SwFC}]
    
    \begin{subproof}
    Assume $\<h, env_S, \switchf{e}\> \trans \<h', env_S', \switchf{e'}\>$ and $\Lambda; \Delta \vddash \<h, env_S, \switchf{e}\> : t \triangleright \Lambda'; \Delta'$ where $\Delta = envT_O \cdot envT_S$.
    
    From \rt{SwC} we have:
    \begin{itemize}
        \item $\<h, env_S, e\> \trans \<h', env_S', e'\>$
    \end{itemize}
    
    From \rt{WTC} we have:
    \begin{itemize}
        \item $\Lambda; \Delta \vddash \switchf{e} : t \triangleright \Lambda'; \Delta'$
    \end{itemize}
    
    From \rt{TSwF} we have:
    \begin{itemize}
        \item $\Lambda;\Delta \vddash e : L \triangleright \Lambda'',o.f \mapsto C\<t'\>[(\<l_i : u_i\>_{l_i \in L})^{\vv{E}}]];\Delta''$
        \item $\forall l_i \in L . \ \Lambda'', o.f \mapsto C\<t'\>[u_i^{\vv{E}}];\Delta'' \vddash e_i : t \triangleright \Lambda';\Delta'$
    \end{itemize}
    
    From Lemma \ref{lemma:subexpression_welltyped} we know that $\Lambda; \Delta \vddash \<h, env_S, e\> : L \triangleright \Lambda'',o.f \mapsto C\<t'\>[(\<l_i : u_i\>_{l_i \in L})^{\vv{E}}]]; \Delta''$.
    
    Then from our induction hypothesis we have $$\exists \Lambda^{(3)}; \Delta^{(3)} \ . \ \Lambda^{(3)}; \Delta^{(3)} \vddash \<h', env_S', e'\> : L \triangleright \Lambda^S; \Delta''$$ 
    
    where $\Delta = envT_O^{(3)} \cdot envT_S^{(3)}$ and $\Lambda'',o.f \mapsto C\<t'\>[(\<l_i : u_i\>_{l_i \in L})^{\vv{E}}]] \lambdaeq{o} \Lambda^S$.
    
    From \rt{WTC} we have:
    \begin{itemize}
        \item $\Lambda^{(3)} \vddash h'$
        \item $envT_O^{(3)} \vhpdash e'$
        \item $envT_S^{(3)} \vhpdash_{e'} env_S'$
        \item $\Lambda^{(3)} \vhpdash \vv{D}$
        \item $\Lambda^{(3)}; \Delta^{(3)} \vddash e' : L \triangleright \Lambda^S; \Delta''$
    \end{itemize}
    
    We can now show $\Lambda^{(3)}; \Delta^{(3)} \vddash \<h', env_S', \switchf{e'}\> : t \triangleright \Lambda^{S'}; \Delta'$ by showing the premises of \rt{WTC}.
    \begin{itemize}
        \item $\Lambda^{(3)} \vddash h'$ from assumption
        \item $envT_S^{(3)} \vhpdash_{\switchf{e'}} env_S'$ from \rt{WTP-Sw}
        \item $envT_O^{(3)} \vhpdash \switchf{e'}$ from assumption since $\sf{objects}(\switchf{e'}) = \sf{objects}(e')$
        \item $\Lambda^{(3)} \vhpdash \vv{D}$ from assumption
    \end{itemize}
    
    We now have to show that $\switchf{e'}$ is well formed, we have that $\sf{returns}(e_i) = 0$. From Lemma \ref{lemma:consistency} we know that $\Lambda^{S'} \lambdaeq{o} \Lambda'$
    
    We now have that all premises of \rt{TSwF} are satisfied and we can conclude $\Lambda^{(3)}; \Delta^{(3)} \vddash \<h', env_S', \switchf{e'}\> : t \triangleright \Lambda^{S'}; \Delta'$.

    \end{subproof}
    
    \item[\rt{SwPC}]
    
    \begin{subproof}
    Assume $\<h, env_S, \switchp{e}\> \trans \<h', env_S', \switchp{e'}\>$ and $\Lambda;\Delta \vddash \<h, env_S, \switchp{e}\> : t \triangleright \Lambda'; \Delta'$ where $\Delta = envT_O \cdot envT_S$.
    
    From \rt{SwC} we have $\<h, env_S, e\>\trans \<h', env_S', e'\>$.
    
    From \rt{WTC} we know that $\Lambda;\Delta\vddash \switchp{e} : t \triangleright \Lambda';\Delta'$
    
    From \rt{TSwP} we know
    \begin{itemize}
        \item $\Lambda;\Delta \vddash e : L \triangleright \Lambda''; \Delta''' \cdot (o, [x \mapsto C\<t'\>[(\<l_i : u_i\>_{l_i \in L})^{\vv{E}}]])$
        \item $\forall l_i \in L . \ \Lambda'';\Delta'''\cdot (o, [x \mapsto C\<t'\>[u_i^{\vv{E}}]]) \vddash e_i : t \triangleright \Lambda';\Delta'$
    \end{itemize}
    
    From Lemma \ref{lemma:subexpression_welltyped} we know that $\Lambda;\Delta\vddash \<h, env_S, e\> : L \triangleright \Lambda''; \Delta''' \cdot (o, [x \mapsto C\<t'\>[(\<l_i : u_i\>_{l_i \in L})^{\vv{E}}]]) $.
    
    Then from our induction hypothesis we have
    
    \[
    \exists \Lambda^{(3)};\Delta^{(3)} . \ \Lambda^{(3)};\Delta^{(3)} \vddash \<h', env_S', e'\> : L \triangleright \Lambda^S;\Delta''' \cdot (o, [x \mapsto C\<t'\>[(\<l_i : u_i\>_{l_i \in L})^{\vv{E}}]])
    \]
    
    where $\Lambda'' \lambdaeq{o} \Lambda^S$
    
    From \rt{WTC} we have:
    \begin{itemize}
        \item $\Lambda^{(3)} \vddash h'$
        \item $envT_S^{(3)} \vdash^{h'}_{e'} envT_S'$
        \item $envT_O^{(3)} \vhpdash e'$
        \item $\Lambda^{(3)};\Delta^{(3)} \vddash e' : L \triangleright \Lambda^S;\Delta''' \cdot (o, [x \mapsto C\<t'\>[(\<l_i : u_i\>_{l_i \in L})^{\vv{E}}]])$
        \item $\Lambda^{(3)} \vhpdash \vv{D}$
    \end{itemize}
    
    We can now show $\Lambda^{(3)};\Delta^{(3)} \vddash \<h', env_S', \switchp{e'}\> : t \triangleright \Lambda^{S'};\Delta'$ by showing the premises of \rt{WTC}.
    
    \begin{itemize}
        \item $\Lambda^{(3)} \vddash h'$ follows from assumption
        \item $envT_S^{(3)} \vhpdash_{\switchp{e'}} env_S'$ follows from \rt{WTP-Sw}
        \item $envT_O^{(3)} \vhpdash \switchp{e'}$ is trivial because $\objects{\switchp{e'}}=\objects{e'}$
        \item $\Lambda^{(3)}\vhpdash \vv{D}$ follows from assumption
    \end{itemize}
    
    We only need to show $\Lambda^{(3)};\Delta^{(3)} \vddash \switchp{e'} : t \triangleright \Lambda^{S'};\Delta'$.
    
    We now have that $\switchp{e'}$ is a well formed, we have that $\sf{returns}(e_i) = 0$. Hence From Lemma \ref{lemma:consistency} we have that $\Lambda^{S'} \lambdaeq{o} \Lambda'$
     
    Hence all premises of \rt{TSwP} are satisfied, and we can conclude 
    
    $\Lambda^{(3)};\Delta^{(3)} \vddash \<h', env_S', \switchp{e'}\> : t \triangleright \Lambda^{S'};\Delta'$.
    \end{subproof}
    
\end{enumerate}
\end{proof}
\subsection{Error freedom proof}
\label{app:errorfreedom}
\tagthm{37}

The second part of the safety theorem is error freedom. It tells us that a well-typed configuration does not have any of the runtime-errors we want to catch, that is, protocol deviation and null-dereferencing errors. The errors are described in the $\err$ relation, which shows how configurations exhibiting the unwanted behaviour must look. We then show that these configurations cannot be well-typed.

\begin{lemma}{(Error freedom).}
\label{thm:errorfreedom_app}
If $\exists \Lambda,\Delta.\ \Lambda;\Delta\vddash \<h, env_S, e\> : t \triangleright \Lambda';\Delta'$ then $\<h, env_S, e\>\noterr$
\end{lemma}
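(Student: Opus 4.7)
The plan is to proceed by induction on the derivation of $\<h, env_S, e\> \err$, case-analysing the rule used to conclude the error. For each base case I would derive a contradiction from the assumption that some $\Lambda, \Delta$ type the configuration, by unpacking \rt{WTC} and then identifying which expression-typing rule \emph{must} have been used for $e$; the error rule will conflict with one of its premises.

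For the base cases, the arguments are as follows. In \rt{NullCall-1}, \rt{WTC} gives $\Lambda \vddash h$, so by \rt{WTH} $\Lambda(o).envT_F(f) = \sf{getType}(h(o).f, h) = \sf{getType}(\sf{null}, h) = \bot$. The only rule that can type $f.m(v)$ is \rt{TCallF}, which in turn forces $\Lambda(o).f$ to be of the form $C\<t''\>[\U]$ with $\U \trans[m] \W$; this contradicts $\Lambda(o).f = \bot$. Case \rt{NullCall-2} is symmetric, using \rt{WTP-Base} to transfer the null binding from $env_S$ to $envT_S$ and then conflicting with \rt{TCallP}. For \rt{MthdNotAv-1}, $\Lambda \vddash h$ again tells us $\Lambda(o).f = \sf{getType}(o', h) = C\<t''\>[h(o').\sf{usage}]$, and \rt{TCallF} demands a transition $h(o').\sf{usage} \trans[m] \W$, directly contradicting $h(o').\sf{usage} \nottrans[m]$; \rt{MthdNotAv-2} is handled identically via \rt{WTP-Base} and \rt{TCallP}. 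For \rt{FldErr}, the condition $f \not\in \dom{env_F}$ together with \rt{WTH} (which equates $h(o).\sf{fields}$, $\Lambda(o).\sf{fields}$ and $C\<t\>.\sf{fields}_{\vv{D}}$) contradicts any application of \rt{TLinFld} or \rt{TNoLFld}, since these require $f \in \dom{\Lambda(o).envT_F}$.

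For the inductive cases \rt{IfCErr}, \rt{FldCErr}, \rt{CallCErr}, \rt{RetCErr}, \rt{SeqCErr}, \rt{SwCErr}, I would invoke Lemma \ref{lemma:subexpression_welltyped} (well-typedness of sub-configurations) to extract a typing derivation for the inner configuration $\<h, env_S, e''\>$, then apply the induction hypothesis to conclude $\<h, env_S, e''\> \noterr$, contradicting the premise of the composite error rule. The \rt{RetCErr} case needs a small adaptation because Lemma \ref{lemma:subexpression_welltyped} as stated does not cover $\return{e}$; here I would either extend that lemma or appeal directly to \rt{TRet}, which gives exactly the typing judgement needed for the inner expression under $\Lambda$ and $envT_O \cdot envT_S$ (the stack is merely shortened).

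The main obstacle I anticipate is the \rt{MthdNotAv} family: it requires carefully connecting the run-time object usage $h(o').\sf{usage}$ to the usage component of the type assigned by $\Lambda$. This connection is mediated by \rt{WTH} and $\sf{getType}$, and only works because our typing of references (fields or parameters) records the \emph{exact} usage of the target object rather than an approximation. Once this correspondence is made explicit, the mismatch between the missing transition in the heap and the required transition in \rt{TCallF}/\rt{TCallP} closes each case.
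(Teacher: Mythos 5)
Your proposal is correct and follows essentially the same route as the paper's proof: induction on the derivation of $\err$, deriving a contradiction in each base case by playing the premises of \rt{WTC} (in particular \rt{WTH} and \rt{WTP-Base} via $\sf{getType}$) against \rt{TCallF}/\rt{TCallP}/\rt{TLinFld}/\rt{TNoLFld}, and discharging the composite cases with Lemma \ref{lemma:subexpression_welltyped} plus the induction hypothesis. Your observation about \rt{RetCErr} also matches the paper, which likewise notes that the sub-configuration lemma does not cover $\return{e}$ and appeals to a separate, analogous lemma for return configurations.
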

\begin{proof}
Induction on the structure of $\err$. For each error predicate, show that the configuration cannot be well typed. 

\begin{enumerate}[ncases]
\item[\rt{NullCall-1}]

\begin{subproof}
Assume that the configuration is well typed, that is: $\Lambda;\Delta \vddash \<h, (o, s) \cdot env_S, f.m(v)\> : t \triangleright \Lambda';\Delta'$ and that $\<h, (o, s) \cdot env_S, f.m(v)\>\err$. From \rt{WTC} we know that $\Lambda;\Delta \vddash f.m(v) : t \triangleright \Lambda';\Delta'$. Since no typing rules for values changes $\Lambda$, we know from \rt{TCallF} that  $\Lambda=\Lambda\{o.f \mapsto C\<t'\>[\U]\}$. But from \rt{WTC} we must have $\Lambda \vddash h$. From \rt{WTH} we have that $\Lambda(o).envT_F(f)=\sf{getType}(h(o).f, h)$. This however, is a contradiction since $\Lambda(o).envT_F(f)=C\<t\>[\U]$ but $\sf{getType}(h(o).f, h) = \bot$. Hence the configuration cannot be well typed.
\end{subproof}

\item[\rt{NullCall-2}]

\begin{subproof}
Assume that the configuration is well typed, that is: $\Lambda;envT_O \cdot envT_S \cdot (o, S) \vddash \<h, (o, [x \mapsto \sf{null}]) \cdot env_S, x.m(v)\> : t \triangleright \Lambda';\Delta'$ and that $\<h, (o, [x \mapsto \sf{null}]) \cdot env_S, f.m(v)\>\err$. From \rt{WTC} we know that $\Lambda;envT_O \cdot envT_S \cdot (o, S) \vddash x.m(v) : t \triangleright \Lambda';\Delta'$. Since no typing rules for values changes the parameter typing stack, we have that $S = [x \mapsto C\<t'\>[\U]]$. But from \rt{WTC} we know that $envT_S \vdash^{h}_{x.m(v)} env_S$, hence we know that $|env_S| = 1$ and that $C\<t'\>[\U] = \sf{getType}(\sf{null}, h)$. This however is a contradiction, since we know that $\sf{getType}(\sf{null}, h) = \bot$, hence the configuration cannot be well typed.
\end{subproof}

\item[\rt{MthdNotAv-1}]

\begin{subproof}
Assume that the configuration is well typed, that is: $\Lambda;\Delta \vddash \<h, (o, s) \cdot env_S, f.m(v)\> : t \triangleright \Lambda';\Delta'$ and that $\<h, (o, s) \cdot env_S, f.m(v)\>\err$. From \rt{WTC} we know that $\Lambda;\Delta \vddash f.m(v) : t \triangleright \Lambda';\Delta'$. Since no typing rules for values changes $\Lambda$, we know from \rt{TCallF} that  $\Lambda=\Lambda\{o.f \mapsto C\<t'\>[\U]\}$ and that $\U \trans[m] \W$. But since $\Lambda\vddash h$, we must have that $\sf{getType}(h(o).f, h) = \Lambda(o).envT_F(f)$. But then we must have that $h(h(o).f).usage \trans[m] \W$, hence we have a contradiction, and the configuration cannot be well typed.
\end{subproof}

\item[\rt{MthdNotAv-2}]

\begin{subproof}
Assume that the configuration is well typed, that is: $\Lambda;envT_O \cdot envT_S \cdot (o, S) \vddash \<h, (o, [x \mapsto o']) \cdot env_S, x.m(v)\> : t \triangleright \Lambda';\Delta'$ and that $\<h, (o, [x \mapsto o']) \cdot env_S, f.m(v)\>\err$. From \rt{WTC} we know that $\Lambda;envT_O \cdot envT_S \cdot (o, S) \vddash x.m(v) : t \triangleright \Lambda';\Delta'$. Since no typing rules for values changes $envT_S$, we know from \rt{TCallP} that $S = [x \mapsto C\<t'\>[\U]]$ and that $\U \trans[m] \W$. But from \rt{WTP} we know that $|env_S|=1$ and that $\sf{getType}(o', h) = C\<t'\>[\U]$. This however is a contradiction, since we know that $\U \trans[m] \W$ but $h(o').\sf{usage}\nottrans[m]$, hence the configuration cannot be well typed.
\end{subproof}

\item[\rt{FldErr}]

\begin{subproof}
Assume that the configuration is well typed, that is: $\Lambda;\Delta \cdot (o, S) \vddash \<h, (o, s) \cdot env_S, f\> : t \triangleright \Lambda';\Delta'$ and that $\<h, (o, s) \cdot env_S, f\>\err$. From \rt{WTC} we know that $\Lambda;\Delta \vddash f : t \triangleright \Lambda';\Delta'$. This could have been concluded using either \rt{TLinFld} or \rt{TNoLFld}. Common for both of these cases is that $\Lambda(o).f$ is defined i.e. that the field $f$ exists on the object $o$. From \rt{WTH} we have that $h(o).\sf{fields} = \Lambda(o).\sf{fields}$. But we know that $f \in \Lambda(o).\sf{fields}$ but $f \not\in \dom{env_F}$, hence we have a contradiction and the configuration cannot be well typed. 
\end{subproof}

\item[\rt{IfCErr}, \rt{FldCErr}, \rt{SeqCErr}, \rt{CallCErr}, \rt{SwCErr}]

\begin{subproof}
Since we know from Lemma \ref{lemma:subexpression_welltyped} that $\Lambda;\Delta \vddash \<h, env_S, e\> : t \triangleright \Lambda';\Delta'$ , by our induction hypothesis we have $\<h, env_S, e\>\noterr$, which is a contradiction with the premise of rule used to conclude $c\err$, hence the configuration $c$ cannot be well typed.

\end{subproof}

\item[\rt{RetCErr}]

\begin{subproof}
Assume that $\Lambda;\Delta \vddash \<h, env_S \cdot (o, s), \return{e}\> : t \triangleright \Lambda';\Delta'$, then from \rt{WTC} and \rt{TRet} we have that $\Delta = envT_O \cdot envT_S \cdot (o', S)$. But then from Lemma 36 we know that $\Lambda;envT_O \cdot envT_S\vddash \<h, env_S, e\> : t' \triangleright \Lambda'';\Delta''$. But again this contradicts our induction hypothesis that $\<h, env_S, e\>\noterr$, since we have from the premise of \rt{RetCErr} that $\<h, env_S, e\>\err$. Hence our original configuration cannot be well typed.
\vspace{-0.5cm}
\end{subproof}
\end{enumerate}
\end{proof}



\end{document}